  \providecommand\BibTeX{{%
    \normalfont B\kern-0.5em{\scshape i\kern-0.25em b}\kern-0.8em\TeX}}}
\crefname{theorem}{Thm.}{Thms.}
\crefname{lemma}{Lem.}{Lemmas}
\crefname{corollary}{Cor.}{Cors.}
\crefname{figure}{Fig.}{Figs.}
\crefname{definition}{Defn.}{Defns.}
\crefname{table}{Tab.}{Tabs.}
\crefname{appendix}{Appendix}{Appendices}
\crefname{example}{Ex.}{Exs.}
\crefname{item}{item}{items}
\crefname{footnote}{footnote}{footnotes}
\crefname{observation}{Obs.}{Obs.}
\crefname{remark}{Remark}{Remarks}
\crefname{proposition}{Prop.}{Props.}
\crefname{equation}{Eqn.}{Eqns.}
\crefname{counterexample}{Counterexample}{Counterexamples}
\crefname{property}{Property}{Properties}
\crefname{algorithm}{Algorithm}{Algorithms}
\newcommand{\from}{:=}
\def\vDash{\vdash}
\def\Dashv{\dashv}
\def\VWt#1#2#3#4{\{#1\}#2 \from #3\{#4\}}
\def\GWt#1#2#3#4{\{#1\}#2 \from #3\{#4\}}
\def\kwd#1{\textbf{\texttt{#1}}}
\setlist{leftmargin=*}
\newenvironment{DIFnomarkup}{}{}
\begin{document}

\title{Automatic Linear Resource Bound Analysis for Rust via Prophecy Potentials}

\author{Qihao Lian}
\affiliation{%
  \department{Key Laboratory of High Confidence Software Technologies (Peking University), Ministry of Education; School of Computer Science}
  \institution{Peking University}
  \country{China}
}
\email{mepy@stu.pku.edu.cn}
\author{Di Wang}
\affiliation{%
  \department{Key Laboratory of High Confidence Software Technologies (Peking University), Ministry of Education; School of Computer Science}
  \institution{Peking University}
  \country{China}
}
\email{wangdi95@pku.edu.cn}

\begin{abstract}

Rust has become a popular system programming language that strikes a balance between memory safety and performance.
Rust's type system ensures the safety of low-level memory controls; however, a well-typed Rust program is not guaranteed to enjoy high performance.
This article studies static analysis for resource consumption of Rust programs, aiming at understanding the performance of Rust programs.
Although there have been tons of studies on static resource analysis, exploiting Rust's memory safety---especially the borrow mechanisms and their properties---to aid resource-bound analysis, remains unexplored. 

This article presents \rarust{}, a type-based linear resource-bound analysis for well-typed Rust programs.
\rarust{} follows the methodology of automatic amortized resource analysis (AARA) to build a resource-aware type system.
To support Rust's borrow mechanisms, including shared and mutable borrows, \rarust{} introduces shared and novel prophecy potentials to reason about borrows compositionally.
To prove the soundness of \rarust{}, this article proposes Resource-Aware Borrow Calculus (RABC) as a variant of recently proposed Low-Level Borrow Calculus (LLBC).
The experimental evaluation of a prototype implementation of \rarust{} demonstrates that \rarust{} is capable of inferring symbolic linear resource bounds for Rust programs featuring shared and mutable borrows, reborrows, heap-allocated data structures, loops, and recursion. 

    

\end{abstract}

\begin{CCSXML}
<ccs2012>
   <concept>
       <concept_id>10003752.10010124.10010138.10010143</concept_id>
       <concept_desc>Theory of computation~Program analysis</concept_desc>
       <concept_significance>500</concept_significance>
       </concept>
   <concept>
       <concept_id>10003752.10010124.10010131.10010134</concept_id>
       <concept_desc>Theory of computation~Operational semantics</concept_desc>
       <concept_significance>500</concept_significance>
       </concept>
 </ccs2012>
\end{CCSXML}



\maketitle

\section{Introduction} \label{section:introduction}


Rust~\cite{sigada:MK14} has become a popular programming language that
aims to bridge the gap between fine-grained memory control and high-level memory safety,
especially in the context of system programming.
Rust introduces a complex type system that features linear types and borrow mechanisms,
enabling system engineers to write high-performance memory-manipulating code without
needing garbage collection or compromising memory safety.
The popularity of Rust indicates that system engineers still care about \emph{resource
consumption}---such as time, memory, and energy usage---of their programs, in order
to achieve high or predictable performance.
However, using Rust does \emph{not} directly mean achieving high performance;
in fact, there exists a book~\cite{misc:RPB20} that collects
techniques to improve the performance of Rust programs, including general and
Rust-specific techniques. Additionally, predicting the performance of a Rust program is often difficult to do manually.
Therefore, understanding the resource consumption---as a general topic in software
engineering---continues to be a crucial issue in Rust programming.



In this paper, we study how to \emph{automatically} and \emph{statically} reason
about the resource consumption of Rust programs via a type system.
In particular, we want to understand how
the reasoning process could benefit from Rust's advanced type system.
Recent years have seen many studies on (i) static analysis for resource consumption,
and (ii) static analysis of Rust programs.
We highlight some mostly-related \emph{type-based} approaches below; other related work will be 
discussed in \cref{sec:related}.


Automatic amortized resource analysis (AARA)---initially proposed by~\citet{AARA-Linear}---has
been a state-of-the-art framework for statically inferring symbolic resource bounds
of various kinds of programs~\cite{JMSCS:HJ22}.
The inferred bounds are \emph{symbolic} in the sense that they are interpreted as functions
of a program's input and output.
Despite the fact that most AARA approaches are focused on analyzing pure functional programs,
there are multiple exceptions related to memory control:
\citet{ESOP:Atkey10} studied the integration of AARA and separation logic,
\citet{PLDI:CHS15} developed C4B, which can analyze C programs with arrays,
and \citet{FSCD:LH17} proposed a method to analyze arrays and references in ML programs.
However, it is unclear whether any of those approaches can leverage Rust's type system---especially the borrow mechanisms---during the analysis.


On the other hand, some recent studies build advanced type systems upon Rust's 
type system.
Flux~\cite{Flux} is a liquid-style refinement type system for Rust, which aims to
automatically verify the functional correctness of safe Rust programs.
Flux allows programmers to annotate types (including reference types) with decidable
logical predicates and then exploits Rust's borrow mechanisms to separate functional
verification from low-level memory reasoning.
RefinedRust~\cite{RefinedRust} is another refinement type system for Rust, which aims 
to carry out foundational reasoning and produce Coq proofs for the functional
correctness of both safe and unsafe Rust programs in a semi-automated way.
However, those approaches are focused on functional correctness rather than resource
consumption of Rust programs.


In this paper, we propose \rarust{}, a new AARA approach that leverages Rust's 
borrow mechanisms and a prototype implementation that automatically infers linear 
resource bounds for safe Rust programs.
Rust's type system enforces \emph{ownership} principles around references:
one can \emph{borrow} a reference with ownership and later return it back,
and the type system checks the soundness of such borrows via a notion of \emph{lifetimes}.
Meanwhile, AARA enforces principles of \emph{amortized analysis}~\cite{JADM:Tarjan85}
around data structures: type annotations specify \emph{potential functions},
such that a value can \emph{own} some potentials to pay for resource consumption
and one can \emph{transfer} potentials among values,
and the type system checks the soundness of the potential-based reasoning.
Intuitively, our approach integrates the ownership principles of memory locations
and potentials.
Below we sketch three main technical challenges to obtaining such an integration and our solutions.



Firstly, Rust's type system is rather complex already, and its borrow mechanisms are being
updated actively, so we desire the resource analysis to be as orthogonal to Rust's own
borrow checking as possible.
Rust now features shared borrows, mutable borrows, reborrows,
and non-lexical lifetimes that allow complex uses of borrows.
Modifying Rust's type checker (especially the borrow checker) would be arduous and 
thus unacceptable.
To tackle this challenge, we propose a lightweight design that performs resource analysis
on \emph{well-borrowed} and \emph{well-typed} Rust programs.
\rarust{} relies on Rust's borrow checker to ensure the analyzed program to be well-borrowed
and then exploits Rust's borrow mechanism to associate data structures and references with
potentials.
We adapt the recently proposed Low-Level Borrow Calculus (LLBC)~\cite{Aeneas},
which is based on Rust's MIR, formulate a resource-aware version called Resource-Aware Borrow Calculus (RABC), and devise an AARA type system on RABC.


Secondly, Rust's borrow mechanisms, especially mutable borrows, pose a non-trivial problem for AARA in tracking potentials along borrowing.
For example, one can mutably borrow a value $v$ from a location $\ell$ with $p(v)$ units 
of potentials, update the original value to $v'$, consume or store some potentials, and end the 
borrow to return $v'$ back to the location $\ell$ with $p'(v')$ units of potentials.
Note that the pre- and post-potential functions $p$ and $p'$ can be different; however,
in a typical AARA type system, types are \emph{invariant}, i.e., once a location $\ell$
has a resource-annotated type that specifies a potential function, the type would not change.
To tackle this challenge, we propose \emph{prophecy potentials}, which are inspired from
prophecy variables~\cite{ProphecyInSepLogic} and RefinedRust's borrow names~\cite{RefinedRust}.
Intuitively, prophecy variables provide a compositional reasoning mechanism for \emph{future}
states, as they stand for the value, returned to a location after a mutable borrow ends.
When a mutable borrow of a location $\ell$ with the potential function $p$ starts, \rarust{} generates a prophecy potential function $q$ for $\ell$ and tracks both $p$ and $q$ in
the mutable-borrow reference type.
When a mutable borrow with potential $p'$ and prophecy $q$ ends, \rarust{} enforces $p'$ to denote at least the same amount of potentials as $q$ denotes.


Thirdly, similarly to other static analyses of memory-manipulating programs, we desire
the resource analysis to deal with \emph{aliasing} soundly and precisely.
Rust's borrow checker already ensures some good properties about aliasing, e.g.,
aliasing and mutation cannot happen simultaneously, so it seems sound to always perform
precise \emph{strong updates} for mutation.
However, the type system may face a situation where a mutable reference points to
multiple locations, e.g., a conditional expression whose two branches borrow from two
difference locations.
In this case, the type system should not perform strong updates to the locations,
especially when the program stores some potentials via the mutable reference: it is unsound to increment both locations with the potentials.
To tackle this challenge, we propose a \emph{lattice} of resource-annotated types (including reference types) based on a subtyping relation.
\rarust{} then uses the lattice to compute the greatest lower bound of types for
a location when different control-flow paths join at a program location.



To summarize, we propose \rarust{}, which consists of a resource-aware calculus RABC for well-borrowed Rust programs and an AARA type system with novel prophecy potentials.
We formally prove the soundness of the type system with respect to RABC
and
implement a prototype of \textsc{RaRust} for inferring linear bounds for safe Rust programs.
We collect a suite of Rust programs to evaluate the effectiveness of the prototype, including non-trivial manipulations of mutable lists and trees.
Evaluation results show that the prototype successfully infers symbolic resource bounds. 

\paragraph{Contributions}
The paper makes the following three contributions:
\begin{itemize}
    \item {We propose \rarust{}, which aims at automatic amortized resource analysis for Rust programs. \rarust{} features a lightweight design that works on well-borrowed programs, novel prophecy potentials to handle mutable borrows, and a type lattice to deal with aliasing.}
    \item {We present a formulation of \rarust{}, including Resource-Aware Borrow Calculus (RABC) and an AARA type system. We also formally prove the soundness of the type system.}
    \item {We implement a prototype of \rarust{}, evaluate it on a suite of non-trivial Rust programs, and show that it can infer symbolic resource bounds effectively.}
\end{itemize}


\paragraph{Limitations}
Despite the contributions,
it is worth noting that \rarust{} currently still has some limitations:
(i) \rarust{} does not support unsafe code, which is a key feature of Rust,
(ii) \rarust{} does not support generic types, higher-order functions (i.e., closures), or trait objects,
(iii) \rarust{} does not support non-linear resource bounds.
We will discuss those limitations and sketch pathways for overcoming
them in \cref{sec:discussion}.


\paragraph{Organization}
The paper is organized as follows.
\cref{sec:background} briefly reviews Rust's borrow mechanisms and automatic amortized resource analysis (AARA).
\cref{sec:overview} uses a few running examples to explain the design and key ideas of \rarust{}.
\cref{sec:calculus} formulates Resource-Aware Borrow Calculus (RABC) and the resource-aware dynamic semantics.
\cref{sec:inference} presents an AARA type system for Rust's borrow mechanisms and a type-inference algorithm.
\cref{sec:soundness} sketches the soundness proof of the AARA type system with respect to RABC.
\cref{sec:impl} describes the prototype implementation and experimental evaluation.
\cref{sec:discussion,sec:related} discuss limitations and related work, respectively. 
\cref{sec:conclusion} concludes.

\section{Background}
\label{sec:background}

In this section, we present a brief review of Rust's borrow mechanisms in \cref{sec:background:rust} and the key concepts for automatic amortized resource analysis (AARA) in \cref{sec:overview:AARA}.

\begin{figure}[t]
\footnotesize
\centering
\hrule
\begin{subfigure}[b]{0.49\textwidth}
\begin{lstlisting}[language=Rust, style=colouredRust]
let mut n = 0;
let p1 = &n; // OK, shared borrow from n
// --- p1 lifetime begins ---
let p2 = &n; // OK, shared borrow from n
// --- p2 lifetime begins ---
// --- p2 lifetime ends ---
// --- p1 lifetime ends ---
\end{lstlisting}
\caption{Shared \& Shared}\label{fig:shared-shared}
\end{subfigure}
\begin{subfigure}[b]{0.49\textwidth}
\begin{lstlisting}[language=Rust, style=colouredRust]
let mut n = 0;
let p1 = &n; // OK, shared borrow from n
// --- p1 lifetime begins ---
*p1 = 42; // ERROR, p1 is not mutable
/* drop(p1) */
// --- p1 lifetime ends ---
let p2 = &mut n; // OK, mutable borrow from n
*p2 = 42; // OK, p2 is mutable
\end{lstlisting}
\caption{Shared \& Mutable}\label{fig:shared-mutable}
\end{subfigure}
\hrule
\begin{subfigure}[b]{0.49\textwidth}
\begin{lstlisting}[language=Rust, style=colouredRust]
let mut n = 0;
let p1 = &mut n; // OK, mutable borrow from n
// --- p1 lifetime begins ---
let p2 = &mut n; // ERROR
// n is already mutable-borrowed
*p1 = 42;
// --- p1 lifetime ends ---    
\end{lstlisting}
\caption{Mutable \& Mutable}\label{fig:mutable-mutable}
\end{subfigure}
\begin{subfigure}[b]{0.49\textwidth}
\begin{lstlisting}[language=Rust, style=colouredRust]
let mut n = 0;
let p1 = &mut n; // OK, mutable borrow from n
// --- p1 lifetime begins ---
let p2 = &n; // ERROR
// n is already mutable-borrowed
let m = do_something_with(p2);
*p1 = 42;
// --- p1 lifetime ends ---
\end{lstlisting}
\caption{Mutable \& Shared}\label{fig:mutable-shared}
\end{subfigure}
\hrule
\begin{subfigure}[b]{0.53\textwidth}
\begin{lstlisting}[language=Rust, style=colouredRust]
fn consume(p: &mut i32){ *p = 1; }
fn reborrow() {
  let mut n = 0;
  let p1 = &mut n; // OK, mutable borrow from n
  // --- p1 lifetime begins ---
  let p2 = &mut *p1; // OK, mutable reborrow from p1
  // --- p2 lifetime begins , p1 NOT accessible ---
  consume(p2);
  // --- p2 lifetime ends   , p1 accessible ---
  *p1 = 42;
  // --- p1 lifetime ends ---
}
\end{lstlisting}
\caption{Reborrow}\label{fig:reborrow}
\end{subfigure}
\begin{subfigure}[b]{0.46\textwidth}
\begin{lstlisting}[language=Rust, style=colouredRust]
enum List { Nil, Cons(i32, Box<List>) }
fn iter(l: &List) {
  match l {
    Nil => {
      tick(1); // consume 1 unit of resource
    }, 
    Cons(hd, tl) => { 
      tick(2); // consume 2 units of resource
      iter(&**tl); 
    }, 
  }
}
\end{lstlisting}
\caption{List Iteration}\label{fig:list-iteration}
\end{subfigure}
\hrule
\caption{Examples to Demonstrate Rust's Borrow Mechanisms and Automatic Amortized Resource Analysis}
\label{fig:borrows}
\end{figure}

\subsection{Rust's Borrow Mechanisms}
\label{sec:background:rust}

Rust's type system enforces ownership principles, e.g., declaring a variable \verb|x|
of some type means that \verb|x| is the \emph{exclusive} owner of the memory locations indicated by the type.
Similarly to linear or affine typing~\cite{kn:Walker02}, such exclusive ownership requires a \emph{move} semantics, which could soon be too restrictive in the common call-by-value paradigm.
Fortunately, as C has pointers and C++ has references, Rust introduces \emph{borrow} mechanisms around reference types to allow temporary accesses to the memory locations owned by others.
Rust employs a complex borrow checker to ensure that the borrows in a program work in a memory-safe way.


Rust provides two main kinds of borrows: one is called \textbf{shared} or immutable borrows, the other exclusive or \textbf{mutable} borrows.
We demonstrate those borrow mechanisms using the example programs listed in \cref{fig:borrows}.
Bindings \verb|let| and \verb|let mut| introduce immutable and mutable variables, respectively, and the variables own the memory that stores the data.
\cref{fig:shared-shared} shows that one can use \verb|&n| to create a shared borrow from \verb|n|, and there can be multiple shared borrow from the same variable simultaneously.
\cref{fig:shared-mutable} shows that one can use \verb|&mut n| to create a mutable borrow from \verb|n|, and one cannot mutate the value of \verb|n| via shared borrows.
\cref{fig:mutable-mutable} shows that there can be at most one mutable borrow from the same variable at the same time.
\cref{fig:mutable-shared} shows that if there exists a mutable borrow, there cannot be any shared borrows from the same variable.
Rust's borrow checker introduces a notion of \emph{lifetimes} to track and check borrows.
In \cref{fig:borrows}, we annotate comments to indicate the lifetimes of certain borrows, and Rust's borrow checker performs lifetime inference to automatically derive the lifetimes.\footnote{The demonstration here is simplified for brevity and it is not consistent with the actual Rust's borrow checker.}
Note that in \cref{fig:shared-mutable}, Rust's borrow checker would implicitly insert \verb|drop(p1)| to end the lifetime of the shared reference \verb|p1|, enabling the next \verb|let| binding to create a mutable borrow \verb|p2| from the same variable.

The exclusiveness of mutable borrows could also be too strict: it is not uncommon to lend a mutable reference to a function call and then continue to use the mutable reference after the function returns.
Rust features a mechanism named \textbf{reborrowing}, which is done by creating borrows from dereferences, e.g., \verb|&mut *p1| as shown in \cref{fig:reborrow}.
In this example, \verb|p2| creates a mutable reborrow from \verb|p1|, and \verb|p1| becomes temporarily inaccessible during \verb|p2|'s lifetime because
\verb|p2| is mutable.
This does not break the exclusiveness of mutable borrows, because there is still at most one mutable reference can mutate the data.
Note that creating shared reborrows from dereferences is permitted, whereas
creating mutable reborrows from dereferences of shared references is not.



\subsection{Automatic Amortized Resource Analysis}
\label{sec:overview:AARA}

Automatic amortized resource analysis (AARA) is initially proposed by \citet{AARA-Linear} as a type system that derives linear bounds on the heap-space usage of functional programs.
In the past two decades, AARA has been extended with many features from different perspectives, and becomes a systematic methodology for resource analysis~\cite{JMSCS:HJ22}.
Although most of AARA type systems are presented with pure functional languages, e.g., Resource-aware ML~\cite{RaML},
in this section, we review the key concepts for AARA with a Rust program that uses  shared borrows.
As demonstrated in \cref{sec:background:rust}, Rust permits multiple simultaneous shared borrows from the same variable but disallows mutation via shared borrows; such behavior is similar to pure functional programming.


\cref{fig:list-iteration} implements list iteration via shared references.
A value of the \verb|Box| type owns a heap-allocated piece of data, adhering to Rust's ownership principles.
The \verb|Box| type provides a convenient way to represent heap-allocated data structures, such as lists and trees.
We use the statement \verb|tick(q)|  to indicate \verb|q| units of resource consumption.\footnote{In our current formalization and implementation, we use explicit $\kwd{tick}(q)$ as the cost model. It is also possible to adopt a parametric cost model that maps an expression or a statement to its resource consumption~\cite{RaML}.}
Because the \verb|iter| function consumes two units of resource per \verb|Cons| node and one unit for the \verb|Nil| node, the total resource consumption of applying \verb|iter| to a list of length $n$ is $2n+1$.
To automatically infer such a symbolic resource bound, AARA relies on the \emph{potential method} of amortized complexity analysis~\cite{JADM:Tarjan85},
i.e., AARA relies on the derivation of potential functions that map program states to non-negative numbers, in the sense that the potential of a state is sufficient to pay for the current step's resource consumption and the next state's potential.
AARA introduces resource-annotated types to encode certain kinds of potential functions in type signatures, and then reduces type inference to constraint solving (e.g., linear programming) to automate the bound inference.


To demonstrate the inference process, let us extend the list type
$\kwd{list}$ to a resource-annotated type $\kwd{list}(\alpha)$,
where $\alpha$ is a non-negative number and the type encodes a
linear potential function that assigns $\alpha$ units of potential to
each list element, i.e., the potential for a list of length $n$ is $\alpha \cdot n$.
Readers might realize the inconsistency between the abstract syntax $\kwd{list}$ and the concrete syntax:
\begin{lstlisting}[language=Rust, style=colouredRust, xleftmargin=\leftmargini]
enum List { Nil, Cons(i32, Box<List>) }
\end{lstlisting}
Here, we only annotate $\kwd{list}$ with one symbol $\alpha$ instead of two corresponding to the two constructors \verb|Nil| and \verb|Cons| to simplify the presentation.
Our implementation uses the multiple-symbols version, annotating the list type as follows:
\begin{lstlisting}[language=Rust, style=colouredRust, xleftmargin=\leftmargini, mathescape]
enum List { Nil : $\alpha_{\texttt{Nil}}$, Cons(i32, Box<List>) : $\alpha_{\texttt{Cons}}$ }
\end{lstlisting}
AARA systems can generally analyze different user-defined data types by annotating each constructor with one symbol denoting its potential.
To review AARA, we use the simplified version $\text{list}(\alpha)$, where $\alpha$ is essentially the annotation $\alpha_\texttt{Cons}$ for the \verb|Cons| constructor.

To initiate the analysis of \verb|iter|, we create a signature with two numeric unknowns $\alpha$ and $\delta$:
\[
\verb|iter| : \kwd{fn}(\verb|l|:\&\kwd{list}(\alpha)) \to () | \delta ,
\]
where $\delta$ denotes additional potential needed by \verb|iter| that is not associated with the list's potential function.
We then analyze the function body of \verb|iter| to collects constraints on $\alpha$ and $\delta$.
The pattern match on \verb|l| peeks the list structure; thus, according
to the definition of $\kwd{list}(\alpha)$, the \verb|Cons| branch would obtain $\alpha$ units of potential.
We proceed to analyze the two branches.
\begin{itemize}
    \item In the \verb|Nil| branch, we have $\delta$ units of potential and the statement \verb|tick(1)| consumes one unit of resource.
    According to the potential method, we collect one constraint $\delta \ge 1$.

    \item In the \verb|Cons| branch, we have $\delta + \alpha$ units of potential, and the variables \verb|hd| and \verb|tl| have type $\&\kwd{i32}$ and $\&\kwd{box}~\kwd{list}(\alpha)$, respectively.
    The statement \verb|tick(2)| consumes two units of resource, followed by a recursive call with \verb|&**tl|.
    We now use the signature of \verb|iter| to analyze this recursive call; that is, it requires \verb|&**tl| to have type $\&\kwd{list}(\alpha)$ as well as $\delta$ units of additional potential.
    The first requirement is fulfilled immediately as \verb|tl| has type $\&\kwd{box}~\kwd{list}(\alpha)$, and the second requirement can be satisfied if $\delta + \alpha \ge 2 + \delta$.
\end{itemize}
Noting that the constraints $\delta \ge 1, \delta + \alpha \ge 2 + \delta$ are linear constraints, we can employ a linear programming (LP) solver to automatically find instances of $\alpha$ and $\delta$.
For example, with a proper objective, an LP solver would find the assignments $\alpha = 2, \delta=1$, which yield the following signature:
\[
\verb|iter| : \kwd{fn}(\verb|l|:\&\kwd{list}(2)) \to () | 1 ,
\]
which coincides with the bound $2n+1$ where $n$ is the length of the list \verb|l|.

\section{Overview}
\label{sec:overview}

In this section, we use a few running examples to explain
the core ideas behind \rarust{} that integrates AARA with Rust's
borrow mechanisms.
\cref{sec:overview:Shared} shows how \rarust{} deals with \textbf{shared} borrows.
\cref{sec:overview:Mutable} shows how \rarust{} deals with \textbf{mutable} borrows.
\cref{sec:overview:Lattice} shows how \rarust{} deals with the aliasing problem.
Recall that we propose a lightweight design for \rarust{}, which assumes the programs already pass Rust's borrow checking so that \rarust{} works directly on \emph{well-borrowed} and \emph{well-typed} Rust programs.
Concretely, \rarust{} assumes that all borrows in the analyzed program have known lifetimes (the span they live) and satisfy the following properties:
\begin{itemize}
    \item multiple shared but no mutable borrows from the same piece of data are live at the same time; or
    \item no shared but at most one mutable borrow from the same piece of data are live at the same time.
\end{itemize}
We will show how \rarust{} exploits those properties to carry out AARA for Rust programs.



\begin{figure}[t]
\centering
\footnotesize
\hrule
\begin{subfigure}[b]{0.52\textwidth}
\begin{lstlisting}[language=Rust, style=colouredRust]
fn iter_twice(l: &List) {
  // l : &list(4)
  iter(&*l); // share 4 as 2 + 2, &*l : &list(2)
  // l : &list(2)
  iter(&*l); // share 2 as 2 + 0, &*l : &list(2)
  // l : &list(0)
}
\end{lstlisting}
\caption{Shared Reborrowing}
\label{fig:ex-sharing}
\end{subfigure}
\begin{subfigure}[b]{0.46\textwidth}
\begin{lstlisting}[language=Rust, style=colouredRust]
fn update(l: &mut List) {
  iter(&*l);
  // l : &mut list(0)
  *l = Cons(3, Box::new(Nil));
  // l : &mut list(4)
  iter(&*l); iter(&*l);
}
\end{lstlisting}
\caption{Mutating A Mutable Borrow}
\label{fig:ex-mut-borrow}
\end{subfigure}
\hrule
\begin{subfigure}[b]{0.48\textwidth}
\begin{lstlisting}[language=Rust, style=colouredRust]
fn prophecy() {
  let mut l = Cons(3, Box::new(Nil));
  // l : list(p)
  let x = &mut l;
  // l : list(q),  x : &mut(list(p ), list(q))
  update(x);    // x : &mut(list(p'), list(q))
  /* drop(x) */ // p' >= q
}
\end{lstlisting}
\caption{Creating \& Dropping A Mutable Borrow}
\label{fig:ex-prophecy}
\end{subfigure}
\begin{subfigure}[b]{0.51\textwidth}
\begin{lstlisting}[language=Rust, style=colouredRust]
fn weak(b: bool, l1: &mut List, l2: &mut List) {
  let l = if b { 
    &mut *l1 // : &mut(list(p1), list(q1))
  } else { 
    &mut *l2 // : &mut(list(p2), list(q2))
  }; // : &mut(list(min(p1,p2)), list(max(q1,q2)))
  update(l);
}
\end{lstlisting}
\caption{Mutable Reborrowing \& Aliasing}
\label{fig:ex-weak}
\end{subfigure}
\hrule
\caption{Examples to Demonstrate How \rarust{} Works}
\label{fig:running-examples}
\end{figure}

\subsection{Dealing with Shared Borrows via Shared Potentials}
\label{sec:overview:Shared}

In \cref{sec:overview:AARA}, we demonstrate the key concepts of AARA via
the Rust program shown in \cref{fig:list-iteration}, which already features shared borrows.
However, this is not the end of the story: multiple shared borrows from the same memory location can exist simultaneously; for example, the function
\verb|iter_twice| shown in \cref{fig:ex-sharing} uses the reborrowing mechanism
to create two more shared borrows by \verb|&*l|.
If we still follow the methodology presented in \cref{sec:overview:AARA},
suppose that the function parameter \verb|l| has type $\&\kwd{list}(\alpha)$,
it would be unsound to type the two shared reborrows \verb|&*l| as $\&\kwd{list}(\alpha)$, because it would double the potentials stored in \verb|l|.



Fortunately, prior research on AARA has proposed a notion of \emph{sharing} to allow multiple uses of a variable in a linear or affine type system~\cite{AARA-Poly-Multivar,AARA-Poly}. 
This is because AARA for functional programs needs shared potentials to pass value by reference.
The idea is to replace a variable $x$ of resource-annotated type $T$
with two fresh variables $x_1,x_2$ of types $T_1,T_2$, such that the potential function denoted by $T$ equals the sum of the potential functions denoted by $T_1,T_2$.
In RaRust, we reuse the sharing mechanism to handle shared borrows and prove it is sound with respect to the safe Rust semantics.
In our setting, this indicates that we can replace a typing context
$x : \kwd{list}(\alpha)$ with $x_1 : \kwd{list}(\alpha_1), x_2:\kwd{list}(\alpha_2)$ such that $\alpha = \alpha_1 + \alpha_2$.

In \rarust{}, we adopt a more imperative design inspired by \emph{remainder contexts}~\cite{kn:Walker02,ICFP:KH21}.
We associate every program point with a typing context, and when a statement performs a shared (re)borrow, we split the potentials into two parts by splitting the resource-annotated type into two types as shown above: one becomes the type
of the shared (re)borrow, and the other is put back into the remainder context, i.e., the typing context after the statement.
For example, in \cref{fig:ex-sharing}, suppose that the function parameter \verb|l| has type $\&\kwd{list}(4)$, the first function call to \verb|iter|
performs a shared reborrow and we split the type $\&\kwd{list}(4)$ to $\&\kwd{list}(2)$ (for the function call) and $\&\kwd{list}(2)$ (for the remainder context).
The second function call also performs a shared reborrow, but this time the typing context indicates that \verb|l| has type $\&\kwd{list}(2)$, so we split it to $\&\kwd{list}(2)$ (for the function call) and $\&\kwd{list}(0)$.
Observing that the function \verb|iter| requires one unit of additional potentials, we derive the following signature for \verb|iter_twice|:
\[
\verb|iter_twice| : \kwd{fn}(\verb|l|:\&\kwd{list}(4)) \to () | 2 .
\]


It is worth noting that Rust's shared borrows are \emph{immutable}.
However, \rarust{}'s analysis \emph{mutates} the resource-annotated types of shared borrows because shared borrows could consume resources when being accessed. 
This is safe unless the value of the borrow is mutated.
For example, a shared borrow \verb|l| points to a list of length 10 that carries 2 units of potentials per element; thus, the total potentials are 20 units.
We then create another shared borrow \verb|&*l| and split the potentials to let \verb|&*l| carry one unit of potentials per element.
Now suppose we mutate the value via the borrow \verb|l| to increment the list length by one.
The type of \verb|&*l| still indicates one unit of potentials per element, thus indicating 11 units of total potentials, but it only has 10 units.
To tackle the problem, \rarust{} exploits Rust's borrow mechanisms to render the reasoning sound:
mutable borrows and shared borrows from the same memory location cannot exist simultaneously.
Thus, if a Rust program mutates the list via a mutable reference \verb|l|, then the previous shared borrow \verb|&*l| must have ended its lifetime. 



\subsection{Dealing with Mutable Borrows via Prophecy Potentials}
\label{sec:overview:Mutable}

It might seem straightforward to support mutable borrows in the approach sketched in \cref{sec:overview:Shared}.
For example, \cref{fig:ex-mut-borrow} implements a function \verb|update| that manipulates a mutable reference \verb|l|.
Suppose that the function parameter \verb|l| has type $\&\kwd{mut}~\kwd{list}(2)$.
For the first function call to \verb|iter| with a shared reborrow \verb|&*l|, we split the type to $\&\kwd{list}(2)$ and $\&\kwd{mut}~\kwd{list}(0)$.
The next assignment statement mutates the list stored in the location referenced by \verb|l|, so we mutate its type in the typing context accordingly.
To obtain enough potential for the remaining two function calls to \verb|iter|, we set the type of \verb|l| to $\&\kwd{mut}~\kwd{list}(4)$.
Because the new list \verb|*l| is a singleton list, the mutation itself consumes 4 units of potentials.
Similarly to the reasoning in \cref{sec:overview:Shared}, the potential is sufficient to perform the remaining two function calls, and the final remainder context is $\verb|l|: \&\kwd{mut}~\kwd{list}(0)$.
Noting that three calls to \verb|iter| need three units of additional potentials, we derive the following signature for \verb|update|:
\[
\verb|update|: \kwd{fn}(\verb|l|:\&\kwd{mut}~\kwd{list}(2)) \to () | 7 .
\]

A tricky issue arises when one considers creating and dropping mutable borrows.
\cref{fig:ex-prophecy} shows an example where the program creates a mutable borrow \verb|x| from a mutable list \verb|l|.
Note that it is no longer sound to split the potentials of \verb|l| into two parts and store one part in \verb|x|: the reason is stated already at the end of \cref{sec:overview:Shared}; that is, the program can later mutate the list \verb|l| through the mutable reference \verb|x|, making the remainder type of \verb|l| unsound.
Fortunately, Rust's borrow mechanisms ensure a good property that
at most, one mutable borrow from the same memory location can be live simultaneously, so in principle, it would be possible to track the change in the type of the mutable borrow \verb|x| and pass the change back to \verb|l| when \verb|x| gets dropped.
It is worth noting that our type system is aware of when a borrow $x$ gets dropped via an explicit statement $\kwd{drop}~ x$, which is generated according to lifetime constraints given by the Rust compiler.





One idea might emerge immediately that the resource-annotated type of a mutable borrow keeps the location where it borrows from, as $\verb|l|:\&\kwd{mut}(\kwd{list}(\alpha), \verb|p|)$ with \verb|p| be the location such that \lstinline|l = &mut p|.
However, this design essentially embeds a pointer analysis in the type system, and one would soon find out that every mutable reference type needs to record a \emph{set} of possible locations.
\cref{fig:ex-weak} exemplifies this case and we will revisit this example in \cref{sec:overview:Lattice}.
Because those locations are usually local variables, it becomes unclear how to carry out inter-procedural analysis in a compositional way, and we certainly do not want function signatures to reveal local variables.


Such an issue is not uncommon in the studies of advanced type systems or verification frameworks for Rust.
\citet{ProphecyInSepLogic} adapt \emph{prophecy variables}---which were originally proposed by \citet{LICS:AL88} to talk about future's program states during reasoning---to integrate separation logic with prophecies.
RustHorn~\cite{RustHorn} and RefinedRust~\cite{RefinedRust} also use prophecy variables in their verification frameworks.
The high-level idea of using prophecy variables to analyze Rust's mutable borrows is to record additional information which corresponds to the final value of a reference, i.e., the value when the reference gets dropped. 
However, prophecy variables usually record the final values, which are too heavy for AARA.

In \rarust{}, we propose \emph{prophecy potentials}, a novel adaption of prophecy variables to the AARA methodology to reason about the future's potential functions.
\cref{fig:ex-prophecy} shows how prophecy potentials work with mutable borrows.
We now represent the type of mutable reference as $\&\kwd{mut}(\tau_1,\tau_2)$, where $\tau_1$ denotes the current potential function and $\tau_2$ denotes the prophecy potential function, i.e., the expected potential function when the lifetime of the reference ends.
In \cref{fig:ex-prophecy}, suppose that the initial type of \verb|l| is $\kwd{list}(p)$ with some $p \ge 0$.
To create a mutable borrow \verb|x| from \verb|l|, we generate a prophecy type $\kwd{list}(q)$ with some $q \ge 0$, which indicates the final resource-annotated type of \verb|x| when \verb|x| gets dropped.
The mutable reference type of \verb|x| is initialized to $\&\kwd{mut}(\kwd{list}(p), \kwd{list}(q))$.
After the call to the function \verb|update|, the type of \verb|x| becomes $\&\kwd{mut}(\kwd{list}(p'), \kwd{list}(q))$.
Note that the prophecy type should remain unchanged.
When the mutable reference \verb|x| drops, \rarust{} emits a constraint that the potentials indicated by $\kwd{list}(p')$ are no less than the potentials indicated by the prophecy type $\kwd{list}(q)$, i.e., $p' \ge q$.
If \verb|l| would later be used again, we can start from the type $\kwd{list}(q)$.
In this way,
prophecy potentials enable \emph{compositional} reasoning about mutable borrows.

\subsection{Dealing with Aliasing via A Lattice of Resource-Annotated Types}
\label{sec:overview:Lattice}

A type system sometimes cannot precisely determine where a mutable reference is borrowed from.
For example, \cref{fig:ex-weak} uses a conditional expression to assign a mutable reference \verb|l| to a mutable reborrow from either \verb|l1| or \verb|l2|, depending on the runtime value of the Boolean-valued variable \verb|b|.
As the example shows, although Rust's borrow mechanisms enjoy some good properties that aid our design of \rarust{}, we still face the problem of \emph{aliasing}, as other static analyses of heap-manipulating programs would also face.

In our work, we can still exploit Rust's borrow mechanisms, which ensure that aliasing and mutation cannot happen at the same time.
Therefore, we only need to consider \emph{control-flow aliasing}; that is,
when the control-flow paths merge at a program point, we need to \emph{merge} the types---including the mutable reference types---of a variable from different paths.
The merging here needs to be conservative, similarly to the \emph{weak updates} usually seen in pointer analyses.
In \rarust{}, we formulate a subtyping relation among resource-annotated types to formalize the notion of ``conservative'' and then construct a \emph{lattice} of resource-annotated types based on subtyping to carry out merging.
For example, in \cref{fig:ex-weak}, suppose that the mutable reborrows \verb|&mut *l1| has type $\&\kwd{mut}(\kwd{list}(p_1),\kwd{list}(q_1))$ and \verb|&mut *l2| has type $\&\kwd{mut}(\kwd{list}(p_2),\kwd{list}(q_2))$.
Recall that $\kwd{list}(q_1)$ and $\kwd{list}(q_2)$ are prophecy types.
To obtain the type of the mutable reference \verb|l|, we need to merge the two types above.
Thinking about the merging conservatively, one can derive that
\verb|l| can hold potentials no more than the potentials indicated by
$\kwd{list}(p_1)$ and $\kwd{list}(p_2)$, so \verb|l|'s current potential type is
at most $\kwd{list}(\min(p_1,p_2))$.
Meanwhile, to ensure that the prophecies are sound no matter which branch is executed, \verb|l|'s prophecy potential type should be at least $\kwd{list}(\max(q_1,q_2))$.
In addition, because the type system cannot know which of \verb|l1| and \verb|l2| would be mutated later or which of them would remain unchanged, \rarust{} enforces that $p_1 \ge q_1$ and $p_2 \ge q_2$.

As illustrated above, \rarust{}'s current mechanism of handling aliasing compromises the precision of the resource analysis, mostly due to weak updates.
On the one hand, such precision loss is unavoidable---at least for \cref{fig:ex-weak}---due to insufficient information about runtime values during type checking.
On the other hand, recent work such as Flux~\cite{Flux} introduces \emph{strong references} to perform strong updates, and it is interesting future research to adapt them in \rarust{}.


\section{Resource Aware Borrow Calculus}
\label{sec:calculus}

This section introduces Resource-Aware Borrow Calculus (RABC) and resource-aware dynamic semantics.
RABC is a resource-aware variant of Low-Level Borrow Calculus (LLBC)~\cite{Aeneas}, which is based on Rust's MIR but keeps high-level information such as structured control flow and a structured memory model.
RABC includes essential features such as mutation, borrow mechanisms, integer lists with explicit boxing, and recursive top-level functions.\footnote{We include only integer lists as heap-allocated data structures and exclude loops in RABC for ease of presentation. Our implementation supports user-defined inductive data types using structs and enums, as well as \texttt{while true} loops with break and continue.}
RABC also includes $\kwd{tick}(\cdot)$ statements to annotate resource consumption.
\cref{sec:syntax} presents the syntax and discusses properties of a well-borrowed RABC program, guaranteed by Rust's borrow checker.
\cref{sec:semantics} formalizes the resource-aware dynamic semantics of RABC, which captures the resource consumption during the execution of an RABC program.

\subsection{Syntax}
\label{sec:syntax}

\cref{fig:syntax} summarizes the syntax of RABC. For the convenience of formalization, we distinguish between expressions and statements.
We then describe each syntactical construction separately. 

\begin{figure}[t]
\small
    \begin{align*}
    \textbf{Type}~ t &::= \\
        \tag{atom} &|~ \kwd{i32} ~|~ \kwd{bool} \\
        \tag{list} &|~ \kwd{list} ~|~ \kwd{box}~\kwd{list}\\
        \tag{borrow} &|~ \&^\kwd{s}~t ~|~ \&^\kwd{m}~t \\
    \textbf{Place}~ p &::= \\
        \tag{variable} &|~ x \\
        \tag{dereference} &|~ * p \\
    \textbf{Expression}~ e &::= \\
        &|~ \kwd{n}_\text{i32} ~|~ \kwd{true} ~|~ \kwd{false} ~|~ \kwd{nil} ~|~ \kwd{box}(e) \\
        \tag{integer} &|~ e_1 ~\kwd{op}~ e_2 \\
        \tag{scalar copy} &|~ \kwd{copy}~ p \\
        \tag{borrow} &|~ \&^\kwd{s}~ p ~|~ \&^\kwd{m}~ p \\
        \tag{move ownership} &|~ \kwd{move}~ p \\
    \textbf{Statement}~ s &::= \\
        \tag{resource cost} &|~ \kwd{tick}(\delta) \\
        \tag{return} &|~ \kwd{return} \\
        \tag{sequence} &|~ s_1; s_2 \\
        \tag{drop} &|~ \kwd{drop}~ p\\
        \tag{if bool} &|~ \kwd{if}~ p ~\kwd{then}~ s_1 ~\kwd{else}~ s_2 ~\kwd{end}\\
        \tag{match list} &|~ \kwd{match}~ p ~ \{\kwd{nil}\mapsto s_1, \kwd{cons}(x_\text{hd}, x_\text{tl})\mapsto s_2\} \\
        \tag{assignment} &|~ p \from e \\
        \tag{list constructor} &|~ p \from \kwd{cons}(e_1, e_2) \\
        \tag{function call} &|~ p \from f(\vec{e}) \\
    \textbf{Toplevel}~ tl &::= \\
        \tag{sequence} &|~ tl_1 ~ tl_2 \\
        \tag{function} &|~ \kwd{fn}~ f ~(\vec{x}_\text{param}:\vec{t}_\text{param}, \vec{x}_\text{local}:\vec{t}_\text{local}, x_\text{ret}:t_\text{ret}) \{~ s ~\}
    \end{align*}
    \caption{Syntax}
    \label{fig:syntax}
\end{figure}

\textbf{Types} are simple, without resource annotations; we will present the types with annotations in \cref{sec:inference}. Integer $\kwd{i32}$ and Boolean $\kwd{bool}$ are atom types. Lists $\kwd{list}$ and the box type of lists $\kwd{box}~\kwd{list}$ are types for functional lists defined in Rust. The box type is required for a list's tail, which is usually heap-allocated. The reference type is for borrows with different modes: $\&^\kwd{s}~ t$ is for the shared borrow, and $\&^\kwd{m}~ t$ is for the mutable borrow. These borrow modes are notions from Rust explained as follows: mutation is forbidden on shared borrows and only allowed on mutable borrows.

\textbf{Places} are memory locations storing values, including program variables $x, y, \ldots$ and dereferences $* p$ of borrows or boxes stored in $p$. We will soon show their role in the dynamic semantics in \cref{sec:semantics}. 

\textbf{Expressions} represent resource-free evaluation. Integer literals $\kwd{n}_\text{i32}$ and Boolean literals $\kwd{true}$, $\kwd{false}$ are atom values. The $\kwd{nil}$ constructor stands for empty lists. The boxing expression $\kwd{box}(e)$ allocates memory in a heap to store the value of $e$, resembling \lstinline|Box::new(e)| of Rust. Arithmetic expressions $e_1 ~\kwd{op}~ e_2$ operate on integer-valued operands with the binary operator $\kwd{op}$. For an atom value stored in a place $p$, we use $\kwd{copy}~p$ to make a copy of it.
Given a place $p$, we can borrow from it with different modes: $\&^\kwd{s}~ p$ creates a shared borrow, and $\&^\kwd{m}~ p$ creates a mutable borrow.
For a borrow stored in a place $p$, we can use $\kwd{move}~ p$ to move ownership out from the original place $p$. 

\textbf{Statements} represent resource-aware evaluation. The statement $\kwd{tick}(\delta)$ with $\delta \in \ZZ$ is the explicit annotation for consuming $\delta$ units of resource.
Statements $\kwd{return}$ and $s_1; s_2$ usually form a function body such as $s_1; s_2; \ldots, s_n; \kwd{return}$. In RABC, we introduce $\kwd{drop}~ p$ to drop the borrow stored in $p$ explicitly. The conditional and pattern-match statements perform case analysis on Boolean values and lists, respectively. Note that we use place $p$ instead of the expression $e$ to indicate Boolean values and lists because we only need to peak the value instead of copying Boolean values or moving ownership of lists. The assignment has three variants: one for assigning atom values and borrows, one for constructing lists, and another for function applications; the latter two variants do not reside in expressions because they need to be resource-aware as they incur resource consumption.

\textbf{Toplevels} define a sequence of (possibly recursive) top-level functions like $\kwd{fn} ~f_1, \ldots, \kwd{fn}~ f_n$. Each function contains one statement as its body and variables with corresponding type declarations, including function parameters $\vec{x}_\text{param}:\vec{t}_\text{param}$, local variables $\vec{x}_\text{local}:\vec{t}_\text{local}$ used in the function body, and a distinguished variable $x_\text{ret} : t_\text{ret}$ for the returned value. The $\vec{\bullet}$ notation represents vectors.

\subsection{Resource Aware Dynamic Semantics}
\label{sec:semantics}

\begin{figure}[t]
\small
    \begin{align*}
    \tag{undefined} \textbf{Value}~ v &::= \bot \\
    \tag{atoms} &|~ \kwd{n}_\text{i32} ~|~ \kwd{true} ~|~ \kwd{false} \\
    \tag{list} &|~  lv \\
    \tag{box} &|~ \kwd{box}(lv) \\
    \tag{borrow} &|~ \&(p, v)\\
    \textbf{List Value}~ lv &::= \kwd{nil} ~|~ \kwd{cons}(\kwd{n}_\text{i32}, \kwd{box}(lv))
    \end{align*}
    \caption{Value}
    \label{fig:dyn-value}
\end{figure}

\begin{figure}[t]
\small
    \judgement{Store Reading}{$V\vDash p \rightsquigarrow v$}
    \begin{mathpar}
    \inferrule*[Right=\rulename{V-Rd-Var}]
    {V(x)=v}
    {V\vDash x \rightsquigarrow v}
    \and
    \inferrule*[Right=\rulename{V-Rd-Box}]
    {V\vDash p \rightsquigarrow \kwd{box}(v)}
    {V\vDash * p\rightsquigarrow v}
    \and
    \inferrule*[Right=\rulename{V-Rd-Borrow}]
    {V\vDash p \rightsquigarrow \&(\_, v)}
    {V\vDash *p \rightsquigarrow v}
    \end{mathpar}

    \judgement{Store Writing}{$\VWt{V}{p}{v}{V'}$}
    \begin{mathpar}
    \inferrule[V-Wt-Var]
    {\forall y\not=x, V'(y)=V(y) 
    \\ V'(x) = v}
    {\VWt{V}{x}{v}{V'}}
    \and
    \inferrule[V-Wt-Box]
    {V\vDash p\rightsquigarrow \kwd{box}(\_)
    \\ \VWt{V}{p}{\kwd{box}(v)}{V'}}
    {\VWt{V}{* p}{v}{V'}}
    \and
    \inferrule*[Right=\rulename{V-Wt-Borrow}]
    {V\vDash p\rightsquigarrow \&(q, \_)
    \\ \VWt{V}{q}{v}{V'}
    \\ \VWt{V'}{p}{\&(q, v)}{V''}}
    {\VWt{V}{*p}{v}{V''}}
    \end{mathpar}

    \caption{Store Reading and Writing}
    \label{fig:dyn-rw}
\end{figure}

\begin{figure}[t]
\small
\judgement{Expression Evaluation (Selected)}{$V\vDash e \rightsquigarrow v$}
    \begin{mathpar}
    \inferrule*[Right=\rulename{V-Ev-Borrow}]
    {V\vDash p \rightsquigarrow v}
    {V\vDash \&^{\kwd{s}/\kwd{m}} p \rightsquigarrow \&(p, v)}
    \end{mathpar}

\judgement{Statement Execution (Selected)}{$V\vDash e \rightsquigarrow^\delta \Dashv V'$}
    \begin{mathpar}
    \inferrule*[Right=\rulename{V-Ex-Tick}]
    {~}
    {V\vDash \kwd{tick}(\delta)\rightsquigarrow^\delta \Dashv V}
    \and
    \inferrule*[Right=\rulename{V-Ex-Drop}]
    {~}
    {V\vDash \kwd{drop}~p\rightsquigarrow^0\Dashv V}
    \\
    \inferrule*[Right=\rulename{V-Ex-Cons}]
    {V\vDash e_1 \rightsquigarrow v_1
    \\ V\vDash e_2 \rightsquigarrow v_2
    \\ \VWt{V}{p}{\kwd{cons}(v_1, v_2)}{V'} }
    {V\vDash p\from \kwd{cons}(e_1, e_2)\rightsquigarrow^0 \Dashv V'}
    \\
    \inferrule*[Right=\rulename{V-Ex-IfT}]
    {V\vDash p\rightsquigarrow \kwd{true}
    \\ V\vDash s_1\rightsquigarrow^\delta \Dashv V'}
    {V\vDash \kwd{if}~ p ~\kwd{then}~ s_1 ~\kwd{else}~ s_2 ~\kwd{end} \rightsquigarrow^\delta \Dashv V'}
    \and
    \inferrule*[Right=\rulename{V-Ex-IfF}]
    {V\vDash p\rightsquigarrow \kwd{false}
    \\ V\vDash s_2\rightsquigarrow^\delta \Dashv V'}
    {V\vDash \kwd{if}~ p ~\kwd{then}~ s_1 ~\kwd{else}~ s_2 ~\kwd{end} \rightsquigarrow^\delta \Dashv V'}
    \\
    \inferrule*[Right=\rulename{V-Ex-Mat-Nil}]
    {V\vDash p\rightsquigarrow \kwd{nil}
    \\ V\vDash s_1\rightsquigarrow^\delta \Dashv V'}
    {V\vDash \kwd{match}~ p ~ \{\kwd{nil}\mapsto s_1, \kwd{cons}(x_\text{hd}, x_\text{tl})\mapsto s_2\} \rightsquigarrow^\delta \Dashv V'}

    \inferrule*[Right=\rulename{V-Ex-Mat-Cons}]
    {V\vDash p\rightsquigarrow \kwd{cons}(hd, tl)
    \\ \VWt{V}{p}{\bot}{V_1}
    \\ \VWt{V_1}{x_\text{hd}}{hd}{V_2}
    \\ \VWt{V_2}{x_\text{tl}}{tl}{V_\text{b}}
    \\\\ V_\text{b}\vDash s_2\rightsquigarrow^\delta \Dashv V'_\text{b}
    \\ V'_\text{b}\vDash x_\text{hd}\rightsquigarrow hd'
    \\ V'_\text{b}\vDash x_\text{tl}\rightsquigarrow tl'
    \\\\ \VWt{V'_\text{b}}{x_\text{hd}}{\bot}{V'_1}
    \\ \VWt{V'_1}{x_\text{tl}}{\bot}{V'_2}
    \\ \VWt{V'_2}{p}{\kwd{cons}(hd', tl')}{V'} 
    }
    {V\vDash \kwd{match}~ p ~ \{\kwd{nil}\mapsto s_1, \kwd{cons}(x_\text{hd}, x_\text{tl})\mapsto s_2\} \rightsquigarrow^\delta \Dashv V'}
    \\
    
    \inferrule*[Right=\rulename{V-Ex-App}]
    {\kwd{fn}~ f ~(\vec{x}_\text{param}:\vec{t}_\text{param}, \vec{x}_\text{local}:\vec{t}_\text{local}, x_\text{ret}:t_\text{ret}) \{~ s ~\}
    \\ V\vDash \vec{e}\rightsquigarrow \vec{v}
    \\\\ \VWt{V}{\vec{x}_\text{param}}{\vec{v}}{V_1}
    \\ \VWt{V_1}{\vec{x}_\text{local}}{\bot}{V_2}
    \\ \VWt{V_2}{x_\text{ret}}{\bot}{V_\text{b}}
    \\\\ V_\text{b}\vDash s\rightsquigarrow^\delta \Dashv V'_\text{b}
    \\ V'_\text{b}\vDash x_\text{ret} \rightsquigarrow v_\text{ret}
    \\\\ \VWt{V'_\text{b}}{\vec{x}_\text{param}}{\bot}{V'_1}
    \\ \VWt{V'_1}{\vec{x}_\text{local}}{\bot}{V'_2}
    \\ \VWt{V'_2}{x_\text{ret}}{\bot}{V'_3}
    \\ \VWt{V'_3}{p}{v_\text{ret}}{V'}
    } 
    {V\vDash p\from f(\vec{e})\rightsquigarrow^\delta \Dashv V'}
    \end{mathpar}
    \caption{Resource Aware Dynamic Semantics}
    \label{fig:dyn-eval-exec}
\end{figure}

\cref{fig:dyn-value}, \cref{fig:dyn-rw}, and \cref{fig:dyn-eval-exec} define a  resource-aware big-step dynamic semantics for RABC.
\cref{fig:dyn-value} defines values of RABC, including atom values, list values, box values, borrow values, and a distinguished undefined value $\bot$. 
Note that borrow values take the form $\&(p,v)$, denoting a value $v$ borrowed from a place $p$, but do not record the borrow mode ($\&^\kwd{s}$ or $\&^\kwd{m}$).
The design is reasonable because we work on well-borrowed programs; thus, we do not need to track the borrow modes during runtime.

A \emph{store} is a mapping $V : \mathbf{Variable}\to\mathbf{Value}$, where unused variables can be mapped to $\bot$.
\cref{fig:dyn-rw} formalizes reading from and writing to a store.
Judgement $V \vDash p \rightsquigarrow v$ means that under a store $V$, the place $p$ records a value $v$.
Judgement $\VWt{V}{p}{v}{V'}$ means that starting from a store $V$, writing a value $v$ to the place $p$ yields a new store $V'$.
Note that the rule \rulename{V-Wt-Borrow} may not terminate in general for heap-manipulating languages like C.
In our setting, we exploit Rust's borrow mechanisms that ensure that one cannot construct cyclic reference relations using borrows.


\cref{fig:dyn-eval-exec} presents selected evaluation rules for expressions and statements.
Judgement $V\vDash e \rightsquigarrow v$ indicates that under a store $V$, the expression $e$ evaluates to the value $v$. Recall that expressions denote resource-free computation, so we do not record resource information.
%
%
The rule \rulename{V-Ev-Borrow} reflects the design that the runtime does not need to track borrow modes for well-borrowed programs.
Judgement $V \vDash s \rightsquigarrow^\delta \Dashv V'$ means that starting from a store $V$, the statement $s$ executes with $\delta$ units of resource consumption and ends in the store $V'$.
The rule \rulename{V-Ex-Tick} introduces $\delta$ unit of resource consumption; this is the only rule to incur actual resource uses.
The rule \rulename{V-Ex-Drop} does nothing, i.e., it does need to put the value back to the borrowed place, because we immediately update values when writing through borrows, as indicated by the rule \rulename{V-Wt-Borrow} in \cref{fig:dyn-rw}.
Also, because of such immediate updates, it is necessary to make sure that original places and variables should be passed to function applications; the subscript $\textbf{b}$ of the store $V_\text{b}$ stands for \textbf{b}inding in the rule \rulename{V-Ex-App}.
%

\section{Resource Aware Type System and Inference} \label{sec:inference}

In this section, we present the resource-aware type system based on RABC introduced in \cref{sec:calculus} and a type-inference algorithm based on the AARA methodology.
\cref{sec:inference:types} introduces resource-enriched types, which augment the types of RABC with resource annotations.
%
%
\cref{sec:inference:subtyping} formulates a subtyping relation among resource-enriched types and uses the relation to construct a lattice of types sketched in \cref{sec:overview:Lattice}.
%
%
\cref{sec:inference:eval} and \cref{sec:inference:exec} present the resource-aware typing rules for expressions and statements, respectively.
\cref{sec:inference:infer} discusses a type-inference algorithm for the resource-aware type system.

\subsection{Rich Types, Contexts, and Signatures} \label{sec:inference:types}
\begin{figure}[t]
\small
    \begin{align*}
    \tag{undefined} \textbf{RichType}~ \tau &::= \bot \\
    \tag{atom types} &|~ \kwd{i32} ~|~ \kwd{bool} \\
    \tag{list} &|~ \kwd{list}(\alpha)\\
    \tag{box} &|~ \kwd{box}(\kwd{list}(\alpha)) \\
    \tag{shared borrow} &|~ \&^\kwd{s}(\tau) \\
    \tag{mutable borrow} &|~ \&^\kwd{m}(\tau_\text{c}, \tau_\text{p})
    \end{align*}
    \caption{Rich Types}
    \label{fig:rich-type}
\end{figure}
\begin{figure}[t]
\small
    \judgement{Enrich (Selected)}{$\textit{enrich}~ t ~\textit{as}~ \tau$}
    \begin{mathpar}
    \inferrule[Enrich-List]
    {\alpha~\text{fresh}}
    {\textit{enrich}~ \kwd{list} ~\textit{as}~  \kwd{list}(\alpha)}
    \and
    \inferrule[Enrich-Shared]
    {\textit{enrich}~ t ~\textit{as}~ \tau}
    {\textit{enrich}~ \&^\kwd{s}(t) ~\textit{as}~ \&^\kwd{s}(\tau)}
    \and
    \inferrule[Enrich-Mutable]
    {\textit{enrich}~ t ~\textit{as}~ \tau_\text{c}
    \\ \textit{enrich}~ t ~\textit{as}~ \tau_\text{p}
    }
    {\textit{enrich}~ \&^\kwd{m}(t) ~\textit{as}~ \&^\kwd{m}(\tau_\text{c}, \tau_\text{p})}
    \end{mathpar}
    \caption{Enrichment}
    \label{fig:enrich}
\end{figure}

\begin{figure}[t]
\small
    \judgement{Context Reading}{$\Gamma\vdash p \hookrightarrow \tau$}
    \begin{mathpar}
    \inferrule[$\Gamma$-Rd-Var]
    {\Gamma(x)=\tau}
    {\Gamma\vdash x \hookrightarrow \tau}
    \and
    \inferrule[$\Gamma$-Rd-Box]
    {\Gamma\vdash p \hookrightarrow \kwd{box}(\tau)}
    {\Gamma\vdash * p\hookrightarrow \tau}
    \and
    \inferrule[$\Gamma$-Rd-Shared]
    {\Gamma\vdash p \hookrightarrow \&^\kwd{s}(\tau)}
    {\Gamma\vdash *p \hookrightarrow \tau}
    \and
    \inferrule[$\Gamma$-Rd-Mutable]
    {\Gamma\vdash p \hookrightarrow \&^\kwd{m}(\tau_\text{c},\tau_\text{p})}
    {\Gamma\vdash * p\hookrightarrow \tau_\text{c}}    
    \end{mathpar}
    
    \judgement{Context Writing}{$\GWt{\Gamma}{p}{\tau}{\Gamma'}$}
    \begin{mathpar}
    \inferrule[$\Gamma$-Wt-Var]
    {\forall y\not=x, \Gamma'(y)=\Gamma(y) 
    \\ \Gamma'(x) = \tau}
    {\GWt{\Gamma}{x}{\tau}{\Gamma'}}
    \and
    \inferrule[$\Gamma$-Wt-Box]
    {\Gamma\vdash p\hookrightarrow \kwd{box}(\_)
    \\ \GWt{\Gamma}{p}{\kwd{box}(\kwd{list}(\alpha))}{\Gamma'}
    }
    {\GWt{\Gamma}{*p}{\kwd{list}(\alpha)}{\Gamma'}}
    \\
    \inferrule*[Right=\rulename{$\Gamma$-Wt-Shared}]
    {\Gamma\vdash p\hookrightarrow \&^\kwd{s}(\_)
    \\ \GWt{\Gamma}{p}{\&^\kwd{s}(\tau)}{\Gamma'}
    }
    {\GWt{\Gamma}{*p}{\tau}{\Gamma'}}
    \\
    \inferrule*[Right=\rulename{$\Gamma$-Wt-Mutable}]
    {\Gamma\vdash p\hookrightarrow \&^\kwd{m}(\tau_\text{c}, \tau_\text{p})
    \\ \vdash \tau_\text{c}
    \\ \GWt{\Gamma}{p}{\&^\kwd{m}(\tau, \tau_\text{p})}{\Gamma'}
    }
    {\GWt{\Gamma}{*p}{\tau}{\Gamma'}}
    \end{mathpar}
    \caption{Context Reading and Writing}
    \label{fig:sta-rw}
\end{figure}

\begin{figure}[t]
\small
    \judgement{Signatures}{$\vdash f \Rightarrow (\Gamma_f, \delta_f)$}
    \begin{mathpar}
    \inferrule
    {\text{fn}~ f ~(\vec{x}_\text{param}:\vec{t}_\text{param}, \vec{x}_\text{local}:\vec{t}_\text{local}, x_\text{ret}:t_\text{ret}) \{~ s ~\}
    \\\\ \textit{enrich}~ \vec{t}_\text{param} ~\textit{as}~ \vec{\tau}_\text{param}
    \\ \textit{enrich}~ \vec{t}_\text{local} ~\textit{as}~ \vec{\tau}_\text{local}
    \\ \textit{enrich}~ t_\text{ret} ~\textit{as}~ \tau_\text{ret}
    \\\\ \GWt{\emptyset}{\vec{x}_\text{param}}{\vec{\tau}_\text{param}}{\Gamma_1} 
    \\ \GWt{\Gamma_1}{\vec{x}_\text{local}}{\vec{\tau}_\text{local}}{\Gamma_2}
    \\ \GWt{\Gamma_2}{x_\text{ret}}{\tau_\text{ret}}{\Gamma_f}
    \\ \delta_f ~\text{fresh} }
    {\vdash f \Rightarrow (\Gamma_f, \delta_f)}
    \end{mathpar}
    \caption{Function Signatures}
    \label{fig:fun-sig}
\end{figure}

\textbf{Rich types} are types enriched with potential annotation $\alpha$ as in \cref{fig:rich-type} and \cref{fig:enrich}. 
The rich type $\bot$ denotes zero potential as the minimum among all rich types. 
The rich type $\kwd{list}(\alpha)$, represents the potential function $\alpha\cdot n$ for list $l$ with length $n$.
In shared borrows $\&^\kwd{s}(\tau)$, $\tau$ represents the potential function of borrowed value. 
Mutable borrows $\&^\kwd{m}(\tau_\text{c}, \tau_\text{p})$ contains 2 components. $\tau_\text{c}$ is the \textbf{c}urrent type, which denotes the current potential of mutable borrow. $\tau_\text{p}$ is the \textbf{p}rophecy type, which denotes the prophecy potential when the mutable borrow ends.  

Typing \textbf{context} $\Gamma : \mathbf{Variable}\to\mathbf{RichType}$ is a partial map, where unused variables can be mapped to $\bot$. Similarly, in \cref{fig:sta-rw}, we extend the reading and writing operation on typing context from variable $x$ to place $p$. It is worth noting that rules \rulename{$\Gamma$-Rd-Mutable} and \rulename{$\Gamma$-Wt-Mutable} indicate to read and write the mutable borrow on its current component $\tau_\text{c}$. We explicitly point out that $\vdash \tau_\text{c}$ in the premise of rule \rulename{$\Gamma$-Wt-Mutable} is \textbf{dropping condition} for soundness, detailed in \cref{sec:inference:subtyping}. Because when we update $\tau_\text{c}$, the old $\tau_\text{c}$ should be restored if it is a mutable borrow.

\textbf{Signature} $(\Sigma_f, \delta_f)$ of a function $f$ compose a typing context $\Sigma_f$ and a resource unknown variable $\delta_f \in \ZZ$. As shown in \cref{fig:fun-sig}, context $\Gamma_f$ contains rich types for parameters, local variables, and the return variable. $\delta_f$ indicates the resource consumption irrelevant to parameters.

\subsection{Subtyping, Well-formedness, and Merging} \label{sec:inference:subtyping}
\begin{figure}[t]
\small
    \judgement{Subtyping}{$\tau_1 \preceq \tau_2$}
    \begin{mathpar}
    \inferrule*[Right=\rulename{S-Bot}]
    {~}
    {\bot\preceq\tau}
    \and
    \inferrule*[Right=\rulename{S-Int}]
    {~}
    {\kwd{i32}\preceq\kwd{i32}}
    \and
    \inferrule*[Right=\rulename{S-Bool}]
    {~}
    {\kwd{bool}\preceq\kwd{bool}}
    \and
    \inferrule*[Right=\rulename{S-List}]
    {\alpha_1 \leq \alpha_2}
    {\kwd{list}(\alpha_1)\preceq\kwd{list}(\alpha_2)}
    \\
    \inferrule[S-Box]
    {\alpha_1 \leq \alpha_2}
    {\kwd{box}(\kwd{list}(\alpha_1))\preceq\kwd{box}(\kwd{list}(\alpha_2))}
    \and
    \inferrule[S-Shared]
    {\tau_1 \preceq \tau_2}
    {\&^\kwd{s}(\tau_1)\preceq\&^\kwd{s}(\tau_2)}
    \and
    \inferrule[S-Mutable]
    {\tau_{\text{c}, 1}\preceq \tau_{\text{c}, 2}
    \\ \tau_{\text{p}, 2}\preceq \tau_{\text{p}, 1} }
    {\&^\kwd{m}(\tau_{\text{c}, 1}, \tau_{\text{p}, 1})\preceq\&^\kwd{m}(\tau_{\text{c}, 2}, \tau_{\text{p}, 2})}
    \end{mathpar}
    \caption{Rich Subtyping}
    \label{fig:rich-subtyping}
\end{figure}
\begin{figure}[t]
\small
    \judgement{Well-formedness}{$\vdash \tau$}
    \begin{mathpar}

    \inferrule[WF-Bot]
    {~}
    {\vdash \bot}
    \and
    \inferrule[WF-Int]
    {~}
    {\vdash \kwd{i32}}
    \and
    \inferrule[WF-Bool]
    {~}
    {\vdash \kwd{bool}}
    \and
    \inferrule[WF-List]
    {\alpha \geq 0}
    {\vdash \kwd{list}(\alpha)}
    \and
    \inferrule[WF-Box]
    {\alpha \geq 0}
    {\vdash \kwd{box}(\kwd{list}(\alpha))}
    \\
    \inferrule*[Right=\rulename{WF-Shared}]
    {\vdash \tau
    }
    {\vdash \&^\kwd{s}(\tau)}
    \and
    \inferrule*[Right=\rulename{WF-Mutable}]
    {\tau_\text{p} \preceq \tau_\text{c}
    \\ \vdash \tau_\text{c}
    \\ \vdash \tau_\text{p}
    }
    {\vdash \&^\kwd{m}(\tau_\text{c}, \tau_\text{p})}
    \end{mathpar}
    \caption{Well-formedness}
    \label{fig:rich-type-wf}
\end{figure}
\begin{figure}[t]
\small
    \judgement{Context Merging}{$\Gamma_1 \sqcap \Gamma_2 = \{ x \hookrightarrow \Gamma_1(x)\cap\Gamma_2(x) : x \in \mathbf{dom}(\Gamma_1)=\mathbf{dom}(\Gamma_2)\}$}
    \judgement{Meet/Join (Selected)}{$\tau_1\cap\tau_2 / \tau_1\cup\tau_2$}
    \begin{mathpar}        
    \inferrule*[Right=Meet-List]
    {\min(\alpha_1, \alpha_2)=\alpha}
    {\kwd{list}(\alpha_1)\cap\kwd{list}(\alpha_2)=\kwd{list}(\alpha)}
    \and
    \inferrule*[Right=Join-List]
    {\max(\alpha_1, \alpha_2)=\alpha}
    {\kwd{list}(\alpha_1)\cup\kwd{list}(\alpha_2)=\kwd{list}(\alpha)}
    \\
    
    \inferrule*[Right=Meet-Shared]
    {\tau_1 \cap \tau_2=\tau}
    {\&^\kwd{s}(\tau_1)\cap\&^\kwd{s}(\tau_2)=\&^\kwd{s}(\tau)}
    \and
    \inferrule*[Right=Join-Shared]
    {\tau_1 \cup \tau_2=\tau}
    {\&^\kwd{s}(\tau_1)\cup\&^\kwd{s}(\tau_2)=\&^\kwd{s}(\tau)}
    \\

    \inferrule*[Right=Meet-Mutable]
    {\tau_{\text{c}, 1} \cap \tau_{\text{c}, 2}=\tau_\text{c}
    \\ \tau_{\text{p}, 1} \cup \tau_{\text{p}, 2}=\tau_\text{p}
    \\ \tau_{\text{p}, 1} \preceq \tau_{\text{c}, 1}
    \\ \tau_{\text{p}, 2} \preceq \tau_{\text{c}, 2}
    }
    {\&^\kwd{m}(\tau_{\text{c}, 1}, \tau_{\text{p}, 1})\cap\&^\kwd{m}(\tau_{\text{c}, 2}, \tau_{\text{p}, 2})=\&^\kwd{m}(\tau_\text{c}, \tau_\text{p})}
    \\
    \inferrule*[Right=Join-Mutable]
    {\tau_{\text{c}, 1} \cup \tau_{\text{c}, 2}=\tau_\text{c}
    \\ \tau_{\text{p}, 1} \cap \tau_{\text{p}, 2}=\tau_\text{p}
    \\ \tau_{\text{p}, 1} \preceq \tau_{\text{c}, 1}
    \\ \tau_{\text{p}, 2} \preceq \tau_{\text{c}, 2}
    }
    {\&^\kwd{m}(\tau_{\text{c}, 1}, \tau_{\text{p}, 1})\cup\&^\kwd{m}(\tau_{\text{c}, 2}, \tau_{\text{p}, 2})=\&^\kwd{m}(\tau_\text{c}, \tau_\text{p})}
    \end{mathpar}
    \caption{Merging}
    \label{fig:sta-merging}
\end{figure}

The order relation $\leq$ on resources derives another order relation on rich types, the \emph{subtyping} relation in \cref{fig:rich-subtyping}. The interpretation of subtyping $\tau_1 \preceq \tau_2$ is that the value $v$ typed with $\tau_1$ has \textbf{less} resource than the value $v$ typed with $\tau_2$. 
The rich type $\bot$ is a subtype of any type because $\bot$ denotes zero potential.
It is worth noting that in \rulename{S-Mutable}, $\tau_\text{p}$ is contravariant because prophecy type $\tau_\text{p}$ denotes the prophecy potential to return. 
The reflexive rule and the transitive rule are derivable.

A well-formed rich type always denotes a non-negative potential function. Our type system can drop well-formed types without sacrificing soundness.
\rulename{WF-List} and \rulename{WF-Box} request $\alpha \geq 0$, which makes $\alpha \cdot n \geq 0$ for list $l$ with length $n\geq0$. 
\rulename{WF-Shared} is a structural rule. For example, if $\kwd{list}(\alpha)$ is well-formed, so is $\&^\kwd{s}(\kwd{list}(\alpha))$. Rust borrow checker ensures that $\tau$ in $\&^\kwd{s}(\tau)$ satisfies $\tau\not=\&^\kwd{m}(\_, \_)$. Our type system supports nested borrows.
Besides structural premises $\vdash \tau_\text{c}$ and $\vdash \tau_\text{p}$, \rulename{WF-Mutable} demands \textbf{dropping condition} $\tau_\text{p} \preceq \tau_\text{c}$ in $\&^\kwd{m}(\tau_\text{c}, \tau_\text{p})$. The condition is called dropping condition because it works as dropping mutable borrows in \cref{fig:ex-prophecy}. The dropping condition makes sure that mutable borrow types denote non-negative potentials, as illustrated in \cref{sec:soundness}.

\textbf{Merging} is a conservative approximation of resource potentials after conditional branching. Under typing context $\Gamma$, the type system checks statements $s_1$ and $s_2$ in different branches and gets remainder contexts $\Gamma_1$ and $\Gamma_2$.
The type system should merge them to check continuation. 
As illustrated in \cref{fig:sta-merging}, to merge typing contexts is to merge rich types at each $x$ in the domain of two contexts. 
Because the prophecy type $\tau_\text{p}$ in mutable borrow is contravariant, we need to define not only the meet of types but also the join of types. Our purpose is to construct a \emph{lattice} with the property that $\tau_1\cap\tau_2\preceq \tau_i \preceq\tau_1\cup\tau_2, \forall i=1, 2$. Hence, merging over contexts is non-increasing on resources to conservatively fulfill soundness. 
The lattice operations of $\kwd{list}(\alpha_1)$ and $\kwd{list}(\alpha_1)$ are inherited from the resource's $\min$ and $\max$, so it is readily comprehensible. 
Notice that dropping conditions appear in rules \rulename{Meet-Mutable} and \rulename{Join-Mutable}. They are to fulfill soundness for weak updates, which is mentioned in \cref{sec:overview:Lattice}. Recall that dropping borrows without these conditions may increase resources in both original places indicated by $\tau_{\text{p}, 1}$ and $\tau_{\text{p}, 2}$, to break soundness. 


It is worth noting that we support nested borrows like $\&^\kwd{s}(\&^\kwd{s}(\tau))$, $\&^\kwd{m}(\&^\kwd{s}(\tau_\text{c}), \&^\kwd{s}(\tau_\text{p}))$ and $\&^\kwd{m}(\&^\kwd{m}(\tau_\text{cc}, \tau_\text{cp}), \&^\kwd{m}(\tau_\text{cc}, \tau_\text{pp}))$. Rust's borrow mechanisms exclude nested borrows like shared borrows of mutable borrows $\&^\kwd{s}(\&^\kwd{m}(\tau_\text{c}, \tau_\text{p}))$, because they violate the property that at most one mutable borrow from the same piece of data is live at the same time.

\subsection{Typing Expressions} \label{sec:inference:eval}\
\begin{figure}[t]
\small
\judgement{Typing Expressions (Selected)}{$\Gamma\vdash e \hookrightarrow \tau\dashv\Gamma'$}
    \begin{mathpar}
    \inferrule*[Right=\rulename{$\Gamma$-Ev-Nil}]
    {\alpha ~\text{fresh}}
    {\Gamma\vdash \kwd{nil} \hookrightarrow \kwd{list}(\alpha)\vdash\Gamma}
    \and
    \inferrule*[Right=\rulename{$\Gamma$-Ev-Move}]
    {\Gamma\vdash p \hookrightarrow \tau
    \\ \GWt{\Gamma}{p}{\bot}{\Gamma'}
    }
    {\Gamma\vdash \kwd{move}~p \hookrightarrow \tau\dashv\Gamma'}
    \\

    \inferrule*[Right=\rulename{$\Gamma$-Ev-Shared}]
    {\Gamma\vdash p \hookrightarrow \tau
    \\ \textit{share}~ \tau ~\textit{as}~\tau_1, \tau_2
    \\ \GWt{\Gamma}{p}{\tau_1}{\Gamma'}
    }
    {\Gamma\vdash \&^\kwd{s}~p \hookrightarrow \&^\kwd{s}(\tau_2)\dashv\Gamma'}
    \\
    
    \inferrule*[Right=\rulename{$\Gamma$-Ev-Mutable}]
    {\Gamma\vdash p \hookrightarrow \tau
    \\ \textit{prophesy}~ \tau ~\textit{as}~ \tau_\text{p} 
    \\ \GWt{\Gamma}{p}{\tau_\text{p}}{\Gamma'}
    }
    {\Gamma\vdash \&^\kwd{m}~p \hookrightarrow \&^\kwd{m}(\tau, \tau_\text{p})\dashv\Gamma'}
    \end{mathpar}
    \caption{Typing Expressions}
    \label{fig:sta-eval}
\end{figure}

\begin{figure}[t]
\small
    \judgement{Sharing (Selected)}{$\textit{share}~ \tau ~\textit{as}~\tau_1, \tau_2$}
    \begin{mathpar}
    \inferrule*[Right=\rulename{Share-List}]
    {\alpha_1, \alpha_2 ~\text{fresh}
    \\\alpha = \alpha_1 + \alpha_2}
    {\textit{share}~ \kwd{list}(\alpha) ~\textit{as}~\kwd{list}(\alpha_1), \kwd{list}(\alpha_2)}
    \end{mathpar}
    \judgement{Prophesying (Selected)}{$\textit{prophesy}~ \tau_\text{c} ~\textit{as}~\tau_\text{p}$}
    \begin{mathpar}
    \inferrule*[Right=\rulename{Prophesy-List}]
    {\alpha_\text{p}~\text{fresh}}
    {\textit{prophesy}~ \kwd{list}(\alpha) ~\textit{as}~ \kwd{list}(\alpha_\text{p})}
    \end{mathpar}
    \caption{Sharing and Prophesying}
    \label{fig:sta-sharing-prophesying}
\end{figure}

\cref{fig:sta-eval} presents how to type check expressions via judgement $\Gamma\vdash e\hookrightarrow \tau\dashv\Gamma'$. Unlike the dynamic evaluation $V\vdash e \rightsquigarrow v$, checking expressions may modify $\Gamma$ to the remainder context $\Gamma'$. 
Rule \rulename{$\Gamma$-Ev-Nil} introduces a fresh unknown potential annotation $\alpha$ for $\kwd{nil}$. 
Rule \rulename{$\Gamma$-Ev-Move} explicitly moves the type $\tau$ out from place $p$, making $\GWt{\Gamma}{p}{\bot}{\Gamma'}$. 

Shared and mutable borrows modify typing context, as illustrated in rule \rulename{$\Gamma$-Ev-Shared} and \rulename{$\Gamma$-Ev-Mutable} with sharing and prophesying. \cref{fig:sta-sharing-prophesying} selects essential rules of $\textit{share}~ \tau ~\textit{as}~ \tau_1, \tau_2$ and $\textit{prophesy}~ \tau ~\textit{as}~ \tau_\text{p}$ for borrows. 

\textbf{Shared borrows} are handled with sharing $\textit{share}~ \tau ~\textit{as}~ \tau_1, \tau_2$. Recall the example in \cref{fig:ex-sharing}. We select the rule \rulename{Share-List} to reveal the essence of sharing. Sharing is splitting resource annotation $\alpha$ into $\alpha_1$ and $\alpha_2$ with linear constraint $\alpha = \alpha_1 + \alpha_2$. In rule \rulename{$\Gamma$-Ev-Shared}, we write $\tau_1$ back to original place $p$, with $\tau_2$ lent out. There is no sharing of mutable borrows as $\textit{share}~ \&^\kwd{m}(\_, \_) ~\textit{as}~ \tau_1, \tau_2$, because a well-checked program will never incur shared borrows of mutable borrows $\&^\kwd{s}(\&^\kwd{m}(\tau_\text{c}, \tau_\text{p}))$.

\textbf{Mutable borrows} are handled with prophesying $\textit{prophesy}~ \tau ~\textit{as}~ \tau_\text{p}$. Recall the example in \cref{fig:ex-prophecy}. The selected rule \rulename{Prophesy-List} prophesy $\alpha_\text{p}$ as the prophecy potential when the mutable borrow ends. In rule \rulename{$\Gamma$-Ev-Mutable}, we write prophecy type $\tau_\text{p}$ to the place $p$. Once the borrow ends, the dropping condition $\vdash \&^\kwd{m}(\tau, \tau_\text{p})$ ensures that the prophecy type $\tau_\text{p}$ is bounded by current type $\tau$.

\subsection{Typing Statements} \label{sec:inference:exec}
\begin{figure}[t]
\small
    \judgement{Typing Statements (Selected)}{$\Gamma\vdash s \hookrightarrow^\delta \dashv\Gamma'$}
    \begin{mathpar}
    \inferrule*[Right=\rulename{$\Gamma$-Ex-Tick}]
    {~}
    {\Gamma\vdash\kwd{tick}(\delta)\hookrightarrow^\delta\vdash\Gamma}
    \and
    \inferrule*[Right=\rulename{$\Gamma$-Ex-Drop}]
    {\Gamma\vdash p\hookrightarrow \tau
    \\ \vdash \tau
    \\ \GWt{\Gamma}{p}{\bot}{Gamma'}
    }
    {\Gamma\vdash \kwd{drop}~p \hookrightarrow^0\dashv \Gamma'}
    \\

    \inferrule*[Right=\rulename{$\Gamma$-Ex-Cons}]
    {\Gamma\vdash e_1\hookrightarrow \kwd{i32} \dashv \Gamma_1
    \\ \Gamma_1\vdash e_2\hookrightarrow \kwd{box}(\kwd{list}(\alpha'))\dashv\Gamma_2
    \\ \GWt{\Gamma_2}{p}{\kwd{list}(\alpha')}{\Gamma'}}
    {\Gamma\vdash p\from \kwd{cons}(e_1, e_2)\hookrightarrow^{\alpha'}\dashv\Gamma'}
    \\

    \inferrule*[Right=\rulename{$\Gamma$-Ex-If}]
    {\Gamma\vdash p\hookrightarrow \kwd{bool}
    \\ \Gamma\vdash s_1\hookrightarrow^{\delta_1}\dashv\Gamma_1
    \\ \Gamma\vdash s_2\hookrightarrow^{\delta_2}\dashv\Gamma_2
    \\ \max(\delta_1, \delta_2)=\delta
    \\ \Gamma_1\sqcap\Gamma_2=\Gamma' }
    {\Gamma\vdash \kwd{if}~ p ~\kwd{then}~ s_1 ~\kwd{else}~ s_2 ~\kwd{end} \hookrightarrow^\delta \dashv\Gamma'}
    \\
    \inferrule*[Right=\rulename{$\Gamma$-Ex-Mat}]
    {\Gamma\vdash p\hookrightarrow \kwd{list}(\alpha)
    \\ \Gamma\vdash s_1\hookrightarrow^{\delta_1}\dashv\Gamma_1
    \\\\ \GWt{\Gamma}{p}{\bot}{\Gamma_{\text{b}, 1}}
    \\ \GWt{\Gamma_{\text{b}, 1}}{x_\text{hd}}{\kwd{i32}}{\Gamma_{\text{b}, 2}}
    \\ \GWt{\Gamma_{\text{b}, 2}}{x_\text{tl}}{\kwd{box}(\kwd{list}(\alpha))}{\Gamma_\text{b}}
    \\ \Gamma_\text{b}\vdash s_2\hookrightarrow^{\delta_2}\dashv\Gamma'_\text{b}
    \\\\ \Gamma'_\text{b}\vdash x_\text{tl}\hookrightarrow \kwd{list}(\beta)
    \\ \GWt{\Gamma'_\text{b}}{x_\text{hd}}{\bot}{\Gamma'_{\text{b}, 1}}
    \\ \GWt{\Gamma'_{\text{b}, 1}}{x_\text{tl}}{\bot}{\Gamma'_{\text{b}, 2}}
    \\ \GWt{\Gamma'_{\text{b}, 2}}{p}{\kwd{list}(\beta)}{\Gamma_2}
    \\\\ \max(\delta_1, \delta_2-(\alpha-\beta))=\delta
    \\ \Gamma_1\sqcap\Gamma_2=\Gamma'}
    {\Gamma\vdash \kwd{match}~ p ~ \{\kwd{nil}\mapsto s_1, \kwd{cons}(x_\text{hd}, x_\text{tl})\mapsto s_2\} \hookrightarrow^\delta \dashv\Gamma'}
    \\

    \inferrule*[Right=\rulename{$\Gamma$-Ex-App}]
    {\text{fn}~ f ~(\vec{x}_\text{param}:\vec{t}_\text{param}, \vec{x}_\text{local}:\vec{t}_\text{local}, x_\text{ret}:t_\text{ret}) \{~ s ~\}
    \\\\ \vdash f \Leftarrow (\Gamma_f, \delta_f)
    \\ \Gamma_f\vdash x_\text{ret} \hookrightarrow \tau_\text{ret}, (\forall x_i\in\vec{x}_\text{param}, i=1, ..., n) \Gamma_f \vdash x_i \hookrightarrow \tau_{\text{param}, i}
    \\\\ \Gamma_0=\Gamma, (\forall e_i\in \vec{e}, i=1, ..., n) \Gamma_{i-1}\vdash e_i\hookrightarrow\tau_{\text{arg}, i}\dashv\Gamma_i
    \\ (\forall i=1,..,n)~ \tau_{\text{param}, i} = \tau_{\text{arg}, i}
    \\ \Gamma_n \vdash p \hookrightarrow \tau
    \\ \vdash \tau
    \\ \GWt{\Gamma_n}{p}{\tau_\text{ret}}{\Gamma'}
    }
    {\Gamma\vdash p\from f(\vec{e})\hookrightarrow^{\delta_f}\dashv\Gamma'}
    \end{mathpar}
    \caption{Typing Statements}
    \label{fig:sta-exec}
\end{figure}

\begin{figure}[t]
\small
    \judgement{Function Analysis}{$\vdash f \Leftarrow (\Gamma_f, \delta_f)$}
    \begin{mathpar}
    \inferrule
    {\text{fn}~ f ~(\vec{x}_\text{param}:\vec{t}_\text{param}, \vec{x}_\text{local}:\vec{t}_\text{local}, x_\text{ret}:t_\text{ret}) \{~ s ~\}
    \\  \vdash f \Rightarrow (\Gamma_f, \delta_f)
    \\ \Gamma_f\vdash s\hookrightarrow^\delta\dashv\Gamma'_f
    \\\\ \forall x \in \textbf{dom}(\Gamma'_f), \vdash \Gamma'_f(x)
    \\ \Gamma'_f \vdash x_\text{ret} \hookrightarrow \tau'_\text{ret}
    \\ \Gamma_f \vdash x_\text{ret} \hookrightarrow \tau_\text{ret}
    \\ \tau'_\text{ret} = \tau_\text{ret}
    \\ \delta = \delta_f}
    {\vdash f \Leftarrow (\Gamma_f, \delta_f)}
    \end{mathpar}
    \caption{Function Analysis}
    \label{fig:fun-anal}
\end{figure}

\cref{fig:sta-exec} presents how to type check statements as judgement $\Gamma\vdash s \hookrightarrow^\delta \dashv\Gamma'$. Under context $\Gamma$, the statement $s$ is checked with resource consumption $\delta$, and context becomes $\Gamma'$. 

Rule \rulename{$\Gamma$-Ex-Tick} indicates $\kwd{tick}(\delta)$ consumes $\delta$ unit of resource. Rule \rulename{$\Gamma$-Ex-Drop} drops the type $\tau$ with well-formedness $\vdash\tau$ as the dropping condition. Rule \rulename{$\Gamma$-Ex-Cons} indicates that $\kwd{cons}$ will consume $\alpha$ unit of resource for continuation payment, when the tail $e_2$ is typed with $\kwd{box}(\kwd{list}(\alpha))$. 

\textbf{Branching statements} require context merging, as in \rulename{$\Gamma$-Ex-If} and \rulename{$\Gamma$-Ex-Mat}. Rule \rulename{$\Gamma$-Ex-If} is simpler to merge contexts with the consumption as the maximum of those branches. Rule \rulename{$\Gamma$-Ex-Mat} is more intricate, due to resource potential stored in \kwd{cons}. The $\kwd{cons}$ branch will obtain $\alpha-\beta$ units of potential, therefore the net consumption is $\delta_2-(\alpha-\beta)$. Given $\Gamma\vdash p \hookrightarrow \kwd{list}(\alpha)$, The potential is not $\alpha$ but $\alpha-\beta$. $\beta$ is the remainder potential, indicated by $\Gamma'_\text{b} \vdash x_\text{tl} \hookrightarrow \kwd{list}(\beta)$. The subscript $\text{b}$ of $\Gamma_\text{b}$ means \textbf{b}inding, similar to rule \rulename{V-Ex-Mat}.

\textbf{Function application} is intractable because of recursive functions. Rule \rulename{$\Gamma$-Ex-App} assumes the function $f$ has a well-checked signature $(\Gamma_f, \delta_f)$, with judgement $\vdash f \Leftarrow (\Gamma_f, \delta_f)$ in \cref{fig:fun-anal}, different from $\vdash f \Rightarrow (\Gamma_f, \delta_f)$. Other premises are to ensure that the resources of actual arguments are equal to those of formal parameters. 

\subsection{Type Inference} \label{sec:inference:infer}
To this point, our type system has been primarily declarative because the well-checked signature in rule \rulename{$\Gamma$-Ex-App} is assumed to be pre-existent. Same as other AARA systems (such as Resource-aware ML~\cite{RaML}), we use linear programming to convert the declarative type system to an algorithmic version. The type system creates symbolic variables to denote unknown annotations in rich types and signatures. The type system then collects linear constraints among those symbolic variables and finally solves them via linear programming solvers. 

Readers might have perceived that a recursive function requires a well-checked signature during checking and that a function can exhibit multiple signatures at different call sites. To automatically analyze functions, we need to preprocess the call graph. First, we group recursive functions as strongly connected components. Second, we topologically sort groups to determine an order to analyze. For each group, we predefine signatures of functions in the group via the judgement $\textit{enrich}~ t ~\textit{as}~ \tau$ in \cref{fig:enrich}. During function analysis, the signature $(\Gamma_f, \delta_f)$ in \rulename{$\Gamma$-Ex-App} should be replaced with the predefined one if $f$ is in the group. Otherwise, $f$ is in the previously analyzed group, so we should clone that group's signature and linear constraints. It is necessary to clone instead of copy them because annotations in signatures and constraints are sensitive to actual arguments of function calls.

With linear constraints collected during function analysis and a heuristic objective, we can employ a linear programming solver to find instances of annotations that satisfy those constraints automatically. The inferred annotations in signatures will characterize functions' resource consumption. 

\section{Soundness} \label{sec:soundness}

In this section, we define potential functions indicated by the resource-enriched types from \cref{sec:inference} in \cref{sec:pot-funcs} and then sketch the soundness proof of the resource-aware type system in \cref{sec:proof-sketch}. We include the detailed proofs in the \cref{sec:proof}.

\subsection{Potential Functions}
\label{sec:pot-funcs}
 

\begin{figure}[t]
\small
    \judgement{Potential Functions}{$\Phi(V:\Gamma) = \sum_{x\in\textbf{dom}(\Gamma)} \phi(V(x):\Gamma(x))$} \\
    \judgement{Potential Functions (Selected)}{$\phi(v:\tau)$}
    \begin{mathpar}
    \inferrule[$\phi$-Nil]
    {~}
    {\phi(\kwd{nil}:\kwd{list}(\alpha))=0}
    \and
    \inferrule[$\phi$-Cons]
    {~}
    {\phi(\kwd{cons}(\kwd{n}_\text{i32}, \kwd{box}(lv)):\kwd{list}(\alpha))=\alpha+\phi(\kwd{box}(lv):\kwd{box}(\kwd{list}(\alpha)))}
    \\
    \inferrule[$\phi$-Shared]
    {~}
    {\phi(\&(\_, v):\&^\kwd{s}(\tau))=\phi(v:\tau)}
    \and
    \inferrule[$\phi$-Mutable]
    {~}
    {\phi(\&(\_, v):\&^\kwd{m}(\tau_\text{c}, \tau_\text{p}))=\phi(v:\tau_\text{c})-\phi(v:\tau_\text{p})}
    \end{mathpar}
    \caption{Potential Functions}
    \label{fig:sound-potential}
\end{figure}

\cref{fig:sound-potential} defines potential function $\phi(v:\tau)$ and $\Phi(V:\Gamma)$.
\rulename{$\phi$-Nil} and \rulename{$\phi$-Cons} define the potential of a list $l : \kwd{list}(\alpha)$ with length $n$ to be $\alpha \cdot n$.
\rulename{$\phi$-Shared} defines the potential of shared borrows to be the potential of borrowed values and borrowed rich types. 
\rulename{$\phi$-Mutable} defines the potential of mutable borrows to be the difference between current and prophecy potential. It is worth noting that when the program incurs a mutable borrow, the potential of context will not change. The dropping condition $\tau_\text{p} \preceq \tau_\text{c}$ in $\vdash \&^\kwd{m}(\tau_\text{c}, \tau_\text{p})$, ensures the potential is non-negative. We, therefore have the following lemma about potential:

\begin{lemma}
    Potential is non-negative and keeps subtyping:
    \begin{mathpar}
    \inferrule
    {  \vdash \tau_1
    \\ \vdash \tau_2
    \\ \tau_1 \preceq \tau_2
    }
    { 0 \leq \phi(v:\tau_1) \leq \phi(v:\tau_2)}
    \end{mathpar}
\end{lemma}
\begin{proof}
    First define the size of rich types structural inductively $\mathbf{size} : \mathbf{RichType} \to \NN$, \ 
    and prove by induction on $\mathbf{size}(\tau_1) + \mathbf{size}(\tau_2)$. The well-formedness will be used to prove potential of mutable borrows is non-negative.
\end{proof}
\begin{corollary}
    Potential is non-negative $\dfrac{\vdash \tau}{0\leq \phi(v:\tau)}$ due to the derivable rule \rulename{S-Refl} $\dfrac{~}{\tau\preceq\tau}$.
\end{corollary}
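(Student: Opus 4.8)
The plan is to prove the two inequalities $0 \le \phi(v:\tau_1)$ and $\phi(v:\tau_1) \le \phi(v:\tau_2)$ simultaneously by well-founded induction. First I would fix the size measure $\mathbf{size}: \mathbf{RichType} \to \NN$ that the proof sketch alludes to, counting one for each type constructor and recursing into the components (so $\mathbf{size}(\&^\kwd{s}(\tau)) = 1 + \mathbf{size}(\tau)$ and $\mathbf{size}(\&^\kwd{m}(\tau_\text{c}, \tau_\text{p})) = 1 + \mathbf{size}(\tau_\text{c}) + \mathbf{size}(\tau_\text{p})$), and run the induction on $\mathbf{size}(\tau_1) + \mathbf{size}(\tau_2)$. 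The induction proceeds by case analysis on the last rule of the derivation $\tau_1 \preceq \tau_2$, which simultaneously pins down the shapes of $\tau_1$, $\tau_2$, and (up to their common underlying simple type) of $v$.

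The easy cases are \rulename{S-Int} and \rulename{S-Bool} (both potentials are $0$) and \rulename{S-List}, \rulename{S-Box}: there $\phi(l:\kwd{list}(\alpha)) = \alpha \cdot n$ for a list of length $n \ge 0$, so $\alpha_1 \ge 0$ from \rulename{WF-List} gives non-negativity while $\alpha_1 \le \alpha_2$ gives monotonicity. The \rulename{S-Shared} case unfolds $\phi$ through \rulename{$\phi$-Shared} to the borrowed value and appeals directly to the induction hypothesis on the strictly smaller pair of component types. The interesting case is \rulename{S-Mutable}: here $\phi(\&(\_,v'):\&^\kwd{m}(\tau_\text{c},\tau_\text{p})) = \phi(v':\tau_\text{c}) - \phi(v':\tau_\text{p})$, so non-negativity requires exactly the dropping condition $\tau_{\text{p},1} \preceq \tau_{\text{c},1}$ recorded by \rulename{WF-Mutable}, fed into the induction hypothesis on the smaller pair $(\tau_{\text{p},1},\tau_{\text{c},1})$ to conclude $\phi(v':\tau_{\text{p},1}) \le \phi(v':\tau_{\text{c},1})$; monotonicity combines the induction hypothesis on the covariant current components ($\tau_{\text{c},1} \preceq \tau_{\text{c},2}$) with the induction hypothesis on the contravariant prophecy components ($\tau_{\text{p},2} \preceq \tau_{\text{p},1}$) and adds the two inequalities, the sign flip on the prophecy side lining up precisely with the subtraction in \rulename{$\phi$-Mutable}.

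The step I expect to be the real obstacle is \rulename{S-Bot}, where $\tau_1 = \bot$. Non-negativity of $\phi(v:\bot) = 0$ is immediate, but the upper bound degenerates to the claim $\phi(v:\tau_2) \ge 0$, i.e.\ non-negativity of the \emph{supertype}. This is where non-negativity and monotonicity become genuinely mutually dependent: non-negativity of a mutable type needs monotonicity of its components, whereas monotonicity against $\bot$ needs non-negativity of the other side. I would discharge $\phi(v:\tau_2) \ge 0$ by recursing on the shape of $\tau_2$ at a strictly smaller size: for a shared $\tau_2 = \&^\kwd{s}(\sigma)$ apply the induction hypothesis to the pair $(\bot, \sigma)$; for a mutable $\tau_2 = \&^\kwd{m}(\tau_\text{c},\tau_\text{p})$ apply it to the pair $(\tau_\text{p}, \tau_\text{c})$ licensed by the dropping condition; for lists, boxes, and atoms the bound is direct. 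Each such appeal lands at a strictly smaller value of the size measure, so no circularity arises. An alternative worth keeping in mind is to split non-negativity off as a separate statement $\vdash \tau \Rightarrow 0 \le \phi(v:\tau)$ and prove the two by a single mutual induction on the shared natural-number measure; the dependency graph is well founded either way. Once the lemma holds, the \rulename{S-Refl} instance $\tau \preceq \tau$ yields the stated corollary immediately.
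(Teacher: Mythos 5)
Your proposal is correct, and for the corollary itself it takes exactly the paper's route: instantiate the preceding lemma at $\tau_1=\tau_2=\tau$ via the derivable rule \rulename{S-Refl}. Your reconstruction of the lemma also matches the paper's appendix proof almost step for step --- the same measure $\mathbf{size}$ with $\mathbf{size}(\&^\kwd{s}(\tau))=\mathbf{size}(\tau)+1$ and $\mathbf{size}(\&^\kwd{m}(\tau_\text{c},\tau_\text{p}))=\mathbf{size}(\tau_\text{c})+\mathbf{size}(\tau_\text{p})+1$, induction on $\mathbf{size}(\tau_1)+\mathbf{size}(\tau_2)$, case analysis on the last rule of $\tau_1\preceq\tau_2$, and the same decomposition of \rulename{S-Mutable} into the covariant current pair, the contravariant prophecy pair, and the dropping condition $\tau_\text{p}\preceq\tau_\text{c}$ for non-negativity.

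The one place you genuinely depart from the paper is the \rulename{S-Bot} case, and there your version is the better one. The appendix dismisses this case as vacuous, claiming \emph{exfalso} ``due to no derivation for $\vdash\bot$''; but the well-formedness judgement in \cref{fig:rich-type-wf} contains the axiom \rulename{WF-Bot} deriving $\vdash\bot$, and that axiom is load-bearing elsewhere (function analysis checks $\vdash\Gamma'_f(x)$ for every $x$, including moved-out variables whose type is $\bot$, and \rulename{$\Gamma$-Ev-Move} and \rulename{$\Gamma$-Ex-Drop} write $\bot$ into the context). So the paper's case analysis has a hole exactly where you predicted the real obstacle lies: with $\tau_1=\bot$ one must actually establish $\phi(v:\tau_2)\geq 0$, i.e.\ non-negativity of the supertype, which is where non-negativity and monotonicity become mutually dependent. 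Your discharge --- recursing on the shape of $\tau_2$ and invoking the induction hypothesis at the strictly smaller pairs $(\bot,\sigma)$ for shared borrows and $(\tau_\text{p},\tau_\text{c})$ for mutable borrows, the latter licensed by the dropping condition in \rulename{WF-Mutable} --- is sound, since the measure strictly decreases in each appeal, and it is the natural repair of the paper's argument; your alternative of splitting non-negativity into a separate statement proved by mutual induction on the same measure would work equally well. In short: same approach as the paper, plus a correct treatment of a case the paper's own proof gets wrong, and the corollary follows unchanged because the lemma's \emph{statement} (which is all the corollary uses) is true as given.
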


The subtyping relation and well-formedness can be extended to typing context, which can be conceptualized as record types. Lemmas about the lattice operation and potential can be proved with definition and simple induction.

\begin{definition}
    Subcontext $\Gamma_1 \subseteq \Gamma_2$ if and only if $\forall x \in \textbf{dom}(\Gamma_1), x \in \textbf{dom}(\Gamma_2), \Gamma_1(x)\preceq \Gamma_2(x)$. \\
    Context well formed $\vdash \Gamma$ if and only if $\forall x \in \textbf{dom}(\Gamma), \vdash \Gamma(x)$.
\end{definition}
\begin{lemma}[]
    For any store $V$ and any context $\Gamma_1, \Gamma_2$, 
    \begin{enumerate}
        \item {$\Gamma_1\sqcap\Gamma_2 \subseteq \Gamma_1$ and $\Gamma_1\sqcap\Gamma_2 \subseteq \Gamma_2$; }
        \item {if $\vdash \Gamma_1, \vdash \Gamma_2$, then $\vdash \Gamma_1\sqcap\Gamma_2$; }
        \item {if $\vdash \Gamma_1, \vdash \Gamma_2, \Gamma_1 \subseteq \Gamma_2$, then $0\leq \Phi(V:\Gamma_1) \leq \Phi(V:\Gamma_2)$. }
    \end{enumerate}
\end{lemma}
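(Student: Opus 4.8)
The plan is to reduce each of the three context-level claims to a pointwise claim about the rich types $\Gamma_1(x)$ and $\Gamma_2(x)$ at each $x \in \mathbf{dom}(\Gamma_1) = \mathbf{dom}(\Gamma_2)$, and then discharge those pointwise claims by structural induction on rich types. This reduction is immediate from the definitions: $\Gamma_1 \sqcap \Gamma_2$ is defined pointwise through $\cap$, the relation $\Gamma_1 \subseteq \Gamma_2$ is the pointwise lifting of $\preceq$, context well-formedness $\vdash \Gamma$ is the conjunction of the $\vdash \Gamma(x)$, and $\Phi(V:\Gamma) = \sum_{x} \phi(V(x):\Gamma(x))$ is a finite sum over a common domain. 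Hence it suffices to prove the type-level analogues: (i) $\tau_1 \cap \tau_2 \preceq \tau_i$ and dually $\tau_i \preceq \tau_1 \cup \tau_2$; (ii) $\vdash \tau_1$ and $\vdash \tau_2$ imply $\vdash \tau_1 \cap \tau_2$; and for (iii) the monotonicity of $\phi$, which is exactly the preceding lemma.

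For item (1) I would prove the meet-is-a-lower-bound and join-is-an-upper-bound statements \emph{simultaneously}, by one mutual induction on $\mathbf{size}(\tau_1) + \mathbf{size}(\tau_2)$. The atom, list, box, and shared-borrow cases are direct from the definitions of $\cap/\cup$ and the matching subtyping rules (for lists, $\min(\alpha_1,\alpha_2) \le \alpha_i \le \max(\alpha_1,\alpha_2)$ together with \rulename{S-List}; for shared borrows, the structural rule \rulename{S-Shared} plus the inner induction hypothesis). The reason the two directions cannot be separated is the mutable-borrow case: by \rulename{S-Mutable}, establishing $\&^\kwd{m}(\tau_\text{c},\tau_\text{p}) \preceq \&^\kwd{m}(\tau_{\text{c},1},\tau_{\text{p},1})$ requires $\tau_\text{c} \preceq \tau_{\text{c},1}$ on the covariant current component but $\tau_{\text{p},1} \preceq \tau_\text{p}$ on the contravariant prophecy component. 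Since \rulename{Meet-Mutable} sets $\tau_\text{c} = \tau_{\text{c},1}\cap\tau_{\text{c},2}$ and $\tau_\text{p} = \tau_{\text{p},1}\cup\tau_{\text{p},2}$, the first obligation is the meet-lower-bound hypothesis applied to the current components and the second is the join-upper-bound hypothesis applied to the prophecy components, so contravariance forces the two statements to be carried together through the induction.

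For item (2) I would again induct structurally. All cases other than the mutable borrow are routine: $\alpha \ge 0$ is closed under $\min$ (so \rulename{WF-List} and \rulename{WF-Box} are preserved), and $\cap$ commutes with the remaining constructors. The delicate case, and the step I expect to be the main obstacle, is $\&^\kwd{m}$: from $\vdash \&^\kwd{m}(\tau_{\text{c},i},\tau_{\text{p},i})$ for $i=1,2$ the induction hypothesis yields $\vdash \tau_\text{c}$ and $\vdash \tau_\text{p}$, but \rulename{WF-Mutable} additionally demands the dropping condition $\tau_\text{p} \preceq \tau_\text{c}$, i.e. $\tau_{\text{p},1}\cup\tau_{\text{p},2} \preceq \tau_{\text{c},1}\cap\tau_{\text{c},2}$. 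This is precisely where the within-branch side premises $\tau_{\text{p},1}\preceq\tau_{\text{c},1}$ and $\tau_{\text{p},2}\preceq\tau_{\text{c},2}$ of \rulename{Meet-Mutable}, combined with the lattice laws proven in item (1) (meet as greatest lower bound, join as least upper bound), must be brought to bear; I would discharge this inequality explicitly, as it is exactly the guarantee that the merged mutable-borrow type still denotes a non-negative potential, and I anticipate it to be the most demanding part of the whole argument.

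For item (3), with $\vdash \Gamma_1$, $\vdash \Gamma_2$, and $\Gamma_1 \subseteq \Gamma_2$, I would expand $\Phi$ as the sum over the common domain and apply the preceding monotonicity lemma termwise: for each $x$ we have $\vdash \Gamma_1(x)$, $\vdash \Gamma_2(x)$, and $\Gamma_1(x)\preceq\Gamma_2(x)$, whence $0 \le \phi(V(x):\Gamma_1(x)) \le \phi(V(x):\Gamma_2(x))$. Summing these inequalities over $x \in \mathbf{dom}(\Gamma_1)$ preserves both the non-negativity and the ordering, giving $0 \le \Phi(V:\Gamma_1) \le \Phi(V:\Gamma_2)$. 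This part is pure bookkeeping once the monotonicity lemma is available and requires no further induction.
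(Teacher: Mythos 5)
Your reduction to pointwise claims, and your treatment of items (1) and (3), are essentially the elaboration the paper intends (the paper offers no detailed proof, saying only that these lemmas follow ``with definition and simple induction''); in particular your observation that meet-is-a-lower-bound and join-is-an-upper-bound must be proved by one mutual induction, because \rulename{S-Mutable} is contravariant in the prophecy component, is exactly right. Two small points on item (3): the paper's definition of subcontext only forces $\mathbf{dom}(\Gamma_1)\subseteq\mathbf{dom}(\Gamma_2)$, not equality, so the extra summands of $\Phi(V:\Gamma_2)$ must be discharged by the non-negativity corollary rather than by a common-domain sum; this is cosmetic.

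Item (2), however, contains a genuine gap, and it is precisely at the step you flagged as the main obstacle: the within-branch premises $\tau_{\text{p},1}\preceq\tau_{\text{c},1}$ and $\tau_{\text{p},2}\preceq\tau_{\text{c},2}$ of \rulename{Meet-Mutable}, even combined with the glb/lub laws from item (1), do \emph{not} entail the dropping condition $\tau_{\text{p},1}\cup\tau_{\text{p},2}\preceq\tau_{\text{c},1}\cap\tau_{\text{c},2}$ required by \rulename{WF-Mutable}. Unwinding the lattice laws, that inequality is equivalent to the four inequalities $\tau_{\text{p},i}\preceq\tau_{\text{c},j}$ for $i,j\in\{1,2\}$, of which only the two diagonal ones are supplied; the cross-branch ones $\tau_{\text{p},1}\preceq\tau_{\text{c},2}$ and $\tau_{\text{p},2}\preceq\tau_{\text{c},1}$ are simply not available. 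Concretely, take $\Gamma_1(x)=\&^\kwd{m}(\kwd{list}(5),\kwd{list}(5))$ and $\Gamma_2(x)=\&^\kwd{m}(\kwd{list}(0),\kwd{list}(0))$: both types are well-formed and the premises of \rulename{Meet-Mutable} hold, yet the merge yields $\&^\kwd{m}(\kwd{list}(\min(5,0)),\kwd{list}(\max(5,0)))=\&^\kwd{m}(\kwd{list}(0),\kwd{list}(5))$, which violates $\tau_\text{p}\preceq\tau_\text{c}$ since $5\not\leq 0$. So the step you proposed to ``discharge explicitly'' cannot be discharged from the rules as written: closing it requires strengthening the merge with the cross-branch side conditions (equivalently, imposing well-formedness of the merged type as an additional emitted constraint, which is what the LP-based inference can do, since all annotations there are unknowns constrained jointly). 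Your instinct about where the difficulty lies was correct, but the proposed resolution is an implication that is false, not merely demanding.
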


\subsection{Soundness Theorem}
\label{sec:proof-sketch}

With potential, we are able to formulate soundness theorem, which states that resource consumption of dynamics, together with potential difference, is bounded by type system.
\begin{theorem}[Soundness]
Our type system is sound towards resource aware dynamic semantics: \\
If $V\vDash s \rightsquigarrow^{\delta_V} \Dashv V'$ and $\Gamma \vdash s \hookrightarrow^{\delta_\Gamma} \dashv \Gamma'$, then $\Phi(V':\Gamma') - \Phi(V:\Gamma)+\delta_V \leq \delta_\Gamma$.
\end{theorem}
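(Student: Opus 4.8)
The plan is to prove the soundness theorem by structural induction on the statement typing derivation $\Gamma \vdash s \hookrightarrow^{\delta_\Gamma} \dashv \Gamma'$, with a simultaneous case analysis on the dynamic execution derivation $V \vDash s \rightsquigarrow^{\delta_V} \Dashv V'$. The inequality to establish is the amortized soundness condition $\Phi(V':\Gamma') - \Phi(V:\Gamma) + \delta_V \leq \delta_\Gamma$, which says that the net potential change plus the actual resource consumed never exceeds the statically predicted cost. Before the main induction, I would want to establish a companion \emph{soundness-of-expressions} lemma for the judgement $\Gamma \vdash e \hookrightarrow \tau \dashv \Gamma'$ (paired with $V \vDash e \rightsquigarrow v$), stating that $\Phi(V':\Gamma') + \phi(v:\tau) = \Phi(V:\Gamma)$ or the corresponding $\leq$ inequality. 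This captures the key conservation invariant: expressions are resource-free, and operations like sharing and prophesying merely \emph{redistribute} potential between the borrowed value and the remainder context without creating or destroying it. I expect this expression lemma to be where the sharing rule ($\alpha = \alpha_1 + \alpha_2$ gives exact conservation) and the prophesying rule (where $\phi(\&(p,v):\&^\kwd{m}(\tau,\tau_\text{p})) = \phi(v:\tau) - \phi(v:\tau_\text{p})$ is designed precisely so the borrow carries potential $\phi(v:\tau) - \phi(v:\tau_\text{p})$ while $\tau_\text{p}$ remains in place) do their work.

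For the main induction, the leaf cases are the load-bearing sanity checks: for \rulename{$\Gamma$-Ex-Tick} paired with \rulename{V-Ex-Tick}, the store is unchanged so $\Phi(V':\Gamma') = \Phi(V:\Gamma)$ and $\delta_V = \delta_\Gamma = \delta$, giving equality; for \rulename{$\Gamma$-Ex-Drop} paired with \rulename{V-Ex-Drop}, the dropping condition $\vdash \tau$ together with the non-negativity corollary guarantees $\phi(v:\tau) \geq 0$, so erasing the type from the context can only \emph{lower} the potential, which is exactly the conservative direction needed for the $\leq$. The sequencing case $s_1; s_2$ composes two instances of the inductive hypothesis and adds the inequalities, using that resource costs and potential differences both telescope across the intermediate store and context. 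The two branching cases, \rulename{$\Gamma$-Ex-If} and \rulename{$\Gamma$-Ex-Mat}, are where the lattice machinery enters: after obtaining remainder contexts $\Gamma_1, \Gamma_2$ on the two paths, the rule returns $\Gamma' = \Gamma_1 \sqcap \Gamma_2$, and I would invoke part (1) of the context lattice lemma, $\Gamma_1 \sqcap \Gamma_2 \subseteq \Gamma_i$, together with part (3), $\Phi(V':\Gamma_1 \sqcap \Gamma_2) \leq \Phi(V':\Gamma_i)$, to bound the merged potential from above by whichever branch actually executed. Combined with $\delta = \max(\delta_1,\delta_2)$ (and, for the \kwd{cons} case, the careful $\delta_2 - (\alpha - \beta)$ bookkeeping that accounts for the $\alpha$ potential released by peeling a list node against the $\beta$ restored to the tail), the per-branch inductive hypothesis yields the result.

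The hardest case, and the main obstacle, will be \rulename{$\Gamma$-Ex-App} for function calls, especially recursive ones. Here the difficulty is twofold. First, soundness of the callee's body must already be available as the induction hypothesis, but the induction is on the \emph{caller}'s derivation while the callee's body is analyzed under its own signature $(\Gamma_f, \delta_f)$ via $\vdash f \Leftarrow (\Gamma_f, \delta_f)$; for recursion this is circular, so I would need to set up the argument so that the signature-consistency premises (requiring $\tau_{\text{param},i} = \tau_{\text{arg},i}$ and the returned type to match) let me transport the body's soundness to the call site without re-entering the body's derivation. Practically, I would either strengthen the statement to quantify over all call sites with a fixed well-checked signature, or prove a separate lemma that a well-checked signature is semantically sound and then apply it as a black box at each call. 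Second, the dynamic rule \rulename{V-Ex-App} threads values through the binding store $V_\text{b}$ and writes back $\bot$ to parameters and locals on return, so I must show the potential accounting matches: the arguments' potentials, transferred in via the enrich/signature setup, together with $\delta_f$, cover the callee's consumption and the returned value's potential, while the discarded local bindings contribute nothing because they are reset. Reconciling the \emph{immediate-update} semantics of mutable borrows (rule \rulename{V-Wt-Borrow} propagates writes through $\&(q,\cdot)$ eagerly) with the static view, in which a mutable borrow type $\&^\kwd{m}(\tau_\text{c}, \tau_\text{p})$ carries the difference $\phi(v:\tau_\text{c}) - \phi(v:\tau_\text{p})$ and the borrowed-from place still holds $\tau_\text{p}$, is the delicate invariant threading through this case and is where I expect the bulk of the proof effort to concentrate.
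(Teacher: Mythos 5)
Your overall lemma architecture matches the paper's: a conservation lemma for expressions (the paper's Evaluation lemma states exactly $\Phi(V:\Gamma')-\Phi(V:\Gamma) = -\phi(v:\tau)$, with the same store on both sides since expression evaluation does not modify $V$ --- your version with $V'$ is harmlessly sloppy), non-negativity of well-formed potentials discharging the \rulename{$\Gamma$-Ex-Drop} and assignment cases, and the lattice lemmas ($\Gamma_1\sqcap\Gamma_2 \subseteq \Gamma_i$ plus monotonicity of $\Phi$ on subcontexts) for the two branching rules, including the $\delta_2-(\alpha-\beta)$ bookkeeping in \rulename{$\Gamma$-Ex-Mat}. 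All of that is sound and is essentially what the paper does.

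The genuine gap is your choice of induction measure. You induct on the typing derivation $\Gamma \vdash s \hookrightarrow^{\delta_\Gamma} \dashv \Gamma'$, but that derivation is not well-founded at recursive calls: the rule \rulename{$\Gamma$-Ex-App} has $\vdash f \Leftarrow (\Gamma_f,\delta_f)$ as a premise, whose own premise is the typing of the body $\Gamma_f \vdash s_f \hookrightarrow^{\delta}\dashv \Gamma'_f$, which for a recursive $f$ again contains \rulename{$\Gamma$-Ex-App} with the same signature judgement --- the paper explicitly treats the well-checked signature as ``assumed to be pre-existent'' (\cref{sec:inference:infer}), so there is no strictly smaller typing subderivation to recurse on. You correctly sense the circularity, but the two patches you sketch (quantifying over call sites with a fixed signature, or a standalone ``well-checked signatures are semantically sound'' lemma) are precisely the unsolved hard part: any proof of that auxiliary lemma must itself induct on \emph{something} that shrinks across a recursive call, and the only available candidate is the big-step evaluation derivation. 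That is exactly the paper's move: it inducts on $V\vDash s \rightsquigarrow^{\delta_V} \Dashv V'$, so in the \rulename{V-Ex-App} case the callee's body execution $V_\text{b}\vDash s_f \rightsquigarrow^{\delta_V} \Dashv V'_\text{b}$ is a strict subderivation and the inductive hypothesis applies directly, with no circularity even for recursion. The remaining mismatch --- the body is typed under $\Gamma_f$ but executes under the binding context $\Gamma_\text{b}$ --- is closed by the paper's context-extension Weakening lemma ($\Gamma$-Ex-Weaken), using $\Gamma_f \sqsubseteq \Gamma_\text{b}$, which follows from the premises $\tau_{\text{param},i} = \tau_{\text{arg},i}$; your proposal gestures at ``signature-consistency premises'' but never identifies this lemma, and without it the transport step in the App case does not go through. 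Flipping your induction to the dynamic derivation (keeping your lemmas) repairs the proof and reduces it to the paper's.
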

\begin{proof}
By induction on $V\vDash s \rightsquigarrow^{\delta_V} \Dashv V'$, with the help of following lemmas.
\end{proof}


\begin{lemma}[Update]
If store or context is written with new value or new type, the difference of potential over store or context is equal to that over new value or new type.
\begin{enumerate}
    \item {If $V\vDash p\rightsquigarrow v$, $\Gamma\vdash p\hookrightarrow \tau$ and $\VWt{V}{p}{v'}{V'}$,
    then $\Phi(V':\Gamma) - \Phi(V:\Gamma)=\Phi(v':\tau)-\Phi(v:\tau)$;}
    \item {If $V\vDash p\rightsquigarrow v$, $\Gamma\vdash p\hookrightarrow \tau$ and $\GWt{\Gamma}{p}{\tau'}{\Gamma'}$, 
    then $\Phi(V:\Gamma')-\Phi(V:\Gamma)=\Phi(v:\tau')-\Phi(v:\tau)$.}
\end{enumerate}
\end{lemma}
\begin{proof}
By induction on $\VWt{V}{p}{v'}{V'}$ and $\GWt{\Gamma}{p}{\tau'}{\Gamma'}$.
\end{proof}
\begin{lemma}[Evaluation]
If $V\vDash e\rightsquigarrow v, \Gamma\vdash e\hookrightarrow \tau\dashv\Gamma'$, then $\Phi(V:\Gamma')-\Phi(V:\Gamma)= -\phi(v:\tau)$.
\end{lemma}
\begin{proof}
By induction on $e$.
\end{proof}

Towards soundness proof for statements, especially for the rule \rulename{$\Gamma$-Ex-App}, we will need one series of lemmas about context extension. Though similar, context extension is different from subcontext. Context extension necessitates strict type equality, whereas subcontext only demands subtyping. The weakening rules are intuitive about context extension, as they merely involve adding variables and types that the program will not utilize. 

\begin{definition}
    Extension $\Gamma_1\sqsubseteq\Gamma_2$ if and only if $\forall x \in\textbf{dom}(\Gamma_1), x\in\textbf{dom}(\Gamma_2), \Gamma_1(x) = \Gamma_2(x)$.
\end{definition}
\begin{lemma}[Weakening]
In following rules, $\dfrac{A ~ B}{C ~ D}$ means if $A$ and $B$ then $C$ and $D$.
\begin{mathpar}
    \inferrule[$\Gamma$-Rd-Weaken]
    {\Gamma_1 \sqsubseteq \Gamma_2 
    \\ \Gamma_1 \vdash p \hookrightarrow \tau_1
    }
    {\Gamma_2 \vdash p \hookrightarrow \tau_2
    \\ \tau_1 = \tau_2
    }
    \and
    \inferrule[$\Gamma$-Wt-Weaken]
    {\Gamma_1 \sqsubseteq \Gamma_2
    \\ \GWt{\Gamma_1}{p}{\tau_1}{\Gamma'_1}
    \\ \tau_1 = \tau_2
    }
    { \GWt{\Gamma_2}{p}{\tau_2}{\Gamma'_2}
    \\ \Gamma'_1 \sqsubseteq \Gamma'_2 
    }
    \\
    \inferrule[$\Gamma$-Ev-Weaken]
    {\Gamma_1 \sqsubseteq \Gamma_2
    \\ \Gamma_1 \vdash e \hookrightarrow \tau_1 \dashv \Gamma'_1 
    }
    {\Gamma_2 \vdash e \hookrightarrow \tau_2 \dashv \Gamma'_2
    \\ \tau_1 = \tau_2 
    \\ \Gamma'_1 \sqsubseteq \Gamma'_2
    }
    \and
    \inferrule[$\Gamma$-Ex-Weaken]
    {\Gamma_1 \sqsubseteq \Gamma_2
    \\ \Gamma_1 \vdash s \hookrightarrow^{\delta_1} \dashv \Gamma'_1
    }
    { \Gamma_2 \vdash s \hookrightarrow^{\delta_2} \dashv \Gamma'_2
    \\ \delta_1 = \delta_2
    \\ \Gamma'_1 \sqsubseteq \Gamma'_2
    }
\end{mathpar}
\end{lemma}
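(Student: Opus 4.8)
The plan is to prove all four statements by a single mutual induction, layered so that the place-level facts (items~1 and~2) feed the expression-level fact (item~3), which in turn feeds the statement-level fact (item~4). The structural observation that drives every case is that $\Gamma_1 \sqsubseteq \Gamma_2$ asserts \emph{exact} type equality on every variable of $\mathbf{dom}(\Gamma_1)$; since reading, writing, sharing, prophesying, and merging all act pointwise and only ever touch variables already occurring in $\Gamma_1$, each operation yields identical results under $\Gamma_1$ and $\Gamma_2$ while leaving the extra variables of $\Gamma_2$ untouched. Consequently the ``weakened'' derivation is obtained by replaying exactly the same rule instances, and the claimed equalities $\tau_1=\tau_2$, $\delta_1=\delta_2$ hold syntactically.

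First I would establish \rulename{$\Gamma$-Rd-Weaken} and then \rulename{$\Gamma$-Wt-Weaken} by induction on the structure of the place $p$ (equivalently, on the reading/writing derivation). In the base case $p=x$ the variable lies in $\mathbf{dom}(\Gamma_1)$, so $\Gamma_1(x)=\Gamma_2(x)$ gives $\tau_1=\tau_2$ for reading, and \rulename{$\Gamma$-Wt-Var} modifies only the slot $x$ identically in both contexts, preserving $\sqsubseteq$. The dereference cases $p=*p'$ — rules \rulename{$\Gamma$-Rd-Box}/\rulename{$\Gamma$-Wt-Box}, \rulename{$\Gamma$-Rd-Shared}/\rulename{$\Gamma$-Wt-Shared}, and \rulename{$\Gamma$-Rd-Mutable}/\rulename{$\Gamma$-Wt-Mutable} — first read $p'$ (handled by the reading lemma, giving an equal intermediate type) and then write back to $p'$ (handled by the writing induction hypothesis), so $\sqsubseteq$ is threaded through unchanged; the dropping side-condition $\vdash\tau_\text{c}$ in \rulename{$\Gamma$-Wt-Mutable} depends only on the type read, which is identical on both sides.

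Next I would prove \rulename{$\Gamma$-Ev-Weaken} by induction on the typing derivation of $e$, using the place lemmas for the read/write sub-steps. The informative cases are \rulename{$\Gamma$-Ev-Shared} and \rulename{$\Gamma$-Ev-Mutable}: reading $p$ returns the same $\tau$ under both contexts, so $\textit{share}$ (resp.\ $\textit{prophesy}$) is replayed with the \emph{same} fresh annotations to produce identical $\tau_1,\tau_2$ (resp.\ $\tau_\text{p}$), and writing the residual type back to $p$ preserves $\sqsubseteq$ by \rulename{$\Gamma$-Wt-Weaken}; the returned borrow type is therefore syntactically equal. Rules that mint annotations directly, such as \rulename{$\Gamma$-Ev-Nil}, are handled by fixing the same fresh choice in both derivations, the standard convention for weakening proofs. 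Finally I would prove \rulename{$\Gamma$-Ex-Weaken} by induction on the derivation of $s$, calling the place and expression lemmas for the leaf operations and the induction hypothesis for sub-statements. The crux is the branching rules \rulename{$\Gamma$-Ex-If} and \rulename{$\Gamma$-Ex-Mat}, where the per-branch remainders are combined by the meet $\sqcap$; here I would first prove the auxiliary commutation fact that if $\Delta_1\sqsubseteq\hat\Delta_1$ and $\Delta_2\sqsubseteq\hat\Delta_2$ then $\Delta_1\sqcap\Delta_2\sqsubseteq\hat\Delta_1\sqcap\hat\Delta_2$. This holds because $\mathbf{dom}(\Delta_1\sqcap\Delta_2)=\mathbf{dom}(\Delta_1)$, on which $\Delta_i$ and $\hat\Delta_i$ agree exactly, and $\tau_1\cap\tau_2$ is computed pointwise, so the two merges coincide on the smaller domain. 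For \rulename{$\Gamma$-Ex-Mat} I would additionally thread the binding writes for $x_\text{hd}$ and $x_\text{tl}$ (both in $\mathbf{dom}(\Gamma_1)$) through \rulename{$\Gamma$-Wt-Weaken} before applying the induction hypothesis to $s_2$, and observe that the net cost built from $\alpha,\beta$ read off equal types is identical; the application rule \rulename{$\Gamma$-Ex-App} is comparatively easy, since the signature $(\Gamma_f,\delta_f)$ is independent of the ambient context, leaving only the argument evaluations and the final write, which the expression and write lemmas already cover.

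The main obstacle I anticipate is not any individual case but the discipline of the mutual dependency — place lemmas nested inside the expression lemma nested inside the statement lemma, with the branching case additionally requiring the merge-commutation fact — together with the need to fix freshly generated annotations consistently across the two derivations, so that the asserted syntactic equalities $\tau_1=\tau_2$ and $\delta_1=\delta_2$ genuinely hold rather than holding only up to renaming.
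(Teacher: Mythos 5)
Your proposal is correct and follows essentially the same route as the paper, whose proof is exactly the induction you describe: on the derivations of $\Gamma_1\vdash p\hookrightarrow\tau_1$, $\GWt{\Gamma_1}{p}{\tau_1}{\Gamma'_1}$, $\Gamma_1\vdash e\hookrightarrow\tau_1\dashv\Gamma'_1$, and $\Gamma_1\vdash s\hookrightarrow^{\delta_1}\dashv\Gamma'_1$, layered so the place-level claims feed the expression- and statement-level ones. The details you add beyond the paper's one-line proof---the commutation of $\sqcap$ with context extension for the branching rules, and fixing fresh annotations consistently across the two derivations so that $\tau_1=\tau_2$ holds syntactically---are exactly the facts the paper leaves implicit and would need.
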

\begin{proof}
    By induction on $\Gamma_1\vdash p\hookrightarrow \tau_1$, $\GWt{\Gamma_1}{p}{\tau_1}{\Gamma'_1}$, $\Gamma_1\vdash e\hookrightarrow\tau_1\dashv\Gamma'_1$, $\Gamma_1\vdash s\hookrightarrow^{\delta_1}\dashv\Gamma'_1$.
\end{proof}

\section{Experimental Evaluation} \label{sec:impl}

In this section, we present an experimental evaluation of \rarust{}.
\cref{sec:proto} describes our prototype implementation of \rarust{}.
\cref{sec:eval} presents the evaluation results of \rarust{} on a suite of benchmarks.

\subsection{Implementation}
\label{sec:proto}


We have implemented a prototype linear resource analyzer, \rarust{}. It takes raw Rust programs (within only $\kwd{tick}$ annotation) as inputs, and prints functions' signatures with resource annotation as output.
\rarust{} analyzes the whole program regardless of whether there are annotations or not. We currently use explicit manually annotated $\kwd{tick}$ as the cost model, but it is straightforward to support parametric cost models, which assign a cost to each kind of statements, as prior AARA systems (e.g., \textsc{RaML}~\cite{RaML}) do.
(1) \rarust{} first obtains the typed IR of the borrow calculus with explicit drops via Charon \cite{Aeneas} as a plugin of Rust compiler. Rust compiler guarantees that the compiled IR is well-checked, and \rarust{} will utilize properties straightforwardly. 
(2) Towards IR, \rarust{} analyzes strongly connected components of function call graph and topologically sort components, generating precedence of type checking. 
(3) \rarust{} enriches function signatures with fresh linear variables as resource annotation and assigns each component a linear programming context to record linear constraints. 
(4) As stated in \cref{sec:inference:infer}, \rarust{} type-checks functions' bodies, generating linear constraints.
(5) \rarust{} finally solves these constraints, with its heuristic linear objective, by invoking linear programming solvers. We can utilize different solvers due to the separation of (4) and (5), and we select CoinCbc \cite{CoinCbc} for demonstration.

Our implementation supports some features, based on core calculus formalized in \cref{sec:calculus} and some practical extensions. We list as follows:
\begin{enumerate}
    \item{
    \rarust{} supports reborrows and nested borrows like \lstinline|& & T|, \lstinline|&mut & T| and \lstinline|&mut &mut T|.
    }
    \item {
    \rarust{} supports user-defined data types instead of only built-in lists, as \cref{tab:user-defined} shows. 
    We annotate potential for each constructor, \lstinline|Nil|, \lstinline|Cons|, \lstinline|Leaf|, \lstinline|Node|, \lstinline|(_, _)|, \lstinline|Record(_, _)|, \lstinline|NListNode| and \lstinline|NList|. It is worth noting that \lstinline|NListNode| and \lstinline|NList| are mutually recursive.
    }
    \item {
    \rarust{} supports looping statements including $\kwd{while true}(s), \kwd{break}(i), \kwd{continue}(i)$, where $i$ represents the $i$-th loop outward from $\kwd{break}$ or $\kwd{continue}$. \rarust{} supports this directly without transforming loops into recursive functions.
    }
\end{enumerate}

\begin{DIFnomarkup}
\begin{table}[t]
\centering
\caption{User-defined Data Types} \label{tab:user-defined}
\footnotesize
\begin{tabular}{|l|l|}
\hline
\begin{lstlisting}[language=Rust, style=colouredRust]
pub enum List {
    Nil,
    Cons(i32, Box<List>),
}
\end{lstlisting} 
&
\begin{lstlisting}[language=Rust, style=colouredRust]
pub enum Tree {
    Leaf,
    Node(i32, Box<Tree>, Box<Tree>),
}
\end{lstlisting} 
\\
\hline
\begin{lstlisting}[language=Rust, style=colouredRust]
pub type Tuple = (List, Tree);
\end{lstlisting}
&
\begin{lstlisting}[language=Rust, style=colouredRust]
pub struct Record{pub l:List, pub t:Tree}
\end{lstlisting}
\\
\hline
\begin{lstlisting}[language=Rust, style=colouredRust]
pub struct NListNode {
    pub value : i32,
    pub next : NList
}
\end{lstlisting} 
&
\begin{lstlisting}[language=Rust, style=colouredRust]
pub enum NList {
    None,
    Some(Box<NListNode>)
}
\end{lstlisting} 
\\
\hline
\end{tabular}
\end{table}
\end{DIFnomarkup}

\subsection{Evaluation}
\label{sec:eval}

\begin{DIFnomarkup}
\begin{table}[t]
    \centering
    \caption{Benchmarks}
    \label{tab:eval}
    \small
    \scalebox{0.67}{
    \begin{tabular}{|c|c|c|c|c|c|}
    \hline
    case & type & description & complexity & lines of code & size of constraints  \\
    \hline
    \multicolumn{6}{|c|}{\textbf{feature(s): mutable borrows}} \\ \hdashline
    end\_m & {\lstinline|fn(l:&m List)->&m List|} & find the mutable borrow of the {\lstinline|Nil|} of a list & $1+3|l|$ & 14 & 71 \\
    end\_c & {\lstinline|fn(l:&m List,o:&m&m List)|} & c style end\_m to write to {\lstinline|o|} & $2+3|l|$ & 13 & 96 \\ 
    append & {\lstinline|fn(l:&m List,x:i32)|} & append $x$ to $l$ by {\lstinline|end_m|} & $5+3|l|$ & 5 & 89 \\
    concat & {\lstinline|fn(l1:&m List,l2:List)|} & $l'_1 = l_1@l_2$ by {\lstinline|end_c|} & $6+3|l_1|$ & 8 & 144 \\
    \hline
    \multicolumn{6}{|c|}{\textbf{feature(s): shared borrows, mutable borrows, recursive functions, and loop statements}} \\ \hdashline
    sum\_rec & {\lstinline|fn(l:& List)->i32|} & sum up integers, recursively & $1 + 6|l|$ & 14 & 19 \\
    sum\_loop & {\lstinline|fn(l:& List)->i32|} & sum up integers, via loops & $1 + 6|l|$ & 24 & 54 \\
    
    rev\_rec & {\lstinline|fn(l:&m List, r: List)->List|} & reverse $l$ to head of $r$ & $1 + 9|l|$ & 15 & 57 \\
    
    rev\_loop & {\lstinline|fn(l:&m List)|} & reverse $l$ mutably via loops & $1 + 9|l|$ & 21 & 164 \\
    \hline
    \multicolumn{6}{|c|}{\textbf{feature(s): function calls}} \\ \hdashline
    sum2 & {\lstinline|fn(l:& List)|} & {\lstinline|sum_rec(l);sum_loop(l)|} & $2 + 12|l|$ & 4 & 82 \\
    rev & {\lstinline|fn(l:&m List)|} & reverse $l$ mutably via {\lstinline|rev_rec|} & $4 + 9|l|$ & 5 & 71 \\
    rev2 & {\lstinline|fn(l:&m List)|} & apply {\lstinline|rev|} to $l$ twice & $8 + 18|l|$ & 4 & 170 \\
    dup & {\lstinline|fn(l: List)->List|} & duplicate each element in $l$ & $1+11|l|$ & 16 & 37 \\
    dup2 & {\lstinline|fn(l: List)->List|} & apply {\lstinline|dup|} to $l$ twice & $2+33|l|$ & 4 & 81 \\
    \hline
    \multicolumn{6}{|c|}{\textbf{feature(s): reborrow and nested borrows}} \\ \hdashline
    reborrow\_s & {\lstinline|fn(l:& List)|} & reborrow $l$ as $ll$, {\lstinline|sum2(ll);sum2(l)|}& $4+24|l|$ & 6 & 176 \\
    reborrow\_m & {\lstinline|fn(l:&mut List)|} & reborrow $l$ as $ll$, {\lstinline|rev2(ll);rev2(l)|}& $16+36|l|$ & 6 & 364 \\
    nested\_s\_s & {\lstinline|fn(l:& & List)|} & {\lstinline|sum2(*l);|} & $2+12|l|$ & 4 & 88 \\
    nested\_m\_s & {\lstinline|fn(l:&m & List)|} & {\lstinline|sum2(*l);|} & $2+12|l|$ & 4 & 90 \\
    nested\_m\_m & {\lstinline|fn(l:&m &m List)|} & {\lstinline|rev2(*l);|} & $8+18|l|$ & 4 & 188 \\
    \hline
    \multicolumn{6}{|c|}{\textbf{feature(s): user-defined datatypes}} \\ \hdashline
    min & {\lstinline|fn(t:&Tree,d:i32)->i32|} & find min in $t$, $d$ as default & $1+5|t|$ & 14 & 19 \\
    max & {\lstinline|fn(t:&Tree,d:i32)->i32|} & find max in $t$, $d$ as default & $1+5|t|$ & 14 & 19 \\
    find & {\lstinline|fn(t:&Tree,x:i32)->bool|} & find whether $x$ is in $t$ & $1+8|t|$ & 29 & 31 \\
    add & {\lstinline|fn(t:&m Tree,x:i32)|} & add up $x$ to each element of $t$ & $1+10|t|$ & 15 & 53\\
    tuple & {\lstinline|fn(x:&m Tuple)|} & {\lstinline|rev2(x.0);min(x.1, 0);|} & $9+18|x.0|+5|x.1|$ & 4 & 216 \\
    record & {\lstinline|fn(x:&m Record)|} & {\lstinline|rev2(x.l);min(x.t, 0);|} & $9+18|x.l|+5|x.t|$ & 4 & 216 \\
    sum\_rec\_nlist & {\lstinline|fn(l:&NList) -> i32|} & sum up {\lstinline|NList|} as \lstinline|sum_rec| & $1+5|l|$ & 16 & 26 \\
    rev\_rec\_nlist & {\lstinline|fn(l:&mut NList,r:NList)->NList| } & reverse {\lstinline|NList|} as \lstinline|rev_rec| & $1+7|l|$ & 21 & 89 \\
    \hline
    \end{tabular}
    }
\end{table}
\end{DIFnomarkup}

We used the prototype implementation of \rarust{} to perform an experimental evaluation of resource analysis on typical examples concerning Rust's borrow mechanism. We adapted several pure functional examples from \textsc{RaML} to their Rust counterparts employing borrows; by using borrows we can do in-place updates in Rust. Some examples were deliberately crafted to demonstrate the prototype’s capability to support the aforementioned features. Due to the linear limitation, we select those examples with linear complexity. The experiments were performed on a laptop with 2.20 GHz Intel Core i9-13900HX as CPU and WSL 2.3.24.0/Ubuntu 22.04.3 LTS as operation system. The compiling of the benchmarks was done in 0.15s and the analysis was done in 0.3s.

\cref{tab:eval} shows our experimental results. We manually checked the analysis results on the benchmarks and confirmed that the derived bounds are correct (but not tight for some benchmarks such as \lstinline|min| and \lstinline|max|). We encode some cases with the derived bounds in \textsc{Flux}~\cite{Flux} to check that the derived bounds are correct. Our encoding introduces a global counter to accumulate resource consumption. The encoding is shipped into the artifact and it includes cases \lstinline|append|, \lstinline|concat|, \lstinline|sum_rec|, \lstinline|rev_rec|, \lstinline|sum2|, \lstinline|rev|, \lstinline|rev2|, \lstinline|dup|, \lstinline|dup2|, \lstinline|min|, \lstinline|max|, \lstinline|find|, \lstinline|add|, \lstinline|tuple|, \lstinline|record|, \lstinline|sum_rec_nlist|, and \lstinline|rev_rec_nlist|.

\cref{tab:eval} gives out 5 groups of test cases, and each group exercises some features. For each analyzed function as a case, \cref{tab:eval} first declares the function type in a simplified syntax of Rust, writing \lstinline|&mut T| as \lstinline|&m T| for short. The description column provides a concise explanation of the function's semantics. We plot the complexity in a more readable format in the table, where $|l|$ represents the length or count of \lstinline|Cons| constructors of $l$ and $|t|$ represents the count of \lstinline|Node| constructors of the tree $t$. The concrete coefficients are inferred by \rarust{} according to annotation $\kwd{tick}(\delta)$ in examples. In our evaluation, we set those different concrete numbers of $\delta$ for two purposes: (i) we roughly add one $\kwd{tick}(\delta)$ around one statement to account for the number of operations by the statement, and (ii) we can use different numbers to test multiple times to check if our implementation is correct.
We will explain each group in detail in the rest of this section.

To show that \rarust{} can handle mutable borrows, we construct cases \lstinline|end_m| and \lstinline|end_c|. They are recursively to find the mutable borrow of the nil of a list $l$. For example, in ML syntax, the nil of the list \verb|1::2::3::4::[]| is \verb|[]|. \lstinline|end_m| returns the borrow, while  \lstinline|end_c| storing it in the parameter \lstinline|o:&m&m List|. The returned mutable borrow of \lstinline|end_m| works as a closure function with type \lstinline|List->List|, therefore it is non-trivial for resource analysis. We use cases \lstinline|append| and \lstinline|concat| to show the resource correctness as well as the compositionality of the analysis.

Rust programmers are able to write code with loop statements. We construct cases \lstinline|sum_rec|, \lstinline|sum_loop|, \lstinline|rev_rec| and \lstinline|rev_loop|. We focus on shared borrows in \lstinline|sum| and on mutable borrows in \lstinline|rev|. The suffix \lstinline|rec| means recursive function and \lstinline|loop| means loop statements \lstinline|while true { ... }|. The same analysis results reveal that both are supported by \rarust{}. 

We construct multiple calls of function for shared borrows and mutable borrows, to demonstrate sharing and prophesying. The suffix \lstinline|2| means twice in cases \lstinline|sum2|, \lstinline|rev2| and \lstinline|dup2|. The coefficients in the complexity of \lstinline|sum2| are exactly 2 times of \lstinline|sum|, testing the sharing for shared borrows. \lstinline|rev2| is similar but for the prophesying of mutable borrows. \lstinline|dup2| is made to point out the difference between sharing and prophesying. The function \lstinline|dup| mutates the length of list $l$, therefore the second call of \lstinline|dup| is with length $2|l|$, making the linear coefficient of \lstinline|dup2| be $33$, 3 times $11$.

\rarust{} also supports reborrows and nested borrows. We construct cases \lstinline|reborrow_s|, \lstinline|reborrow_m|, \lstinline|nested_s_s|, \lstinline|nested_m_s| and \lstinline|nested_m_m|. The suffix \lstinline|s| denotes shared borrows, while \lstinline|m| denotes mutable borrows.

Besides \lstinline|List|, we construct simple examples for other safe user-defined data types like trees, tuples, records, and C-style lists. The sizes of trees are the counts of internal nodes \lstinline|Node|, instead of the intuitive measure, their depths. \lstinline|Tuple| and \lstinline|Record| are data types to compose \lstinline|List| and \lstinline|Tree|, with complexity as the linear composition of their fields', such as \lstinline|x.0| and \lstinline|x.t|. \rarust{} also supports mutually recursive data types like \lstinline|NList| and \lstinline|NListNode|; they are C-style lists, with the former as the nullable pointer, the latter as the internal node of lists.
\section{Discussion}
\label{sec:discussion}

In this section, we discuss some limitations of \rarust{} mentioned in \cref{section:introduction} and propose possible pathways towards overcoming them to improve the capability of \rarust{} in future work.

\paragraph{Unsafe code, interior mutability, vectors, reference counting, and cyclic data structures}
We focus on safe Rust programs because our design of \rarust{} relies on guarantees provided by Rust's borrow mechanisms, e.g., aliasing and mutation cannot happen simultaneously.
However, Rust programs cannot avoid unsafe code in general, because many standard libraries---including cells (\verb|Cell|), vectors (\verb|Vec|), and reference counting (\verb|Rc|)---rely on unsafe code to allow shared mutable states, e.g., interior mutability.
The unsafe code can operate C-style pointers in an unrestricted way and compromise Rust's memory safety; as a result, \rarust{}'s resource analysis cannot handle unsafe code.
This is actually a common limitation of advanced type systems and verification frameworks for Rust, including Flux~\cite{Flux}, Aeneas~\cite{Aeneas}, and Prusti~\cite{OOPSLA:AMP19}.
Nevertheless, there have been efforts to support unsafe code in formal reasoning about Rust programs.
RustBelt~\cite{RustBelt} pioneers a line of work on semantic typing and separation-logic-based verification of Rust programs with unsafe code. 
Verus~\cite{OOPSLA:LHC23} supports some unsafe features by providing specifications for unsafe memory operations to be memory safe and employing SMT solvers to check those specifications automatically.
However, it is unclear if one can integrate AARA type systems with those techniques.

One common method to support unsafe code is based on Rust's design philosophy: unsafe operations should be properly \emph{encapsulated} by safe APIs, and the developers of those unsafe operations take charge of ensuring the unsafe code does not break Rust's memory safety.
In terms of type systems, this amounts to assigning types to the safe APIs instead of inferring types from the unsafe code body.
Therefore, it would be possible for \rarust{} to analyze Rust programs with unsafe code, if all the unsafe code is encapsulated by resource-annotated safe APIs.
Rust's \verb|Cell| features interior mutability by providing operations for both getting and setting the content of a memory location.
At the API level, the \verb|Cell| type works similarly to references in an ML-like functional programming language, so it would be possible to adapt an AARA approach for supporting references~\cite{FSCD:LH17}.
For example, the resource-annotated APIs shown below can be used to manipulate cells storing lists:
\begin{align*}
    \texttt{new}: & \kwd{fn}(\texttt{l}: \kwd{list}(\alpha) ) \to \texttt{Cell<}\kwd{list}(\alpha)\texttt{>} | 0, \\
    \texttt{replace} : & \kwd{fn}(\texttt{self}: \&\texttt{Cell<}\kwd{list}(\alpha)\texttt{>}, \texttt{l}: \kwd{list}(\alpha) ) \to \kwd{list}(\alpha) | 0 ,
\end{align*}
in the sense that the potential type $\kwd{list}(\alpha)$ is an invariant for a cell type, and operations should maintain the invariant, e.g., \verb|replace| should store another list of the same type $\kwd{list}(\alpha)$.
Rust's \verb|Vec| also makes use of unsafe code to allow accessing uninitialized memory.
At the API level, we can treat vectors as abstract dynamic arrays, which fit nicely into the AARA framework because of their amortized complexity.
For example, we can declare the following APIs for integer vectors: 
\begin{align*}
    \texttt{new}: & \kwd{fn}() \to \texttt{Vec<}\kwd{i32},\alpha\texttt{>} | 0, \\
    \texttt{push} : & \kwd{fn}(\texttt{self}: \&\kwd{mut}~\texttt{Vec<}\kwd{i32},\alpha\texttt{>}, \texttt{n}: \kwd{i32} ) \to () | \alpha + 4,
\end{align*}
where the resource-annotated type $\texttt{Vec<}\kwd{i32},\alpha\texttt{>}$ denotes the potential function $\mathit{len} \cdot \alpha + (4 \cdot \mathit{len} - 2 \cdot \mathit{cap})$, with $\mathit{len}$ and $\mathit{cap}$ being the length and capacity of the vector, respectively. 
Intuitively, the potential function states that every vector element carries $\alpha$ units of potential and we need to store $(4 \cdot \mathit{len} - 2 \cdot \mathit{cap})$ units of extra potential for vector resizing, which would consume $2 \cdot \mathit{len}$ units of resource to extend the vector's capacity when the vector becomes full.

Rust's implementation of \verb|Rc| uses unsafe code, so we would annotate \verb|Rc| APIs with resource-annotated types. \verb|Rc| itself does not permit mutation, so we could model its behavior as if it is a shared reference: \verb|Rc::new()| should store potentials (e.g., \verb|Rc<list(4)>|) and \verb|Rc::clone()| should split potentials (e.g., splitting \verb|Rc<list(4)>| as \verb|Rc<list(1)>| and \verb|Rc<list(3)>|). We have not yet considered multithreading, and supporting \verb|Arc| would be interesting future work.

It is non-trivial to handle cyclic data structures. Rust provides \verb|Weak| pointers to accompany \verb|Rc| pointers. However, creating cyclic data structures usually requires using interior mutability (e.g., \verb|Cell| or \verb|RefCell|). The interaction between reference counting and interior mutability seems quite non-trivial. In the future, we may adapt \citet{ESOP:Atkey10}'s work on integrating AARA with separation logic.

\paragraph{Generic types, higher-order functions (closures), and trait objects}
Current \rarust{} does not support generic types like \verb|List<T>|. This is not a fundamental limitation, because we can always instantiate generic types. We will spare engineering efforts to support them in the future.

Our work currently only considers top-level functions, but Rust does support higher-order functions and closures to enable functional programming style. 
Fortunately, many AARA approaches support higher-order functions~\cite{AARA-HigherOrder,POPL:HDW17,ICFP:KWR20,ICFP:KH21}.
Conceptually,
it would be possible to adapt AARA's methodology of handling closures to \rarust{}
by extending the type system to deal with \emph{capturing} properly.
One simple extension is to enforce that closures cannot consume potentials stored in captured variables; in this way, it is sound to apply a closure multiple times.
It would be interesting future research to investigate how the interaction of borrow mechanisms (especially mutable borrows) and closures would affect AARA-style resource analysis.

Rust supports a form of dynamic dispatch through trait objects, in which the compiler knows an object's trait but not its actual type. One possible workaround is to annotate trait methods with resource annotations and require them not to change the resource type of \verb|Self|. In this way, even though we do not know an object's type, we know how calling its trait methods affects the resource-annotated context. Another possibility is to adapt \citet{AARA-OOP}'s work on integrating AARA with objective-oriented programming. 

\paragraph{Non-linear resource bounds}
Both our formalization and implementation of \rarust{} currently only consider linear resource bounds, which are too restrictive to analyze real-world programs.
Current \rarust{} can only support pattern matching of one single variable, without primitive support for pattern matching of tuple types, because tuple types usually introduce multivariate polynomial resource bounds like $\textit{first} \times \textit{second}$. 
This is not a fundamental limitation because it has been shown that AARA type systems can support polynomial bounds~\cite{AARA-Poly,AARA-Poly-Multivar}, exponential bounds~\cite{AARA-Exp}, logarithmic bounds~\cite{AARA-Log,CAV:LMZ21}, and value-dependent bounds~\cite{ICFP:KWR20,PLDI:KWP19}.
All of those approaches amount to devising proper type annotations that specify particular kinds of potential functions and developing constraint-based type-inference algorithms.
We plan to spare engineering efforts to extend our prototype implementation of \rarust{} to support various classes of resource bounds in the future.

\section{Related Work} \label{sec:related}

In this section, we discuss related work that has not been covered in previous sections.

\paragraph{Static resource analysis}
AARA is not the only approach for type-based resource analysis.
For example, there are approaches based on sized types~\cite{phd:Vasconcelos08,ICFP:AL17}, dependent types~\cite{LICS:LG11,POPL:LP13,POPL:RGG21}, refinement types~\cite{POPL:HVH20,POPL:CBG17,ESOP:CGA15,OOPSLA:WWC17},
recurrence extraction~\cite{POPL:KML20,ICFP:DLR15}, logical frameworks~\cite{POPL:NSG22,POPL:GNS24}, and annotated types~\cite{POPL:CW00,POPL:Danielsson08}.
None of the aforementioned type systems considered supporting heap-manipulating programs with Rust's borrow mechanisms.
Besides type systems, there are also static resource-analysis techniques based on
defunctionalization~\cite{ICFP:ALM15}, recurrence relations~\cite{JAR:AAG11,TACAS:AFR15,APLAS:FH14,PLDI:BCK20,POPL:KBC19,PLDI:KBB17},
term rewriting~\cite{RTA:AM13,TACAS:BEG14,IJCAR:FNH16,JAR:NEG13}, and
abstract interpretation~\cite{SAS:ZSG11,LPAR:BHH10,CAV:SZV14,kn:DHW07,misc:fbinfer20,SAS:AG12}.
Some of the aforementioned techniques work on imperative programs (e.g., C programs) with arrays,
such as C4B~\cite{PLDI:CHS15}, SPEED~\cite{POPL:GMC09}, COSTA~\cite{JAR:AAG11}, ICRA~\cite{PLDI:KBB17}, etc., but none of them considered exploiting Rust's borrow mechanisms in the design.
It would be interesting future research direction to investigate how different static resource-analysis techniques can benefit from Rust's safety guarantees.


\paragraph{Graded type system}
Graded types \cite{Granule} introduce a graded modality for associating types with elements from a resource algebra. 
A graded type system can account for program variables' exact usages, security levels, and potentials (conceptually). 
Graded types seem to provide a more general mechanism than AARA types to reason about more general resources. 
On the other hand, our work focuses on how Rust's borrow mechanisms can aid resource analysis and chooses AARA as the starting point because AARA admits efficient type inference via linear programming. 

\paragraph{Program verification for Rust}
RustBelt~\cite{RustBelt} pioneers a line of work to use semantic typing and separation logic to verify Rust programs with both safe and unsafe code.
%
%
RustHorn~\cite{RustHorn} uses prophecy variables to model the future values of mutable borrows and proposes an automatic verification algorithm based on constrained Horn clauses.
%
Aeneas~\cite{Aeneas}---which inspires our development of RABC and supports our prototype implementation---uses LLBC to translate Rust programs into equivalent pure functional programs via symbolic execution.
Such pure functional programs can be ported into theorem provers such as Coq and F* to enable verification of functional correctness.
Prusti~\cite{OOPSLA:AMP19} also leverages Rust's advanced type system to devise a modular and automated verification approach.
\citet{master:Engel21} proposed a method to verify user-provided asymptotic resource bounds in Prusti.
In contrast, our work focuses on the automatic inference of concrete resource bounds via a type system.
%

\section{Conclusion} \label{sec:conclusion}

In this paper, we propose \rarust{}, an AARA-style type-based automatic resource analysis of Rust programs.
Our design of \rarust{} exploits the memory-safety guarantees provided by Rust's borrow mechanisms.
Especially, we propose shared potentials and novel prophecy potentials to deal with shared and mutable borrows in a sound and compositional manner.
We formulate a resource-aware dynamic semantics for \rarust{} and use that to prove the soundness of the type system.
Our prototype implementation of \rarust{} automatically infers linear resource bounds for well-typed and well-borrowed Rust programs.
In the future, we plan to extend \rarust{} to support unsafe code, higher-order functions, and non-linear resource bounds.


\section*{Data-Availability Statement}
The source code of the \rarust{} implementation and benchmarks referenced in \cref{sec:impl} are publicly available in the Zenodo \cite{RaRustArtifact}. The artifact contains necessary scripts and step-by-step guides to reproduce the experimental results.

\begin{acks}
We are grateful to Xuanyu Peng for the early discussion and investigation. We would like to thank the anonymous reviewers for their valuable feedback on our paper and the anonymous artifact reviewers for their suggestions for our artifact.
\end{acks}

\bibliographystyle{ACM-Reference-Format}
\bibliography{main,db}


\begin{thebibliography}{67}


\ifx \showCODEN    \undefined \def \showCODEN     #1{\unskip}     \fi
\ifx \showDOI      \undefined \def \showDOI       #1{#1}\fi
\ifx \showISBNx    \undefined \def \showISBNx     #1{\unskip}     \fi
\ifx \showISBNxiii \undefined \def \showISBNxiii  #1{\unskip}     \fi
\ifx \showISSN     \undefined \def \showISSN      #1{\unskip}     \fi
\ifx \showLCCN     \undefined \def \showLCCN      #1{\unskip}     \fi
\ifx \shownote     \undefined \def \shownote      #1{#1}          \fi
\ifx \showarticletitle \undefined \def \showarticletitle #1{#1}   \fi
\ifx \showURL      \undefined \def \showURL       {\relax}        \fi
\providecommand\bibfield[2]{#2}
\providecommand\bibinfo[2]{#2}
\providecommand\natexlab[1]{#1}
\providecommand\showeprint[2][]{arXiv:#2}

\bibitem[Abadi and Lamport(1988)]%
        {LICS:AL88}
\bibfield{author}{\bibinfo{person}{Mart{\'i}n Abadi} {and} \bibinfo{person}{Leslie Lamport}.} \bibinfo{year}{1988}\natexlab{}.
\newblock \showarticletitle{{The Existence of Refinement Mappings}}. In \bibinfo{booktitle}{\emph{Logic in Computer Science}} \emph{(\bibinfo{series}{LICS'88})}. \bibinfo{pages}{165--175}.
\newblock
\urldef\tempurl%
\url{https://doi.org/10.1109/LICS.1988.5115}
\showDOI{\tempurl}


\bibitem[Albert et~al\mbox{.}(2011)]%
        {JAR:AAG11}
\bibfield{author}{\bibinfo{person}{Elvira Albert}, \bibinfo{person}{Puri Arenas}, \bibinfo{person}{Samir Genaim}, {and} \bibinfo{person}{Germ{\'a}n Puebla}.} \bibinfo{year}{2011}\natexlab{}.
\newblock \showarticletitle{{Closed-Form Upper Bounds in Static Cost Analysis}}.
\newblock \bibinfo{journal}{\emph{J.\ Automated Reasoning}}  \bibinfo{volume}{46} (\bibinfo{date}{February} \bibinfo{year}{2011}), \bibinfo{pages}{161--203}.
\newblock
\urldef\tempurl%
\url{https://doi.org/10.1007/s10817-010-9174-1}
\showDOI{\tempurl}


\bibitem[Albert et~al\mbox{.}(2015)]%
        {TACAS:AFR15}
\bibfield{author}{\bibinfo{person}{Elvira Albert}, \bibinfo{person}{Jes{\'u}s~Correas Fern{\'a}ndez}, {and} \bibinfo{person}{Guillermo Rom{\'a}n-D{\'i}ez}.} \bibinfo{year}{2015}\natexlab{}.
\newblock \showarticletitle{{Non-Cumulative Resource Analysis}}. In \bibinfo{booktitle}{\emph{Tools and Algs.\ for the Construct.\ and Anal.\ of Syst.}} \emph{(\bibinfo{series}{TACAS'15})}. \bibinfo{pages}{85--100}.
\newblock
\urldef\tempurl%
\url{https://doi.org/10.1007/978-3-662-46681-0_6}
\showDOI{\tempurl}


\bibitem[Alonso-Blas and Genaim(2012)]%
        {SAS:AG12}
\bibfield{author}{\bibinfo{person}{Diego~Esteban Alonso-Blas} {and} \bibinfo{person}{Samir Genaim}.} \bibinfo{year}{2012}\natexlab{}.
\newblock \showarticletitle{{On the Limits of the Classical Approach to Cost Analysis}}. In \bibinfo{booktitle}{\emph{Static Analysis Symp.}} \emph{(\bibinfo{series}{SAS'12})}. \bibinfo{pages}{405--421}.
\newblock
\urldef\tempurl%
\url{https://doi.org/10.1007/978-3-642-33125-1_27}
\showDOI{\tempurl}


\bibitem[Astrauskas et~al\mbox{.}(2019)]%
        {OOPSLA:AMP19}
\bibfield{author}{\bibinfo{person}{Vytautas Astrauskas}, \bibinfo{person}{Peter M{\"u}ller}, \bibinfo{person}{Federico Poli}, {and} \bibinfo{person}{Alexander~J. Summers}.} \bibinfo{year}{2019}\natexlab{}.
\newblock \showarticletitle{{Leveraging Rust Types for Modular Specification and Verification}}.
\newblock \bibinfo{journal}{\emph{Proc.\ ACM Program.\ Lang.}} \bibinfo{volume}{3}, \bibinfo{number}{147} (\bibinfo{date}{October} \bibinfo{year}{2019}), \bibinfo{pages}{147:1--147:30}.
\newblock
\urldef\tempurl%
\url{https://doi.org/10.1145/3360573}
\showDOI{\tempurl}


\bibitem[Atkey(2010)]%
        {ESOP:Atkey10}
\bibfield{author}{\bibinfo{person}{Robert Atkey}.} \bibinfo{year}{2010}\natexlab{}.
\newblock \showarticletitle{{Amortised Resource Analysis with Separation Logic}}. In \bibinfo{booktitle}{\emph{European Symp.\ on Programming}} \emph{(\bibinfo{series}{ESOP'10})}. \bibinfo{pages}{85--103}.
\newblock
\urldef\tempurl%
\url{https://doi.org/10.1007/978-3-642-11957-6_6}
\showDOI{\tempurl}


\bibitem[Avanzini and Lago(2017)]%
        {ICFP:AL17}
\bibfield{author}{\bibinfo{person}{Martin Avanzini} {and} \bibinfo{person}{Ugo~Dal Lago}.} \bibinfo{year}{2017}\natexlab{}.
\newblock \showarticletitle{{Automating Sized-type Inference for Complexity Analysis}}.
\newblock \bibinfo{journal}{\emph{Proc.\ ACM Program.\ Lang.}} \bibinfo{volume}{1}, \bibinfo{number}{43} (\bibinfo{date}{August} \bibinfo{year}{2017}), \bibinfo{pages}{43:1--43:29}.
\newblock
Issue ICFP.
\urldef\tempurl%
\url{https://doi.org/10.1145/3110287}
\showDOI{\tempurl}


\bibitem[Avanzini et~al\mbox{.}(2015)]%
        {ICFP:ALM15}
\bibfield{author}{\bibinfo{person}{Martin Avanzini}, \bibinfo{person}{Ugo~Dal Lago}, {and} \bibinfo{person}{Georg Moser}.} \bibinfo{year}{2015}\natexlab{}.
\newblock \showarticletitle{{Analysing the Complexity of Functional Programs: Higher-Order Meets First-Order}}. In \bibinfo{booktitle}{\emph{Int.\ Conf.\ on Functional Programming}} \emph{(\bibinfo{series}{ICFP'15})}. \bibinfo{pages}{152--164}.
\newblock
\urldef\tempurl%
\url{https://doi.org/10.1145/2784731.2784753}
\showDOI{\tempurl}


\bibitem[Avanzini and Moser(2013)]%
        {RTA:AM13}
\bibfield{author}{\bibinfo{person}{Martin Avanzini} {and} \bibinfo{person}{Georg Moser}.} \bibinfo{year}{2013}\natexlab{}.
\newblock \showarticletitle{{A Combination Framework for Complexity}}. In \bibinfo{booktitle}{\emph{Rewriting Techniques and Applications}} \emph{(\bibinfo{series}{RTA'13})}. \bibinfo{pages}{55--70}.
\newblock
\urldef\tempurl%
\url{https://doi.org/10.4230/LIPIcs.RTA.2013.55}
\showDOI{\tempurl}


\bibitem[Blanc et~al\mbox{.}(2010)]%
        {LPAR:BHH10}
\bibfield{author}{\bibinfo{person}{R{\'e}gis Blanc}, \bibinfo{person}{Thomas~A. Henzinger}, \bibinfo{person}{Thibaud Hottelier}, {and} \bibinfo{person}{Laura Kov{\'a}cs}.} \bibinfo{year}{2010}\natexlab{}.
\newblock \showarticletitle{{ABC: Algebraic Bound Computation for Loops}}. In \bibinfo{booktitle}{\emph{Logic for Programming Artificial Intelligence and Reasoning}} \emph{(\bibinfo{series}{LPAR'10})}. \bibinfo{pages}{103--118}.
\newblock
\urldef\tempurl%
\url{https://doi.org/10.1007/978-3-642-17511-4_7}
\showDOI{\tempurl}


\bibitem[Breck et~al\mbox{.}(2020)]%
        {PLDI:BCK20}
\bibfield{author}{\bibinfo{person}{Jason Breck}, \bibinfo{person}{John Cyphert}, \bibinfo{person}{Zachary Kincaid}, {and} \bibinfo{person}{Thomas Reps}.} \bibinfo{year}{2020}\natexlab{}.
\newblock \showarticletitle{{Templates and Recurrences: Better Together}}. In \bibinfo{booktitle}{\emph{Prog.\ Lang.\ Design and Impl.}} \emph{(\bibinfo{series}{PLDI'20})}. \bibinfo{pages}{688--702}.
\newblock
\urldef\tempurl%
\url{https://doi.org/10.1145/3385412.3386035}
\showDOI{\tempurl}


\bibitem[Brockschmidt et~al\mbox{.}(2014)]%
        {TACAS:BEG14}
\bibfield{author}{\bibinfo{person}{Marc Brockschmidt}, \bibinfo{person}{Fabian Emmes}, \bibinfo{person}{Stephan Falke}, \bibinfo{person}{Carsten Fuhs}, {and} \bibinfo{person}{J{\"u}rgen Giesl}.} \bibinfo{year}{2014}\natexlab{}.
\newblock \showarticletitle{{Alternating Runtime and Size Complexity Analysis of Integer Programs}}. In \bibinfo{booktitle}{\emph{Tools and Algs.\ for the Construct.\ and Anal.\ of Syst.}} \emph{(\bibinfo{series}{TACAS'14})}.
\newblock
\urldef\tempurl%
\url{https://doi.org/10.1007/978-3-642-54862-8_10}
\showDOI{\tempurl}


\bibitem[Carbonneaux et~al\mbox{.}(2015)]%
        {PLDI:CHS15}
\bibfield{author}{\bibinfo{person}{Quentin Carbonneaux}, \bibinfo{person}{Jan Hoffmann}, {and} \bibinfo{person}{Zhong Shao}.} \bibinfo{year}{2015}\natexlab{}.
\newblock \showarticletitle{{Compositional Certified Resource Bounds}}. In \bibinfo{booktitle}{\emph{Prog.\ Lang.\ Design and Impl.}} \emph{(\bibinfo{series}{PLDI'15})}.
\newblock
\urldef\tempurl%
\url{https://doi.org/10.1145/2737924.2737955}
\showDOI{\tempurl}


\bibitem[{\c C}i{\c c}ek et~al\mbox{.}(2017)]%
        {POPL:CBG17}
\bibfield{author}{\bibinfo{person}{Ezgi {\c C}i{\c c}ek}, \bibinfo{person}{Gilles Barthe}, \bibinfo{person}{Marco Gaboardi}, \bibinfo{person}{Deepak Garg}, {and} \bibinfo{person}{Jan Hoffmann}.} \bibinfo{year}{2017}\natexlab{}.
\newblock \showarticletitle{{Relational Cost Analysis}}. In \bibinfo{booktitle}{\emph{Princ.\ of Prog.\ Lang.}} \emph{(\bibinfo{series}{POPL'17})}. \bibinfo{pages}{316--329}.
\newblock
\urldef\tempurl%
\url{https://doi.org/10.1145/3009837.3009858}
\showDOI{\tempurl}


\bibitem[{\c C}i{\c c}ek et~al\mbox{.}(2015)]%
        {ESOP:CGA15}
\bibfield{author}{\bibinfo{person}{Ezgi {\c C}i{\c c}ek}, \bibinfo{person}{Deepak Garg}, {and} \bibinfo{person}{Umut Acar}.} \bibinfo{year}{2015}\natexlab{}.
\newblock \showarticletitle{{Refinement Types for Incremental Computational Complexity}}. In \bibinfo{booktitle}{\emph{European Symp.\ on Programming}} \emph{(\bibinfo{series}{ESOP'15})}. \bibinfo{pages}{406--431}.
\newblock
\urldef\tempurl%
\url{https://doi.org/10.1007/978-3-662-46669-8_17}
\showDOI{\tempurl}


\bibitem[Crary and Weirich(2000)]%
        {POPL:CW00}
\bibfield{author}{\bibinfo{person}{Karl Crary} {and} \bibinfo{person}{Stephnie Weirich}.} \bibinfo{year}{2000}\natexlab{}.
\newblock \showarticletitle{{Resource Bound Certification}}. In \bibinfo{booktitle}{\emph{Princ.\ of Prog.\ Lang.}} \emph{(\bibinfo{series}{POPL'00})}. \bibinfo{pages}{184--198}.
\newblock
\urldef\tempurl%
\url{https://doi.org/10.1145/325694.325716}
\showDOI{\tempurl}


\bibitem[Danielsson(2008)]%
        {POPL:Danielsson08}
\bibfield{author}{\bibinfo{person}{Nils~Anders Danielsson}.} \bibinfo{year}{2008}\natexlab{}.
\newblock \showarticletitle{{Lightweight Semiformal Time Complexity Analysis for Purely Functional Data Structures}}. In \bibinfo{booktitle}{\emph{Princ.\ of Prog.\ Lang.}} \emph{(\bibinfo{series}{POPL'08})}. \bibinfo{pages}{133--144}.
\newblock
\urldef\tempurl%
\url{https://doi.org/10.1145/1328438.1328457}
\showDOI{\tempurl}


\bibitem[Danner et~al\mbox{.}(2015)]%
        {ICFP:DLR15}
\bibfield{author}{\bibinfo{person}{Norman Danner}, \bibinfo{person}{Daniel~R. Licata}, {and} \bibinfo{person}{Ramyaa Ramyaa}.} \bibinfo{year}{2015}\natexlab{}.
\newblock \showarticletitle{{Denotational Cost Semantics for Functional Languages with Inductive Types}}. In \bibinfo{booktitle}{\emph{Int.\ Conf.\ on Functional Programming}} \emph{(\bibinfo{series}{ICFP'15})}. \bibinfo{pages}{140--151}.
\newblock
\urldef\tempurl%
\url{https://doi.org/10.1145/2784731.2784749}
\showDOI{\tempurl}


\bibitem[Di~Pierro et~al\mbox{.}(2007)]%
        {kn:DHW07}
\bibfield{author}{\bibinfo{person}{Alessandra Di~Pierro}, \bibinfo{person}{Chris Hankin}, {and} \bibinfo{person}{Herbert Wiklicky}.} \bibinfo{year}{2007}\natexlab{}.
\newblock \showarticletitle{{Abstract Interpretation for Worst and Average Case Analysis}}.
\newblock In \bibinfo{booktitle}{\emph{Program Analysis and Compilation, Theory and Practice}}. \bibinfo{publisher}{Springer}.
\newblock
\urldef\tempurl%
\url{https://doi.org/10.1007/978-3-540-71322-7_8}
\showDOI{\tempurl}


\bibitem[Engel(2021)]%
        {master:Engel21}
\bibfield{author}{\bibinfo{person}{Lowis Engel}.} \bibinfo{year}{2021}\natexlab{}.
\newblock \emph{\bibinfo{title}{{Reasoning about Complexities in a Rust Verifier}}}.
\newblock \bibinfo{thesistype}{Master's\ thesis}. \bibinfo{school}{ETH Z{\"u}rich}.
\newblock


\bibitem[Flores-Montoya and H{\"a}hnle(2014)]%
        {APLAS:FH14}
\bibfield{author}{\bibinfo{person}{Antonio Flores-Montoya} {and} \bibinfo{person}{Reiner H{\"a}hnle}.} \bibinfo{year}{2014}\natexlab{}.
\newblock \showarticletitle{{Resource Analysis of Complex Programs with Cost Equations}}. In \bibinfo{booktitle}{\emph{Asian Symp.\ on Prog.\ Lang.\ and Systems}} \emph{(\bibinfo{series}{APLAS'14})}. \bibinfo{pages}{275--295}.
\newblock
\urldef\tempurl%
\url{https://doi.org/10.1007/978-3-319-12736-1_15}
\showDOI{\tempurl}


\bibitem[Forrest et~al\mbox{.}(2024)]%
        {CoinCbc}
\bibfield{author}{\bibinfo{person}{John Forrest}, \bibinfo{person}{Ted Ralphs}, \bibinfo{person}{Stefan Vigerske}, \bibinfo{person}{Haroldo~Gambini Santos}, \bibinfo{person}{John Forrest}, \bibinfo{person}{Lou Hafer}, \bibinfo{person}{Bjarni Kristjansson}, \bibinfo{person}{jpfasano}, \bibinfo{person}{EdwinStraver}, \bibinfo{person}{Jan-Willem}, \bibinfo{person}{Miles Lubin}, \bibinfo{person}{rlougee}, \bibinfo{person}{a andre}, \bibinfo{person}{jpgoncal1}, \bibinfo{person}{Samuel Brito}, \bibinfo{person}{h-i gassmann}, \bibinfo{person}{Cristina}, \bibinfo{person}{Matthew Saltzman}, \bibinfo{person}{tosttost}, \bibinfo{person}{Bruno Pitrus}, \bibinfo{person}{Fumiaki MATSUSHIMA}, \bibinfo{person}{Patrick Vossler}, \bibinfo{person}{Ron~@ SWGY}, {and} \bibinfo{person}{to st}.} \bibinfo{year}{2024}\natexlab{}.
\newblock \bibinfo{booktitle}{\emph{coin-or/Cbc: Release releases/2.10.12}}.
\newblock
\urldef\tempurl%
\url{https://doi.org/10.5281/zenodo.13347261}
\showDOI{\tempurl}


\bibitem[Frohn et~al\mbox{.}(2016)]%
        {IJCAR:FNH16}
\bibfield{author}{\bibinfo{person}{Florian Frohn}, \bibinfo{person}{Matthias Naaf}, \bibinfo{person}{Jera Hensel}, \bibinfo{person}{Marc Brockschmidt}, {and} \bibinfo{person}{J{\"u}rgen Giesl}.} \bibinfo{year}{2016}\natexlab{}.
\newblock \showarticletitle{{Lower Runtime Bounds for Integer Programs}}. In \bibinfo{booktitle}{\emph{Int.\ Joint Conf.\ on Automated Reasoning}} \emph{(\bibinfo{series}{IJCAR'16})}.
\newblock
\urldef\tempurl%
\url{https://doi.org/10.1007/978-3-319-40229-1_37}
\showDOI{\tempurl}


\bibitem[Grodin et~al\mbox{.}(2024)]%
        {POPL:GNS24}
\bibfield{author}{\bibinfo{person}{Harrison Grodin}, \bibinfo{person}{Yue Niu}, \bibinfo{person}{Jonathan Sterling}, {and} \bibinfo{person}{Robert Harper}.} \bibinfo{year}{2024}\natexlab{}.
\newblock \showarticletitle{{Decalf: A Directed, Effectful Cost-Aware Logical Framework}}.
\newblock \bibinfo{journal}{\emph{Proc.\ ACM Program.\ Lang.}} \bibinfo{volume}{8}, \bibinfo{number}{10} (\bibinfo{date}{January} \bibinfo{year}{2024}), \bibinfo{pages}{273--301}.
\newblock
\urldef\tempurl%
\url{https://doi.org/10.1145/3632852}
\showDOI{\tempurl}


\bibitem[Gulwani et~al\mbox{.}(2009)]%
        {POPL:GMC09}
\bibfield{author}{\bibinfo{person}{Sumit Gulwani}, \bibinfo{person}{Krishna~K. Mehra}, {and} \bibinfo{person}{Trishul Chilimbi}.} \bibinfo{year}{2009}\natexlab{}.
\newblock \showarticletitle{{SPEED: Precise and Efficient Static Estimation of Program Computational Complexity}}. In \bibinfo{booktitle}{\emph{Princ.\ of Prog.\ Lang.}} \emph{(\bibinfo{series}{POPL'09})}. \bibinfo{pages}{127--139}.
\newblock
\urldef\tempurl%
\url{https://doi.org/10.1145/1594834.1480898}
\showDOI{\tempurl}


\bibitem[Gäher et~al\mbox{.}(2024)]%
        {RefinedRust}
\bibfield{author}{\bibinfo{person}{Lennard Gäher}, \bibinfo{person}{Michael Sammler}, \bibinfo{person}{Ralf Jung}, \bibinfo{person}{Robbert Krebbers}, {and} \bibinfo{person}{Derek Dreyer}.} \bibinfo{year}{2024}\natexlab{}.
\newblock \showarticletitle{RefinedRust: A Type System for High-Assurance Verification of Rust Programs}, Vol.~\bibinfo{volume}{8}. \bibinfo{publisher}{Association for Computing Machinery}, \bibinfo{address}{New York, NY, USA}.
\newblock
\urldef\tempurl%
\url{https://doi.org/10.1145/3656422}
\showDOI{\tempurl}


\bibitem[Handley et~al\mbox{.}(2019)]%
        {POPL:HVH20}
\bibfield{author}{\bibinfo{person}{Martin A.~T. Handley}, \bibinfo{person}{Niki Vazou}, {and} \bibinfo{person}{Graham Hutton}.} \bibinfo{year}{2019}\natexlab{}.
\newblock \showarticletitle{{Liquidate Your Assets: Reasoning about Resource Usage in Liquid Haskell}}.
\newblock \bibinfo{journal}{\emph{Proc.\ ACM Program.\ Lang.}} \bibinfo{volume}{4}, \bibinfo{number}{24} (\bibinfo{date}{December} \bibinfo{year}{2019}), \bibinfo{pages}{24:1--24:27}.
\newblock
Issue POPL.
\urldef\tempurl%
\url{https://doi.org/10.1145/3371092}
\showDOI{\tempurl}


\bibitem[Ho and Protzenko(2022)]%
        {Aeneas}
\bibfield{author}{\bibinfo{person}{Son Ho} {and} \bibinfo{person}{Jonathan Protzenko}.} \bibinfo{year}{2022}\natexlab{}.
\newblock \showarticletitle{Aeneas: Rust verification by functional translation}.
\newblock \bibinfo{journal}{\emph{Proc. ACM Program. Lang.}} \bibinfo{volume}{6}, \bibinfo{number}{ICFP}, Article \bibinfo{articleno}{116} (\bibinfo{date}{aug} \bibinfo{year}{2022}), \bibinfo{numpages}{31}~pages.
\newblock
\urldef\tempurl%
\url{https://doi.org/10.1145/3547647}
\showDOI{\tempurl}


\bibitem[Hoffmann et~al\mbox{.}(2011)]%
        {AARA-Poly-Multivar}
\bibfield{author}{\bibinfo{person}{Jan Hoffmann}, \bibinfo{person}{Klaus Aehlig}, {and} \bibinfo{person}{Martin Hofmann}.} \bibinfo{year}{2011}\natexlab{}.
\newblock \showarticletitle{Multivariate amortized resource analysis}. In \bibinfo{booktitle}{\emph{Proceedings of the 38th Annual ACM SIGPLAN-SIGACT Symposium on Principles of Programming Languages}} (Austin, Texas, USA) \emph{(\bibinfo{series}{POPL '11})}. \bibinfo{publisher}{Association for Computing Machinery}, \bibinfo{address}{New York, NY, USA}, \bibinfo{pages}{357–370}.
\newblock
\showISBNx{9781450304900}
\urldef\tempurl%
\url{https://doi.org/10.1145/1926385.1926427}
\showDOI{\tempurl}


\bibitem[Hoffmann et~al\mbox{.}(2012)]%
        {RaML}
\bibfield{author}{\bibinfo{person}{Jan Hoffmann}, \bibinfo{person}{Klaus Aehlig}, {and} \bibinfo{person}{Martin Hofmann}.} \bibinfo{year}{2012}\natexlab{}.
\newblock \showarticletitle{Resource aware ML}. In \bibinfo{booktitle}{\emph{Proceedings of the 24th International Conference on Computer Aided Verification}} (Berkeley, CA) \emph{(\bibinfo{series}{CAV'12})}. \bibinfo{publisher}{Springer-Verlag}, \bibinfo{address}{Berlin, Heidelberg}, \bibinfo{pages}{781–786}.
\newblock
\showISBNx{9783642314230}
\urldef\tempurl%
\url{https://doi.org/10.1007/978-3-642-31424-7_64}
\showDOI{\tempurl}


\bibitem[Hoffmann et~al\mbox{.}(2017)]%
        {POPL:HDW17}
\bibfield{author}{\bibinfo{person}{Jan Hoffmann}, \bibinfo{person}{Ankush Das}, {and} \bibinfo{person}{Shu-Chun Weng}.} \bibinfo{year}{2017}\natexlab{}.
\newblock \showarticletitle{{Towards Automatic Resource Bound Analysis for OCaml}}. In \bibinfo{booktitle}{\emph{Princ.\ of Prog.\ Lang.}} \emph{(\bibinfo{series}{POPL'17})}. \bibinfo{pages}{359--373}.
\newblock
\urldef\tempurl%
\url{https://doi.org/10.1145/3009837.3009842}
\showDOI{\tempurl}


\bibitem[Hoffmann and Hofmann(2010)]%
        {AARA-Poly}
\bibfield{author}{\bibinfo{person}{Jan Hoffmann} {and} \bibinfo{person}{Martin Hofmann}.} \bibinfo{year}{2010}\natexlab{}.
\newblock \showarticletitle{Amortized Resource Analysis with Polynomial Potential}. In \bibinfo{booktitle}{\emph{Programming Languages and Systems}}, \bibfield{editor}{\bibinfo{person}{Andrew~D. Gordon}} (Ed.). \bibinfo{publisher}{Springer Berlin Heidelberg}, \bibinfo{address}{Berlin, Heidelberg}, \bibinfo{pages}{287--306}.
\newblock
\showISBNx{978-3-642-11957-6}


\bibitem[Hoffmann and Jost(2022)]%
        {JMSCS:HJ22}
\bibfield{author}{\bibinfo{person}{Jan Hoffmann} {and} \bibinfo{person}{Steffen Jost}.} \bibinfo{year}{2022}\natexlab{}.
\newblock \showarticletitle{{Two Decades of Automatic Amortized Resource Analysis}}.
\newblock \bibinfo{journal}{\emph{Math.\ Struct.\ Comp.\ Sci.}}  \bibinfo{volume}{32} (\bibinfo{date}{June} \bibinfo{year}{2022}), \bibinfo{pages}{729--759}.
\newblock
Issue 6.
\urldef\tempurl%
\url{https://doi.org/10.1017/S0960129521000487}
\showDOI{\tempurl}


\bibitem[Hofmann and Jost(2003)]%
        {AARA-Linear}
\bibfield{author}{\bibinfo{person}{Martin Hofmann} {and} \bibinfo{person}{Steffen Jost}.} \bibinfo{year}{2003}\natexlab{}.
\newblock \showarticletitle{Static prediction of heap space usage for first-order functional programs}.
\newblock \bibinfo{journal}{\emph{SIGPLAN Not.}} \bibinfo{volume}{38}, \bibinfo{number}{1} (\bibinfo{date}{jan} \bibinfo{year}{2003}), \bibinfo{pages}{185–197}.
\newblock
\showISSN{0362-1340}
\urldef\tempurl%
\url{https://doi.org/10.1145/640128.604148}
\showDOI{\tempurl}


\bibitem[Hofmann and Jost(2006)]%
        {AARA-OOP}
\bibfield{author}{\bibinfo{person}{Martin Hofmann} {and} \bibinfo{person}{Steffen Jost}.} \bibinfo{year}{2006}\natexlab{}.
\newblock \showarticletitle{Type-Based Amortised Heap-Space Analysis}. In \bibinfo{booktitle}{\emph{Programming Languages and Systems}}, \bibfield{editor}{\bibinfo{person}{Peter Sestoft}} (Ed.). \bibinfo{publisher}{Springer Berlin Heidelberg}, \bibinfo{address}{Berlin, Heidelberg}, \bibinfo{pages}{22--37}.
\newblock
\showISBNx{978-3-540-33096-7}


\bibitem[Hofmann et~al\mbox{.}(2022)]%
        {AARA-Log}
\bibfield{author}{\bibinfo{person}{Martin Hofmann}, \bibinfo{person}{Lorenz Leutgeb}, \bibinfo{person}{David Obwaller}, \bibinfo{person}{Georg Moser}, {and} \bibinfo{person}{Florian Zuleger}.} \bibinfo{year}{2022}\natexlab{}.
\newblock \showarticletitle{Type-based analysis of logarithmic amortised complexity}.
\newblock \bibinfo{journal}{\emph{Mathematical Structures in Computer Science}} \bibinfo{volume}{32}, \bibinfo{number}{6} (\bibinfo{year}{2022}), \bibinfo{pages}{794–826}.
\newblock
\urldef\tempurl%
\url{https://doi.org/10.1017/S0960129521000232}
\showDOI{\tempurl}


\bibitem[{Infer}(2020)]%
        {misc:fbinfer20}
\bibfield{author}{\bibinfo{person}{{Infer}}.} \bibinfo{year}{2020}\natexlab{}.
\newblock \bibinfo{title}{Cost: Runtime Complexity Analysis}.
\newblock \bibinfo{howpublished}{Available on \url{https://fbinfer.com/docs/checker-cost}}.
\newblock


\bibitem[Jost et~al\mbox{.}(2010)]%
        {AARA-HigherOrder}
\bibfield{author}{\bibinfo{person}{Steffen Jost}, \bibinfo{person}{Kevin Hammond}, \bibinfo{person}{Hans-Wolfgang Loidl}, {and} \bibinfo{person}{Martin Hofmann}.} \bibinfo{year}{2010}\natexlab{}.
\newblock \showarticletitle{Static determination of quantitative resource usage for higher-order programs}. In \bibinfo{booktitle}{\emph{Proceedings of the 37th Annual ACM SIGPLAN-SIGACT Symposium on Principles of Programming Languages}} (Madrid, Spain) \emph{(\bibinfo{series}{POPL '10})}. \bibinfo{publisher}{Association for Computing Machinery}, \bibinfo{address}{New York, NY, USA}, \bibinfo{pages}{223–236}.
\newblock
\showISBNx{9781605584799}
\urldef\tempurl%
\url{https://doi.org/10.1145/1706299.1706327}
\showDOI{\tempurl}


\bibitem[Jung et~al\mbox{.}(2017)]%
        {RustBelt}
\bibfield{author}{\bibinfo{person}{Ralf Jung}, \bibinfo{person}{Jacques-Henri Jourdan}, \bibinfo{person}{Robbert Krebbers}, {and} \bibinfo{person}{Derek Dreyer}.} \bibinfo{year}{2017}\natexlab{}.
\newblock \showarticletitle{RustBelt: securing the foundations of the Rust programming language}.
\newblock \bibinfo{journal}{\emph{Proc. ACM Program. Lang.}} \bibinfo{volume}{2}, \bibinfo{number}{POPL}, Article \bibinfo{articleno}{66} (\bibinfo{date}{dec} \bibinfo{year}{2017}), \bibinfo{numpages}{34}~pages.
\newblock
\urldef\tempurl%
\url{https://doi.org/10.1145/3158154}
\showDOI{\tempurl}


\bibitem[Jung et~al\mbox{.}(2019)]%
        {ProphecyInSepLogic}
\bibfield{author}{\bibinfo{person}{Ralf Jung}, \bibinfo{person}{Rodolphe Lepigre}, \bibinfo{person}{Gaurav Parthasarathy}, \bibinfo{person}{Marianna Rapoport}, \bibinfo{person}{Amin Timany}, \bibinfo{person}{Derek Dreyer}, {and} \bibinfo{person}{Bart Jacobs}.} \bibinfo{year}{2019}\natexlab{}.
\newblock \showarticletitle{The future is ours: prophecy variables in separation logic}.
\newblock \bibinfo{journal}{\emph{Proc. ACM Program. Lang.}} \bibinfo{volume}{4}, \bibinfo{number}{POPL}, Article \bibinfo{articleno}{45} (\bibinfo{date}{Dec.} \bibinfo{year}{2019}), \bibinfo{numpages}{32}~pages.
\newblock
\urldef\tempurl%
\url{https://doi.org/10.1145/3371113}
\showDOI{\tempurl}


\bibitem[Kahn and Hoffmann(2020)]%
        {AARA-Exp}
\bibfield{author}{\bibinfo{person}{David~M. Kahn} {and} \bibinfo{person}{Jan Hoffmann}.} \bibinfo{year}{2020}\natexlab{}.
\newblock \showarticletitle{Exponential Automatic Amortized Resource Analysis}. In \bibinfo{booktitle}{\emph{Foundations of Software Science and Computation Structures: 23rd International Conference, FOSSACS 2020, Held as Part of the European Joint Conferences on Theory and Practice of Software, ETAPS 2020, Dublin, Ireland, April 25–30, 2020, Proceedings}} (Dublin, Ireland). \bibinfo{publisher}{Springer-Verlag}, \bibinfo{address}{Berlin, Heidelberg}, \bibinfo{pages}{359–380}.
\newblock
\showISBNx{978-3-030-45230-8}
\urldef\tempurl%
\url{https://doi.org/10.1007/978-3-030-45231-5_19}
\showDOI{\tempurl}


\bibitem[Kahn and Hoffmann(2021)]%
        {ICFP:KH21}
\bibfield{author}{\bibinfo{person}{David~M. Kahn} {and} \bibinfo{person}{Jan Hoffmann}.} \bibinfo{year}{2021}\natexlab{}.
\newblock \showarticletitle{{Automatic Amortized Resource Analysis with the Quantum Physicist's Method}}.
\newblock \bibinfo{journal}{\emph{Proc.\ ACM Program.\ Lang.}} \bibinfo{volume}{5}, \bibinfo{number}{76} (\bibinfo{date}{August} \bibinfo{year}{2021}).
\newblock
Issue ICFP.
\urldef\tempurl%
\url{https://doi.org/10.1145/3473581}
\showDOI{\tempurl}


\bibitem[Kavvos et~al\mbox{.}(2019)]%
        {POPL:KML20}
\bibfield{author}{\bibinfo{person}{G.~A. Kavvos}, \bibinfo{person}{Edward Morehouse}, \bibinfo{person}{Daniel~R. Licata}, {and} \bibinfo{person}{Norman Danner}.} \bibinfo{year}{2019}\natexlab{}.
\newblock \showarticletitle{{Recurrence Extraction for Functional Programs through Call-by-Push-Value}}.
\newblock \bibinfo{journal}{\emph{Proc.\ ACM Program.\ Lang.}} \bibinfo{volume}{4}, \bibinfo{number}{15} (\bibinfo{date}{December} \bibinfo{year}{2019}), \bibinfo{pages}{15:1--15:31}.
\newblock
Issue POPL.
\urldef\tempurl%
\url{https://doi.org/10.1145/3371083}
\showDOI{\tempurl}


\bibitem[Kincaid et~al\mbox{.}(2017)]%
        {PLDI:KBB17}
\bibfield{author}{\bibinfo{person}{Zachary Kincaid}, \bibinfo{person}{Jason Breck}, \bibinfo{person}{Ashkan~Forouhi Boroujeni}, {and} \bibinfo{person}{Thomas Reps}.} \bibinfo{year}{2017}\natexlab{}.
\newblock \showarticletitle{{Compositional Recurrence Analysis Revisited}}. In \bibinfo{booktitle}{\emph{Prog.\ Lang.\ Design and Impl.}} \emph{(\bibinfo{series}{PLDI'17})}. \bibinfo{pages}{248--262}.
\newblock
\urldef\tempurl%
\url{https://doi.org/10.1145/3062341.3062373}
\showDOI{\tempurl}


\bibitem[Kincaid et~al\mbox{.}(2019)]%
        {POPL:KBC19}
\bibfield{author}{\bibinfo{person}{Zachary Kincaid}, \bibinfo{person}{Jason Breck}, \bibinfo{person}{John Cyphert}, {and} \bibinfo{person}{Thomas Reps}.} \bibinfo{year}{2019}\natexlab{}.
\newblock \showarticletitle{{Closed Forms for Numerical Loops}}.
\newblock \bibinfo{journal}{\emph{Proc.\ ACM Program.\ Lang.}} \bibinfo{volume}{3}, \bibinfo{number}{55} (\bibinfo{date}{January} \bibinfo{year}{2019}).
\newblock
Issue POPL.
\urldef\tempurl%
\url{https://doi.org/10.1145/3290368}
\showDOI{\tempurl}


\bibitem[Knoth et~al\mbox{.}(2019)]%
        {PLDI:KWP19}
\bibfield{author}{\bibinfo{person}{Tristan Knoth}, \bibinfo{person}{Di Wang}, \bibinfo{person}{Nadia Polikarpova}, {and} \bibinfo{person}{Jan Hoffmann}.} \bibinfo{year}{2019}\natexlab{}.
\newblock \showarticletitle{{Resource-Guided Program Synthesis}}. In \bibinfo{booktitle}{\emph{Prog.\ Lang.\ Design and Impl.}} \emph{(\bibinfo{series}{PLDI'19})}.
\newblock
\urldef\tempurl%
\url{https://doi.org/10.1145/3314221.3314602}
\showDOI{\tempurl}


\bibitem[Knoth et~al\mbox{.}(2020)]%
        {ICFP:KWR20}
\bibfield{author}{\bibinfo{person}{Tristan Knoth}, \bibinfo{person}{Di Wang}, \bibinfo{person}{Adam Reynolds}, \bibinfo{person}{Jan Hoffmann}, {and} \bibinfo{person}{Nadia Polikarpova}.} \bibinfo{year}{2020}\natexlab{}.
\newblock \showarticletitle{{Liquid Resource Types}}.
\newblock \bibinfo{journal}{\emph{Proc.\ ACM Program.\ Lang.}} \bibinfo{volume}{4}, \bibinfo{number}{106} (\bibinfo{date}{August} \bibinfo{year}{2020}).
\newblock
Issue ICFP.
\urldef\tempurl%
\url{https://doi.org/10.1145/3408988}
\showDOI{\tempurl}


\bibitem[Lago and Gaboardi(2011)]%
        {LICS:LG11}
\bibfield{author}{\bibinfo{person}{Ugo~Dal Lago} {and} \bibinfo{person}{Marco Gaboardi}.} \bibinfo{year}{2011}\natexlab{}.
\newblock \showarticletitle{{Linear Dependent Types and Relative Completeness}}. In \bibinfo{booktitle}{\emph{Logic in Computer Science}} \emph{(\bibinfo{series}{LICS'11})}. \bibinfo{pages}{133--142}.
\newblock
\urldef\tempurl%
\url{https://doi.org/10.1109/LICS.2011.22}
\showDOI{\tempurl}


\bibitem[Lago and Petit(2013)]%
        {POPL:LP13}
\bibfield{author}{\bibinfo{person}{Ugo~Dal Lago} {and} \bibinfo{person}{Barbara Petit}.} \bibinfo{year}{2013}\natexlab{}.
\newblock \showarticletitle{{The Geometry of Types}}. In \bibinfo{booktitle}{\emph{Princ.\ of Prog.\ Lang.}} \emph{(\bibinfo{series}{POPL'13})}. \bibinfo{pages}{167--178}.
\newblock
\urldef\tempurl%
\url{https://doi.org/10.1145/2429069.2429090}
\showDOI{\tempurl}


\bibitem[Lattuada et~al\mbox{.}(2023)]%
        {OOPSLA:LHC23}
\bibfield{author}{\bibinfo{person}{Andrea Lattuada}, \bibinfo{person}{Travis Hance}, \bibinfo{person}{Chanhee Cho}, \bibinfo{person}{Matthias Brun}, \bibinfo{person}{Isitha Subasinghe}, \bibinfo{person}{Yi Zhou}, \bibinfo{person}{Jon Howell}, \bibinfo{person}{Bryan Parno}, {and} \bibinfo{person}{Chris Hawblitzel}.} \bibinfo{year}{2023}\natexlab{}.
\newblock \showarticletitle{{Verus: Verifying Rust Programs using Linear Ghost Types}}.
\newblock \bibinfo{journal}{\emph{Proc.\ ACM Program.\ Lang.}} \bibinfo{volume}{7}, \bibinfo{number}{85} (\bibinfo{date}{October} \bibinfo{year}{2023}), \bibinfo{pages}{286--315}.
\newblock
Issue OOPSLA1.
\urldef\tempurl%
\url{https://doi.org/10.1145/3586037}
\showDOI{\tempurl}


\bibitem[Lehmann et~al\mbox{.}(2023)]%
        {Flux}
\bibfield{author}{\bibinfo{person}{Nico Lehmann}, \bibinfo{person}{Adam~T. Geller}, \bibinfo{person}{Niki Vazou}, {and} \bibinfo{person}{Ranjit Jhala}.} \bibinfo{year}{2023}\natexlab{}.
\newblock \showarticletitle{Flux: Liquid Types for Rust}.
\newblock \bibinfo{journal}{\emph{Proc. ACM Program. Lang.}} \bibinfo{volume}{7}, \bibinfo{number}{PLDI}, Article \bibinfo{articleno}{169} (\bibinfo{date}{jun} \bibinfo{year}{2023}), \bibinfo{numpages}{25}~pages.
\newblock
\urldef\tempurl%
\url{https://doi.org/10.1145/3591283}
\showDOI{\tempurl}


\bibitem[Leutgeb et~al\mbox{.}(2021)]%
        {CAV:LMZ21}
\bibfield{author}{\bibinfo{person}{Lorenz Leutgeb}, \bibinfo{person}{Georg Moser}, {and} \bibinfo{person}{Florian Zuleger}.} \bibinfo{year}{2021}\natexlab{}.
\newblock \showarticletitle{{ATLAS: Automated Amortised Complexity Analysis of Self-adjusting Data Structures}}. In \bibinfo{booktitle}{\emph{Computer Aided Verif.}} \emph{(\bibinfo{series}{CAV'21})}. \bibinfo{pages}{99--122}.
\newblock
\urldef\tempurl%
\url{https://doi.org/10.1007/978-3-030-81688-9_5}
\showDOI{\tempurl}


\bibitem[Lian and Wang(2025)]%
        {RaRustArtifact}
\bibfield{author}{\bibinfo{person}{Qihao Lian} {and} \bibinfo{person}{Di Wang}.} \bibinfo{year}{2025}\natexlab{}.
\newblock \bibinfo{booktitle}{\emph{Automatic Linear Resource Bound Analysis for Rust via Prophecy Potentials (Artifact)}}.
\newblock
\urldef\tempurl%
\url{https://doi.org/10.5281/zenodo.14591574}
\showDOI{\tempurl}


\bibitem[Lichtman and Hoffmann(2017)]%
        {FSCD:LH17}
\bibfield{author}{\bibinfo{person}{Benjamin Lichtman} {and} \bibinfo{person}{Jan Hoffmann}.} \bibinfo{year}{2017}\natexlab{}.
\newblock \showarticletitle{{Arrays and References in Resource Aware ML}}. In \bibinfo{booktitle}{\emph{Formal Struct.\ for Comput.\ and Deduction}} \emph{(\bibinfo{series}{FSCD'17})}. \bibinfo{pages}{26:1--26:20}.
\newblock
\urldef\tempurl%
\url{https://doi.org/10.4230/LIPIcs.FSCD.2017.26}
\showDOI{\tempurl}


\bibitem[Matsakis and Klock(2014)]%
        {sigada:MK14}
\bibfield{author}{\bibinfo{person}{Nicholas~D. Matsakis} {and} \bibinfo{person}{Felix~S. Klock}.} \bibinfo{year}{2014}\natexlab{}.
\newblock \showarticletitle{{The Rust Language}}.
\newblock \bibinfo{journal}{\emph{ACM SIGAda Ada Letters}}  \bibinfo{volume}{34} (\bibinfo{date}{October} \bibinfo{year}{2014}), \bibinfo{pages}{103--104}.
\newblock
Issue 3.
\urldef\tempurl%
\url{https://doi.org/10.1145/2692956.2663188}
\showDOI{\tempurl}


\bibitem[Matsushita et~al\mbox{.}(2021)]%
        {RustHorn}
\bibfield{author}{\bibinfo{person}{Yusuke Matsushita}, \bibinfo{person}{Takeshi Tsukada}, {and} \bibinfo{person}{Naoki Kobayashi}.} \bibinfo{year}{2021}\natexlab{}.
\newblock \showarticletitle{RustHorn: CHC-based Verification for Rust Programs}.
\newblock \bibinfo{journal}{\emph{ACM Trans. Program. Lang. Syst.}} \bibinfo{volume}{43}, \bibinfo{number}{4}, Article \bibinfo{articleno}{15} (\bibinfo{date}{oct} \bibinfo{year}{2021}), \bibinfo{numpages}{54}~pages.
\newblock
\showISSN{0164-0925}
\urldef\tempurl%
\url{https://doi.org/10.1145/3462205}
\showDOI{\tempurl}


\bibitem[Nethercote(2020)]%
        {misc:RPB20}
\bibfield{author}{\bibinfo{person}{Nicholas Nethercote}.} \bibinfo{year}{2020}\natexlab{}.
\newblock \bibinfo{title}{{The Rust Performance Book}}.
\newblock \bibinfo{howpublished}{Available on \url{https://nnethercote.github.io/perf-book}}.
\newblock


\bibitem[Niu et~al\mbox{.}(2022)]%
        {POPL:NSG22}
\bibfield{author}{\bibinfo{person}{Yue Niu}, \bibinfo{person}{Jonathan Sterling}, \bibinfo{person}{Harrison Grodin}, {and} \bibinfo{person}{Robert Harper}.} \bibinfo{year}{2022}\natexlab{}.
\newblock \showarticletitle{{A Cost-Aware Logical Framework}}.
\newblock \bibinfo{journal}{\emph{Proc.\ ACM Program.\ Lang.}} \bibinfo{volume}{6}, \bibinfo{number}{9} (\bibinfo{date}{January} \bibinfo{year}{2022}), \bibinfo{pages}{9:1--9:31}.
\newblock
Issue POPL.
\urldef\tempurl%
\url{https://doi.org/10.1145/3498670}
\showDOI{\tempurl}


\bibitem[Noschinski et~al\mbox{.}(2013)]%
        {JAR:NEG13}
\bibfield{author}{\bibinfo{person}{Lars Noschinski}, \bibinfo{person}{Fabian Emmes}, {and} \bibinfo{person}{J{\"u}rgen Giesl}.} \bibinfo{year}{2013}\natexlab{}.
\newblock \showarticletitle{{Analyzing Innermost Runtime Complexity of Term Rewriting by Dependency Pairs}}.
\newblock \bibinfo{journal}{\emph{J.\ Automated Reasoning}}  \bibinfo{volume}{51} (\bibinfo{date}{June} \bibinfo{year}{2013}), \bibinfo{pages}{27--56}.
\newblock
\urldef\tempurl%
\url{https://doi.org/10.1007/s10817-013-9277-6}
\showDOI{\tempurl}


\bibitem[Orchard et~al\mbox{.}(2019)]%
        {Granule}
\bibfield{author}{\bibinfo{person}{Dominic Orchard}, \bibinfo{person}{Vilem-Benjamin Liepelt}, {and} \bibinfo{person}{Harley Eades~III}.} \bibinfo{year}{2019}\natexlab{}.
\newblock \showarticletitle{Quantitative program reasoning with graded modal types}.
\newblock \bibinfo{journal}{\emph{Language implementation for Quantitative Program Reasoning with Graded Modal Types}} \bibinfo{volume}{3}, \bibinfo{number}{ICFP} (\bibinfo{date}{July} \bibinfo{year}{2019}), \bibinfo{pages}{110:1--110:30}.
\newblock
\urldef\tempurl%
\url{https://doi.org/10.1145/3341714}
\showDOI{\tempurl}


\bibitem[Rajani et~al\mbox{.}(2021)]%
        {POPL:RGG21}
\bibfield{author}{\bibinfo{person}{Vineet Rajani}, \bibinfo{person}{Marco Gaboardi}, \bibinfo{person}{Deepak Garg}, {and} \bibinfo{person}{Jan Hoffmann}.} \bibinfo{year}{2021}\natexlab{}.
\newblock \showarticletitle{{A Unifying Type-Theory for Higher-Order (Amortized) Cost Analysis}}.
\newblock \bibinfo{journal}{\emph{Proc.\ ACM Program.\ Lang.}} \bibinfo{volume}{5}, \bibinfo{number}{27} (\bibinfo{date}{January} \bibinfo{year}{2021}), \bibinfo{pages}{27:1--27:28}.
\newblock
Issue POPL.
\urldef\tempurl%
\url{https://doi.org/10.1145/3434308}
\showDOI{\tempurl}


\bibitem[Sinn et~al\mbox{.}(2014)]%
        {CAV:SZV14}
\bibfield{author}{\bibinfo{person}{Moritz Sinn}, \bibinfo{person}{Florian Zuleger}, {and} \bibinfo{person}{Helmut Veith}.} \bibinfo{year}{2014}\natexlab{}.
\newblock \showarticletitle{{A Simple and Scalable Static Analysis for Bound Analysis and Amortized Complexity Analysis}}. In \bibinfo{booktitle}{\emph{Computer Aided Verif.}} \emph{(\bibinfo{series}{CAV'14})}. \bibinfo{pages}{745--761}.
\newblock
\urldef\tempurl%
\url{https://doi.org/10.1007/978-3-319-08867-9_50}
\showDOI{\tempurl}


\bibitem[Tarjan(1985)]%
        {JADM:Tarjan85}
\bibfield{author}{\bibinfo{person}{Robert~Endre Tarjan}.} \bibinfo{year}{1985}\natexlab{}.
\newblock \showarticletitle{{Amortized Computational Complexity}}.
\newblock \bibinfo{journal}{\emph{SIAM J.\ Algebraic Discrete Methods}}  \bibinfo{volume}{6} (\bibinfo{date}{August} \bibinfo{year}{1985}).
\newblock
Issue 2.
\urldef\tempurl%
\url{https://doi.org/10.1137/0606031}
\showDOI{\tempurl}


\bibitem[Vasconcelos(2008)]%
        {phd:Vasconcelos08}
\bibfield{author}{\bibinfo{person}{Pedro~B. Vasconcelos}.} \bibinfo{year}{2008}\natexlab{}.
\newblock \emph{\bibinfo{title}{{Space Cost Analysis Using Sized Types}}}.
\newblock \bibinfo{thesistype}{Ph.\,D. Dissertation}. \bibinfo{school}{University of St Andrews}.
\newblock


\bibitem[Walker(2002)]%
        {kn:Walker02}
\bibfield{author}{\bibinfo{person}{David Walker}.} \bibinfo{year}{2002}\natexlab{}.
\newblock \showarticletitle{{Substructural Type Systems}}.
\newblock In \bibinfo{booktitle}{\emph{Advanced Topics in Types and Programming Languages}}. \bibinfo{publisher}{MIT Press}.
\newblock


\bibitem[Wang et~al\mbox{.}(2017)]%
        {OOPSLA:WWC17}
\bibfield{author}{\bibinfo{person}{Peng Wang}, \bibinfo{person}{Di Wang}, {and} \bibinfo{person}{Adam Chlipala}.} \bibinfo{year}{2017}\natexlab{}.
\newblock \showarticletitle{{TiML: A Functional Language for Practical Complexity Analysis with Invariants}}.
\newblock \bibinfo{journal}{\emph{Proc.\ ACM Program.\ Lang.}} \bibinfo{volume}{1}, \bibinfo{number}{79} (\bibinfo{date}{October} \bibinfo{year}{2017}), \bibinfo{pages}{79:1--79:26}.
\newblock
Issue OOPSLA.
\urldef\tempurl%
\url{https://doi.org/10.1145/3133903}
\showDOI{\tempurl}


\bibitem[Zuleger et~al\mbox{.}(2011)]%
        {SAS:ZSG11}
\bibfield{author}{\bibinfo{person}{Florian Zuleger}, \bibinfo{person}{Moritz Sinn}, \bibinfo{person}{Sumit Gulwani}, {and} \bibinfo{person}{Helmut Veith}.} \bibinfo{year}{2011}\natexlab{}.
\newblock \showarticletitle{{Bound Analysis of Imperative Programs with the Size-Change Abstraction}}. In \bibinfo{booktitle}{\emph{Static Analysis Symp.}} \emph{(\bibinfo{series}{SAS'11})}. \bibinfo{pages}{280--297}.
\newblock
\urldef\tempurl%
\url{https://doi.org/10.1007/978-3-642-23702-7_22}
\showDOI{\tempurl}


\end{thebibliography}

\newpage
\appendix
\section{Judgements}
\centering
\judgement{Expression Evaluation}{$V\vDash e \rightsquigarrow v$}
    \begin{mathpar}
    \inferrule*[Right=\rulename{V-Ev-Int}]
    {~}
    {V\vDash \kwd{n}_\text{i32} \rightsquigarrow \kwd{n}_\text{i32}}
    \and
    \inferrule*[Right=\rulename{V-Ev-Op}]
    {V\vDash e_1 \rightsquigarrow \kwd{n}_1
    \\ V\vDash e_2 \rightsquigarrow \kwd{n}_2}
    {V\vDash e_1~\kwd{op}~e_2 \rightsquigarrow \kwd{n}_1~\kwd{op}~\kwd{n}_2}
    \\
    \inferrule*[Right=\rulename{V-Ev-True}]
    {~}
    {V\vDash \kwd{true} \rightsquigarrow \kwd{true}}
    \and
    \inferrule*[Right=\rulename{V-Ev-False}]
    {~}
    {V\vDash \kwd{false} \rightsquigarrow \kwd{false}}
    \\
    \inferrule*[Right=\rulename{V-Ev-Nil}]
    {~}
    {V\vDash \kwd{nil} \rightsquigarrow \kwd{nil}}
    \and
    \inferrule*[Right=\rulename{V-Ev-Box}]
    {V\vDash e \rightsquigarrow v}
    {V\vDash \kwd{box}(e) \rightsquigarrow \kwd{box}(v)}
    \\
    \inferrule*[Right=\rulename{V-Ev-Copy}]
    {V\vDash p \rightsquigarrow v}
    {V\vDash \kwd{copy}~p \rightsquigarrow v}
    \and
    \inferrule*[Right=\rulename{V-Ev-Move}]
    {V\vDash p \rightsquigarrow v}
    {V\vDash \kwd{move}~p \rightsquigarrow v}
    \and
    \inferrule*[Right=\rulename{V-Ev-Borrow}]
    {V\vDash p \rightsquigarrow v}
    {V\vDash \&^{\kwd{s}/\kwd{m}/\kwd{2}} p \rightsquigarrow \&(p, v)}
\end{mathpar}

\centering
\judgement{Statement Execution}{$V\vDash e \rightsquigarrow^\delta \Dashv V'$}
\begin{mathpar}
    \inferrule*[Right=\rulename{V-Ex-Ret}]
    {~}
    {V\vDash \kwd{return} \rightsquigarrow^0 \Dashv V}
    \and
    \inferrule*[Right=\rulename{V-Ex-Seq}]
    {V\vDash s_1\rightsquigarrow^{\delta_1}\Dashv V'
    \\ V'\vDash s_2\rightsquigarrow^{\delta_2}\Dashv V''}
    {V\vDash s_1; s_2\rightsquigarrow^{\delta_1+\delta_2}\Dashv V''}
    \\
    \inferrule*[Right=\rulename{V-Ex-Tick}]
    {~}
    {V\vDash \kwd{tick}(\delta)\rightsquigarrow^\delta \Dashv V}
    \and
    \inferrule*[Right=\rulename{V-Ex-Drop}]
    {~}
    {V\vDash \kwd{drop}~p\rightsquigarrow^0\Dashv V}
    \\
    \inferrule*[Right=\rulename{V-Ex-Assign}]
    {V\vDash e \rightsquigarrow v'
    \\ \VWt{V}{p}{v'}{V'}}
    {V\vDash p\from e\rightsquigarrow^0 \Dashv V'}
    \\
    \inferrule*[Right=\rulename{V-Ex-Cons}]
    {V\vDash e_1 \rightsquigarrow v_1
    \\ V\vDash e_2 \rightsquigarrow v_2
    \\ \VWt{V}{p}{\kwd{cons}(v_1, v_2)}{V'} }
    {V\vDash p\from \kwd{cons}(e_1, e_2)\rightsquigarrow^0 \Dashv V'}
    \\
    \inferrule*[Right=\rulename{V-Ex-IfT}]
    {V\vDash p\rightsquigarrow \kwd{true}
    \\ V\vDash s_1\rightsquigarrow^\delta \Dashv V'}
    {V\vDash \kwd{if}~ p ~\kwd{then}~ s_1 ~\kwd{else}~ s_2 ~\kwd{end} \rightsquigarrow^\delta \Dashv V'}
    \and
    \inferrule*[Right=\rulename{V-Ex-IfF}]
    {V\vDash p\rightsquigarrow \kwd{false}
    \\ V\vDash s_2\rightsquigarrow^\delta \Dashv V'}
    {V\vDash \kwd{if}~ p ~\kwd{then}~ s_1 ~\kwd{else}~ s_2 ~\kwd{end} \rightsquigarrow^\delta \Dashv V'}
    \\
    \inferrule*[Right=\rulename{V-Ex-Mat-Nil}]
    {V\vDash p\rightsquigarrow \kwd{nil}
    \\ V\vDash s_1\rightsquigarrow^\delta \Dashv V'}
    {V\vDash \kwd{match}~ p ~ \{\kwd{nil}\mapsto s_1, \kwd{cons}(x_\text{hd}, x_\text{tl})\mapsto s_2\} \rightsquigarrow^\delta \Dashv V'}

    \inferrule*[Right=\rulename{V-Ex-Mat-Cons}]
    {V\vDash p\rightsquigarrow \kwd{cons}(hd, tl)
    \\ \VWt{V}{p}{\bot}{V_1}
    \\ \VWt{V_1}{x_\text{hd}}{hd}{V_2}
    \\ \VWt{V_2}{x_\text{tl}}{tl}{V_\text{b}}
    \\\\ V_\text{b}\vDash s_2\rightsquigarrow^\delta \Dashv V'_\text{b}
    \\ V'_\text{b}\vDash x_\text{hd}\rightsquigarrow hd'
    \\ V'_\text{b}\vDash x_\text{tl}\rightsquigarrow tl'
    \\\\ \VWt{V'_\text{b}}{x_\text{hd}}{\bot}{V'_1}
    \\ \VWt{V'_1}{x_\text{tl}}{\bot}{V'_2}
    \\ \VWt{V'_2}{p}{\kwd{cons}(hd', tl')}{V'} 
    }
    {V\vDash \kwd{match}~ p ~ \{\kwd{nil}\mapsto s_1, \kwd{cons}(x_\text{hd}, x_\text{tl})\mapsto s_2\} \rightsquigarrow^\delta \Dashv V'}
    \\
    
    \inferrule*[Right=\rulename{V-Ex-App}]
    {\kwd{fn}~ f ~(\vec{x}_\text{param}:\vec{t}_\text{param}, \vec{x}_\text{local}:\vec{t}_\text{local}, x_\text{ret}:t_\text{ret}) \{~ s ~\}
    \\ V\vDash \vec{e}\rightsquigarrow \vec{v}
    \\\\ \VWt{V}{\vec{x}_\text{param}}{\vec{v}}{V_1}
    \\ \VWt{V_1}{\vec{x}_\text{local}}{\bot}{V_2}
    \\ \VWt{V_2}{x_\text{ret}}{\bot}{V_\text{b}}
    \\\\ V_\text{b}\vDash s\rightsquigarrow^\delta \Dashv V'_\text{b}
    \\ V'_\text{b}\vDash x_\text{ret} \rightsquigarrow v_\text{ret}
    \\\\ \VWt{V'_\text{b}}{\vec{x}_\text{param}}{\bot}{V'_1}
    \\ \VWt{V'_1}{\vec{x}_\text{local}}{\bot}{V'_2}
    \\ \VWt{V'_2}{x_\text{ret}}{\bot}{V'_3}
    \\ \VWt{V'_3}{p}{v_\text{ret}}{V'}
    } 
    {V\vDash p\from f(\vec{e})\rightsquigarrow^\delta \Dashv V'}
\end{mathpar}

\centering
\judgement{Enrich}{$\textit{enrich}~ t ~\textit{as}~ \tau$}
\begin{mathpar}
    \inferrule*[Right=\rulename{Enrich-Int}]
    {~}
    {\textit{enrich}~ \kwd{i32} ~\textit{as}~\kwd{i32}}
    \and
    \inferrule*[Right=\rulename{Enrich-Bool}]
    {~}
    {\textit{enrich}~ \kwd{bool} ~\textit{as}~ \kwd{bool}}
    \\

    \inferrule*[Right=\rulename{Enrich-List}]
    {\alpha~\text{fresh}}
    {\textit{enrich}~ \kwd{list} ~\textit{as}~  \kwd{list}(\alpha)}
    \and
    \inferrule*[Right=\rulename{Enrich-Box}]
    {\alpha~\text{fresh}}
    {\textit{enrich}~ \kwd{box}(\kwd{list}) ~\textit{as}~ \kwd{box}(\kwd{list}(\alpha))}
    \\
    
    \inferrule*[Right=\rulename{Enrich-Shared}]
    {\textit{enrich}~ t ~\textit{as}~ \tau}
    {\textit{enrich}~ \&^\kwd{s}(t) ~\textit{as}~ \&^\kwd{s}(\tau)}
    \\
    \inferrule*[Right=\rulename{Enrich-Mutable}]
    {\textit{enrich}~ t ~\textit{as}~ \tau_\text{c}
    \\ \textit{enrich}~ t ~\textit{as}~ \tau_\text{p}
    }
    {\textit{enrich}~ \&^\kwd{m}(t) ~\textit{as}~ \&^\kwd{m}(\tau_\text{c}, \tau_\text{p})}
\end{mathpar}

\centering
\judgement{Sharing}{$\textit{share}~ \tau ~\textit{as}~\tau_1, \tau_2$}
\begin{mathpar}
    \inferrule*[Right=\rulename{Share-Int}]
    {~}
    {\textit{share}~ \kwd{i32} ~\textit{as}~\kwd{i32}, \kwd{i32}}
    \and
    \inferrule*[Right=\rulename{Share-Bool}]
    {~}
    {\textit{share}~ \kwd{bool} ~\textit{as}~\kwd{bool}, \kwd{bool}}
    \\
    \inferrule*[Right=\rulename{Share-List}]
    {\alpha_1, \alpha_2 ~\text{fresh}
    \\\alpha = \alpha_1 + \alpha_2}
    {\textit{share}~ \kwd{list}(\alpha) ~\textit{as}~\kwd{list}(\alpha_1), \kwd{list}(\alpha_2)}
    \\
    \inferrule*[Right=\rulename{Share-Box}]
    {\textit{share}~ \tau ~\textit{as}~\tau_1, \tau_2}
    {\textit{share}~ \kwd{box}(\tau) ~\textit{as}~\kwd{box}(\tau_1), \kwd{box}(\tau_2)}
    \\
    \inferrule*[Right=\rulename{Share-Shared}]
    {\textit{share}~ \tau ~\textit{as}~\tau_1, \tau_2}
    {\textit{share}~ \&^\kwd{s}(\tau) ~\textit{as}~ \&^\kwd{s}(\tau_1), \&^\kwd{s}(\tau_2)}
\end{mathpar}

\centering
\judgement{Prophesying}{$\textit{prophesy}~ \tau_\text{c} ~\textit{as}~\tau_\text{p}$}
\begin{mathpar}
    \inferrule*[Right=\rulename{Prophesy-Int}]
    {~}
    {\textit{prophesy}~ \kwd{i32} ~\textit{as}~ \kwd{i32}}
    \and
    \inferrule*[Right=\rulename{Prophesy-Bool}]
    {~}
    {\textit{prophesy}~ \kwd{bool} ~\textit{as}~ \kwd{bool}}
    \\
    \inferrule*[Right=\rulename{Prophesy-List}]
    {\alpha_\text{p}~\text{fresh}}
    {\textit{prophesy}~ \kwd{list}(\alpha) ~\textit{as}~ \kwd{list}(\alpha_\text{p})}
    \\
    \inferrule*[Right=\rulename{Prophesy-Box}]
    {\textit{prophesy}~ \tau ~\textit{as}~ \tau_\text{p} }
    {\textit{prophesy}~ \kwd{box}(\tau) ~\textit{as}~ \kwd{box}(\tau_\text{p})}
    \\
    \inferrule*[Right=\rulename{Prophesy-Shared}]
    {\textit{prophesy}~ \tau ~\textit{as}~ \tau_\text{p}}
    {\textit{prophesy}~ \&^\kwd{s}(\tau) ~\textit{as}~ \&^\kwd{s}(\tau_\text{p})}
    \\
    \inferrule*[Right=\rulename{Prophesy-Mutable}]
    {\textit{prophesy}~ \tau_\text{c} ~\textit{as}~ \tau_\text{cp}
    \\ \textit{prophesy}~ \tau_\text{p} ~\textit{as}~ \tau_\text{pp}
    }
    {\textit{prophesy}~ \&^\kwd{m}(\tau_\text{c}, \tau_\text{p}) ~\textit{as}~ \&^\kwd{m}(\tau_\text{cp}, \tau_\text{pp})}
\end{mathpar}

\centering
\judgement{Meet/Join}{$\tau_1\cap\tau_2 / \tau_1\cup\tau_2$}
\begin{mathpar}    
    \inferrule*[Right=Meet-Int]
    {~}
    {\kwd{i32}\cap\kwd{i32}=\kwd{i32}}
    \and
    \inferrule*[Right=Join-Int]
    {~}
    {\kwd{i32}\cup\kwd{i32}=\kwd{i32}}
    \\
    \inferrule*[Right=Meet-Bool]
    {~}
    {\kwd{bool}\cap\kwd{bool}=\kwd{bool}}
    \and
    \inferrule*[Right=Join-Bool]
    {~}
    {\kwd{bool}\cup\kwd{bool}=\kwd{bool}}
    \\
    
    \inferrule*[Right=Meet-List]
    {\min(\alpha_1, \alpha_2)=\alpha}
    {\kwd{list}(\alpha_1)\cap\kwd{list}(\alpha_2)=\kwd{list}(\alpha)}
    \and
    \inferrule*[Right=Join-List]
    {\max(\alpha_1, \alpha_2)=\alpha}
    {\kwd{list}(\alpha_1)\cup\kwd{list}(\alpha_2)=\kwd{list}(\alpha)}
    \\
    
    \inferrule*[Right=Meet-Box]
    {\tau_1 \cap \tau_2=\tau}
    {\kwd{box}(\tau_1)\cap\kwd{box}(\tau_2)=\kwd{box}(\tau)}
    \and
    \inferrule*[Right=Join-Box]
    {\tau_1 \cup \tau_2=\tau}
    {\kwd{box}(\tau_1)\cup\kwd{box}(\tau_2)=\kwd{box}(\tau)}
    \\
    
    \inferrule*[Right=Meet-Shared]
    {\tau_1 \cap \tau_2=\tau}
    {\&^\kwd{s}(\tau_1)\cap\&^\kwd{s}(\tau_2)=\&^\kwd{s}(\tau)}
    \and
    \inferrule*[Right=Join-Shared]
    {\tau_1 \cup \tau_2=\tau}
    {\&^\kwd{s}(\tau_1)\cup\&^\kwd{s}(\tau_2)=\&^\kwd{s}(\tau)}
    \\

    \inferrule*[Right=Meet-Mutable]
    {\tau_{\text{c}, 1} \cap \tau_{\text{c}, 2}=\tau_\text{c}
    \\ \tau_{\text{p}, 1} \cup \tau_{\text{p}, 2}=\tau_\text{p}
    \\ \vdash \&^\kwd{m}(\tau_{\text{c}, 1}, \tau_{\text{p}, 1})
    \\ \vdash \&^\kwd{m}(\tau_{\text{c}, 2}, \tau_{\text{p}, 2})
    }
    {\&^\kwd{m}(\tau_{\text{c}, 1}, \tau_{\text{p}, 1})\cap\&^\kwd{m}(\tau_{\text{c}, 2}, \tau_{\text{p}, 2})=\&^\kwd{m}(\tau_\text{c}, \tau_\text{p})}
    \\
    \inferrule*[Right=Join-Mutable]
    {\tau_{\text{c}, 1} \cup \tau_{\text{c}, 2}=\tau_\text{c}
    \\ \tau_{\text{p}, 1} \cap \tau_{\text{p}, 2}=\tau_\text{p}
    \\ \vdash \&^\kwd{m}(\tau_{\text{c}, 1}, \tau_{\text{p}, 1})
    \\ \vdash \&^\kwd{m}(\tau_{\text{c}, 2}, \tau_{\text{p}, 2})
    }
    {\&^\kwd{m}(\tau_{\text{c}, 1}, \tau_{\text{p}, 1})\cup\&^\kwd{m}(\tau_{\text{c}, 2}, \tau_{\text{p}, 2})=\&^\kwd{m}(\tau_\text{c}, \tau_\text{p})}
\end{mathpar}

\centering
\judgement{Typing Expression}{$\Gamma\vdash e \hookrightarrow \tau\dashv\Gamma'$}
\begin{mathpar}
    \inferrule*[Right=\rulename{$\Gamma$-Ev-Int}]
    {~}
    {\Gamma\vdash \kwd{n}_\text{i32} \hookrightarrow \kwd{i32}\dashv\Gamma}
    \and
    \inferrule*[Right=\rulename{$\Gamma$-Ev-Op}]
    {\Gamma\vdash e_1\hookrightarrow \kwd{i32}\dashv\Gamma
    \\ \Gamma\vdash e_2\hookrightarrow \kwd{i32}\dashv\Gamma}
    {\Gamma\vdash e_1~\kwd{op}~e_2 \hookrightarrow \kwd{i32}\dashv\Gamma}
    \\
    
    \inferrule*[Right=\rulename{$\Gamma$-Ev-True}]
    {~}
    {\Gamma\vdash \kwd{true} \hookrightarrow \kwd{bool}\dashv\Gamma}
    \and
    \inferrule*[Right=\rulename{$\Gamma$-Ev-False}]
    {~}
    {\Gamma\vdash \kwd{false} \hookrightarrow \kwd{bool}\dashv\Gamma}
    \\
    \inferrule*[Right=\rulename{$\Gamma$-Ev-Copy}]
    {\Gamma\vdash p \hookrightarrow \tau
    \\ \tau = \kwd{i32} ~\text{or}~ \tau = \kwd{bool} }
    {\Gamma\vdash \kwd{copy}~p \hookrightarrow \tau\dashv\Gamma}
    \and
    \inferrule*[Right=\rulename{$\Gamma$-Ev-Box}]
    {\Gamma\vdash e \hookrightarrow \tau\vdash\Gamma'}
    {\Gamma\vdash \kwd{box}(e) \hookrightarrow \kwd{box}(\tau)\vdash\Gamma'}
    \\

    \inferrule*[Right=\rulename{$\Gamma$-Ev-Nil}]
    {\alpha ~\text{fresh}}
    {\Gamma\vdash \kwd{nil} \hookrightarrow \kwd{list}(\alpha)\vdash\Gamma}
    \and
    \inferrule*[Right=\rulename{$\Gamma$-Ev-Move}]
    {\Gamma\vdash p \hookrightarrow \tau
    \\ \GWt{\Gamma}{p}{\bot}{\Gamma'}}
    {\Gamma\vdash \kwd{move}~p \hookrightarrow \tau\dashv\Gamma'}
    \\
    
    \inferrule*[Right=\rulename{$\Gamma$-Ev-Shared}]
    {\Gamma\vdash p \hookrightarrow \tau
    \\ \textit{share}~ \tau ~\textit{as}~\tau_1, \tau_2
    \\ \GWt{\Gamma}{p}{\tau_1}{\Gamma'}
    }
    {\Gamma\vdash \&^\kwd{s}~p \hookrightarrow \&^\kwd{s}(\tau_2)\dashv\Gamma'}
    \\
    
    \inferrule*[Right=\rulename{$\Gamma$-Ev-Mutable}]
    {\Gamma\vdash p \hookrightarrow \tau
    \\ \textit{prophesy}~ \tau ~\textit{as}~ \tau_\text{p} 
    \\ \GWt{\Gamma}{p}{\tau_\text{p}}{\Gamma'}
    }
    {\Gamma\vdash \&^\kwd{m}~p \hookrightarrow \&^\kwd{m}(\tau, \tau_\text{p})\dashv\Gamma'}
\end{mathpar}

\centering
\judgement{Typing Statements}{$\Gamma\vdash s \hookrightarrow^\delta \dashv\Gamma'$}
\begin{mathpar}
    \inferrule*[Right=\rulename{$\Gamma$-Ex-Ret}]
    {~}
    {\Gamma\vdash \kwd{return} \hookrightarrow^0\vdash\Gamma}
    \and
    \inferrule*[Right=\rulename{$\Gamma$-Ex-Seq}]
    {\Gamma_1\vdash s_1\hookrightarrow^{\delta_1}\dashv\Gamma_2
    \\ \Gamma_2\vdash s_2\hookrightarrow^{\delta_2}\dashv\Gamma_3}
    {\Gamma_1\vdash s_1; s_2\hookrightarrow^{\delta_1+\delta_2}\dashv\Gamma_3}
    \\
    
    \inferrule*[Right=\rulename{$\Gamma$-Ex-Tick}]
    {~}
    {\Gamma\vdash\kwd{tick}(\delta)\hookrightarrow^\delta\vdash\Gamma}
    \and
    \inferrule*[Right=\rulename{$\Gamma$-Ex-Drop}]
    {\Gamma\vdash p\hookrightarrow \tau
    \\ \vdash \tau
    \\ \GWt{\Gamma}{p}{\bot}{\Gamma'}
    }
    {\Gamma\vdash \kwd{drop}~p \hookrightarrow^0\dashv \Gamma'}
    \\

    \inferrule*[Right=\rulename{$\Gamma$-Ex-Assign}]
    {\Gamma\vdash e\hookrightarrow \tau'\dashv\Gamma_1
    \\ \Gamma_1\vdash p \hookrightarrow \tau
    \\ \vdash \tau
    \\ \GWt{\Gamma_1}{p}{\tau'}{\Gamma'}}
    {\Gamma\vdash p\from e \hookrightarrow^0\dashv\Gamma'}
    \\

    \inferrule*[Right=\rulename{$\Gamma$-Ex-Cons}]
    {\Gamma\vdash e_1\hookrightarrow \kwd{i32} \dashv \Gamma_1
    \\ \Gamma_1\vdash e_2\hookrightarrow \kwd{box}(\kwd{list}(\alpha'))\dashv\Gamma_2
    \\ \GWt{\Gamma_2}{p}{\kwd{list}(\alpha')}{\Gamma'}}
    {\Gamma\vdash p\from \kwd{cons}(e_1, e_2)\hookrightarrow^{\alpha'}\dashv\Gamma'}
    \\

    \inferrule*[Right=\rulename{$\Gamma$-Ex-If}]
    {\Gamma\vdash p\hookrightarrow \kwd{bool}
    \\ \Gamma\vdash s_1\hookrightarrow^{\delta_1}\dashv\Gamma_1
    \\ \Gamma\vdash s_2\hookrightarrow^{\delta_2}\dashv\Gamma_2
    \\ \max(\delta_1, \delta_2)=\delta
    \\ \Gamma_1\sqcap\Gamma_2=\Gamma' }
    {\Gamma\vdash \kwd{if}~ p ~\kwd{then}~ s_1 ~\kwd{else}~ s_2 ~\kwd{end} \hookrightarrow^\delta \dashv\Gamma'}
    \\
    \inferrule*[Right=\rulename{$\Gamma$-Ex-Mat}]
    {\Gamma\vdash p\hookrightarrow \kwd{list}(\alpha)
    \\ \Gamma\vdash s_1\hookrightarrow^{\delta_1}\dashv\Gamma_1
    \\\\ \GWt{\Gamma}{p}{\bot}{\Gamma_{\text{b}, 1}}
    \\ \GWt{\Gamma_{\text{b}, 1}}{x_\text{hd}}{\kwd{i32}}{\Gamma_{\text{b}, 2}}
    \\ \GWt{\Gamma_{\text{b}, 2}}{x_\text{tl}}{\kwd{box}(\kwd{list}(\alpha))}{\Gamma_\text{b}}
    \\ \Gamma_\text{b}\vdash s_2\hookrightarrow^{\delta_2}\dashv\Gamma'_\text{b}
    \\\\ \Gamma'_\text{b}\vdash x_\text{tl}\hookrightarrow \kwd{list}(\beta)
    \\ \GWt{\Gamma'_\text{b}}{x_\text{hd}}{\bot}{\Gamma'_{\text{b}, 1}}
    \\ \GWt{\Gamma'_{\text{b}, 1}}{x_\text{tl}}{\bot}{\Gamma'_{\text{b}, 2}}
    \\ \GWt{\Gamma'_{\text{b}, 2}}{p}{\kwd{list}(\beta)}{\Gamma_2}
    \\\\ \max(\delta_1, \delta_2-(\alpha-\beta))=\delta
    \\ \Gamma_1\sqcap\Gamma_2=\Gamma'}
    {\Gamma\vdash \kwd{match}~ p ~ \{\kwd{nil}\mapsto s_1, \kwd{cons}(x_\text{hd}, x_\text{tl})\mapsto s_2\} \hookrightarrow^\delta \dashv\Gamma'}
    \\

    \inferrule*[Right=\rulename{$\Gamma$-Ex-App}]
    {\text{fn}~ f ~(\vec{x}_\text{param}:\vec{t}_\text{param}, \vec{x}_\text{local}:\vec{t}_\text{local}, x_\text{ret}:t_\text{ret}) \{~ s ~\}
    \\\\ \vdash f \Leftarrow (\Gamma_f, \delta_f)
    \\ \Gamma_f\vdash x_\text{ret} \hookrightarrow \tau_\text{ret}, (\forall x_i\in\vec{x}_\text{param}, i=1, ..., n) \Gamma_f \vdash x_i \hookrightarrow \tau_{\text{param}, i}
    \\\\ \Gamma_0=\Gamma, (\forall e_i\in \vec{e}, i=1, ..., n) \Gamma_{i-1}\vdash e_i\hookrightarrow\tau_{\text{arg}, i}\dashv\Gamma_i
    \\ (\forall i=1,..,n)~ \tau_{\text{param}, i} = \tau_{\text{arg}, i}
    \\ \Gamma_n \vdash p \hookrightarrow \tau
    \\ \vdash \tau
    \\ \GWt{\Gamma_n}{p}{\tau_\text{ret}}{\Gamma'}
    }
    {\Gamma\vdash p\from f(\vec{e})\hookrightarrow^{\delta_f}\dashv\Gamma'}
\end{mathpar}

\centering
\judgement{Function Analysis}{$\vdash f \Leftarrow (\Gamma_f, \delta_f)$}
\begin{mathpar}
    \inferrule
    {\text{fn}~ f ~(\vec{x}_\text{param}:\vec{t}_\text{param}, \vec{x}_\text{local}:\vec{t}_\text{local}, x_\text{ret}:t_\text{ret}) \{~ s ~\}
    \\  \vdash f \Rightarrow (\Gamma_f, \delta_f)
    \\ \Gamma_f\vdash s\hookrightarrow^\delta\dashv\Gamma'_f
    \\\\ \forall x \in \textbf{dom}(\Gamma'_f), \vdash \Gamma'_f(x)
    \\ \Gamma'_f \vdash x_\text{ret} \hookrightarrow \tau'_\text{ret}
    \\ \Gamma_f \vdash x_\text{ret} \hookrightarrow \tau_\text{ret}
    \\ \tau'_\text{ret} = \tau_\text{ret}
    \\ \delta = \delta_f}
    {\vdash f \Leftarrow (\Gamma_f, \delta_f)}
\end{mathpar}

\centering
\judgement{Potential Function}{$\phi(v:\tau)$}
\begin{mathpar}
    \inferrule*[Right=$\phi$-Bot]
    {~}
    {\phi(\_:\bot)=0}
    \and
    \inferrule*[Right=$\phi$-Int]
    {~}
    {\phi(\kwd{n}_\text{i32}:\kwd{i32})=0}
    \\
    \inferrule*[Right=$\phi$-True]
    {~}
    {\phi(\kwd{true}:\kwd{bool})=0}
    \and
    \inferrule*[Right=$\phi$-False]
    {~}
    {\phi(\kwd{false}:\kwd{bool})=0}
    \\
    
    \inferrule*[Right=$\phi$-Nil]
    {~}
    {\phi(\kwd{nil}:\kwd{list}(\alpha))=0}
    \\
    \inferrule*[Right=$\phi$-Cons]
    {~}
    {\phi(\kwd{cons}(\kwd{n}_\text{i32}, \kwd{box}(lv)):\kwd{list}(\alpha))=\alpha+\phi(\kwd{box}(lv):\kwd{box}(\kwd{list}(\alpha)))}
    \\
    \inferrule*[Right=$\phi$-Box]
    {~}
    {\phi(\kwd{box}(lv):\kwd{box}(\kwd{list}(\alpha)))=\phi(lv:\kwd{list}(\alpha))}
    \\
    \inferrule*[Right=$\phi$-Shared]
    {~}
    {\phi(\&(\_, v):\&^\kwd{s}(\tau))=\phi(v:\tau)}
    \\
    \inferrule*[Right=$\phi$-Mutable]
    {~}
    {\phi(\&(\_, v):\&^\kwd{m}(\tau_\text{c}, \tau_\text{p}))=\phi(v:\tau_\text{c})-\phi(v:\tau_\text{p})}
\end{mathpar}
\newpage
\section{Proof of Soundness}
\label{sec:proof}

\begin{lemma}
    Potential is non-negative and keeps subtyping:
    \begin{mathpar}
    \inferrule
    {  \vdash \tau_1
    \\ \vdash \tau_2
    \\ \tau_1 \preceq \tau_2
    }
    { 0 \leq \phi(v:\tau_1) \leq \phi(v:\tau_2)}
    \end{mathpar}
\end{lemma}
\begin{proof}
    First define the size of rich types structural inductively:
    \begin{align*}
    \textbf{size} &: \textbf{RichType} \to \NN \\
    &|~ \bot ~|~ \kwd{i32} ~|~ \kwd{bool} ~|~ \kwd{list}(\_) ~|~ \kwd{box}(\kwd{list}(\_))\mapsto 0 \\
    &|~ \&^\kwd{s}(\tau) \mapsto \textbf{size}(\tau) + 1 \\
    &|~ \&^\kwd{m}(\tau_\text{c}, \tau_\text{p}) \mapsto \textbf{size}(\tau_\text{c}) + \textbf{size}(\tau_\text{p}) + 1
    \end{align*}
    Prove by induction on $\textbf{size}(\tau_1) + \textbf{size}(\tau_2)$, and case on $\tau_1\preceq\tau_2$:
    \begin{enumerate}
        \item {\rulename{S-Bot} $\tau_1=\bot$: \textit{exfalso} due to no derivation for $\vdash \bot$; }
        \item {\rulename{S-Int}, \rulename{S-Bool} : $0 = \phi(v:\tau_1) = \phi(v:\tau_2)$ by definition;}
        \item {\rulename{S-List}, \rulename{S-Box} : $\phi(v:\tau_1) = \alpha_1\cdot n \leq \alpha_2\cdot n$, where $\alpha_1 \leq \alpha_2$ from $\tau_1\preceq \tau_2$ and $n$ is length of $v$;}
        \item {\rulename{S-Shared} : $\tau_1=\&^\kwd{s}(\tau'_1), \tau_2=\&^\kwd{s}(\tau'_2)$ by induction hypothesis on $\tau'_1, \tau'_2$, because $\textbf{size}(\tau'_1)+\textbf{size}(\tau'_2) < \textbf{size}(\tau_1) + \textbf{size}(\tau_2) = \textbf{size}(\tau'_1) + \textbf{size}(\tau'_2) + 2$, $\tau'_1\preceq\tau'_2$ from $\tau_1\preceq\tau_2$ and $\vdash \tau'_1, \vdash \tau'_2$ from $\vdash \tau_1, \vdash \tau_2$;}
        \item {\rulename{S-Mutable} : $\tau_1=\&^\kwd{m}(\tau_{\text{c}, 1}, \tau_{\text{p}, 1}), \tau_2=\&^\kwd{m}(\tau_{\text{c}, 2}, \tau_{\text{p}, 2})$, $v=\&(\_, v')$
        \begin{enumerate}
            \item {$0\leq\phi(v:\tau_1) = \phi(v':\tau_{\text{c}, 1}) - \phi(v':\tau_{\text{p}, 1})$ by induction hypothesis on $\tau_{\text{p}, 1}, \tau_{\text{c}, 1}$, because $\textbf{size}(\tau_{\text{p}, 1})+\textbf{size}(\tau_{\text{c}, 1}) < \textbf{size}(\tau_1)+\textbf{size}(\tau_2)$ and $\tau_{\text{p}, 1}\preceq \tau_{\text{c}, 1}, \vdash \tau_{\text{c}, 1}, \vdash \tau_{\text{p}, 1}$ from $\vdash \tau_1$;}
            \item {$\phi(v:\tau_1)\leq\phi(v:\tau_2)$, i.e. $\phi(v:\tau_{\text{c}, 1})-\phi(v:\tau_{\text{p}, 1})\leq \phi(v:\tau_{\text{c}, 2})-\phi(v:\tau_{\text{p}, 2})$ if and only if $\phi(v:\tau_{\text{c}, 1})\leq \phi(v:\tau_{\text{c}, 2}), \phi(v:\tau_{\text{p}, 2})\leq\phi(v:\tau_{\text{p}, 1})$. The goal can be proved by induction hypothesis because size reducing, subtyping and well-formedness inherited:
            \begin{itemize}
                \item {let $\text{x} = \text{c}$ or $\text{p}$, then $\textbf{size}(\tau_{\text{x}, 1}) + \textbf{size}(\tau_{\text{x}, 2}) < \textbf{size}(\tau_1) + \textbf{size}(\tau_2)$,\\
                the latter $= \textbf{size}(\tau_{\text{c}, 1}) + \textbf{size}(\tau_{\text{p}, 1}) + \textbf{size}(\tau_{\text{c}, 2}) + \textbf{size}(\tau_{\text{p}, 2}) + 2$; }
                \item {$\tau_{\text{c}, 1}\preceq\tau_{\text{c}, 2}, \tau_{\text{p}, 2}\preceq \tau_{\text{p}, 1}$ from $\tau_1\preceq\tau_2$;}
                \item {$\vdash \tau_{\text{c}, 1}, \vdash \tau_{\text{c}, 2}, \vdash \tau_{\text{p}, 1}, \vdash \tau_{\text{p}, 2}$ from $\vdash \tau_1, \vdash \tau_2$.}
            \end{itemize}
            }
        \end{enumerate}
        }
    \end{enumerate}
\end{proof}

\begin{lemma}[Update]
If store or context is written with new value or new type, the difference of potential over store or context is equal to that over new value or new type.
\begin{enumerate}
    \item {If $V\vDash p\rightsquigarrow v$, $\Gamma\vdash p\hookrightarrow \tau$ and $\VWt{V}{p}{v'}{V'}$,
    then $\Phi(V':\Gamma) - \Phi(V:\Gamma)=\Phi(v':\tau)-\Phi(v:\tau)$;}
    \item {If $V\vDash p\rightsquigarrow v$, $\Gamma\vdash p\hookrightarrow \tau$ and $\GWt{\Gamma}{p}{\tau'}{\Gamma'}$, 
    then $\Phi(V:\Gamma')-\Phi(V:\Gamma)=\Phi(v:\tau')-\Phi(v:\tau)$.}
\end{enumerate}
\end{lemma}
\begin{proof}
We first prove the update lemma on $V$, and then on $\Gamma$. \\
By induction on $\VWt{V}{p}{v'}{V'}$:
\begin{enumerate}
    \item {\rulename{V-Wt-Var} : We know from premise that $p = x$, $\forall y \neq x, V'(y) = V(y)$, $V'(x) = v', V(x) = v, \Gamma(x) = \tau$, then we reach
    \begin{align*}
        & \Phi(V':\Gamma)-\Phi(V:\Gamma) \\
        & = \left[\phi(V'(x):\Gamma(x)) + \sum_{y\neq x}\phi(V'(y):\Gamma(y))\right] - \left[\phi(V(x):\Gamma(x)) + \sum_{y\neq x}\phi(V(y):\Gamma(y))\right] \\
        & = \phi(v':\tau) -\phi(v:\tau)
    \end{align*}
    }
    \item {\rulename{V-Wt-Box} : We know from premise that $p = * p_1, V\vDash p_1\rightsquigarrow\kwd{box}(v), \VWt{V}{p_1}{\kwd{box}(v')}{V'}, \Gamma\vdash p_1 \hookrightarrow \kwd{box}(\tau)$, then we reach from hypothesis 
    \begin{align*}
        & \Phi(V':\Gamma)-\Phi(V:\Gamma) \\
        & = \phi(\kwd{box}(v'):\kwd{box}(\tau))-\phi(\kwd{box}(v):\kwd{box}(\tau)) \\
        & = \phi(v':\tau) - \phi(v:\tau)
    \end{align*} 
    }
    \item {\rulename{V-Wt-Borrow} : We know from premise that $p = * p_1, V\vDash p_1\rightsquigarrow\&(q, v), \VWt{V}{q}{v'}{V'}, \VWt{V'}{p_1}{\&(q, v')}{V''}$, and $p_1, q$ are separate and will not form a circle. Also we know $\Gamma\vdash p_1 \hookrightarrow \&^\kwd{m}(\tau, \tau_\text{p})$ \textbf{because only mutable borrows can be updated with values}, $\Gamma\vdash q\hookrightarrow \tau_\text{p}$ because $\Gamma$ is well formed by borrow checker. Therefore by induction hypothesis, we reach 
    \begin{align*}
    &\Phi(V'':\Gamma)-\Phi(V:\Gamma) \\
    &=\Phi(V'':\Gamma)-\Phi(V':\Gamma)+\Phi(V':\Gamma)-\Phi(V:\Gamma) \\
    &= \phi(v':\tau_\text{p})-\phi(v:\tau_\text{p}) + \phi(\&(q, v'):\&^\kwd{m}(\tau, \tau_\text{p}))-\phi(\&(q, v):\&^\kwd{m}(\tau, \tau_\text{p})) \\
    &= \phi(v':\tau_\text{p})-\phi(v:\tau_\text{p}) + [ \phi(v':\tau)-\phi(v':\tau_\text{p})]-[\phi(v:\tau)-\phi(v:\tau_\text{p})] \\
    &= \phi(v':\tau)-\phi(v:\tau)
    \end{align*}
    }
\end{enumerate}
By induction on $\GWt{\Gamma}{p}{\tau'}{\Gamma'}$:
\begin{enumerate}
    \item {\rulename{$\Gamma$-Wt-Var} : We know from premise that $p = x$, $\forall y \neq x, \Gamma'(y) = \Gamma(y)$, $\Gamma'(x) = \tau', \Gamma(x) = \tau, V(x) = v$, then we reach 
    \begin{align*}
        & \Phi(V:\Gamma')-\Phi(V:\Gamma) \\
        & = [\phi(V(x):\Gamma'(x)) + \sum_{y\neq x}\phi(V(y):\Gamma'(y))] - [\phi(V(x):\Gamma(x)) + \sum_{y\neq x}\phi(V(y):\Gamma(y))]\\
        & = \phi(v:\tau') -\phi(v:\tau)
    \end{align*}
    }
    \item {\rulename{$\Gamma$-Wt-Box} : We know from premise that $p = * p_1$, $\Gamma\vDash p_1\rightsquigarrow\kwd{box}(\tau)$, $\GWt{\Gamma}{p_1}{\kwd{box}(\tau')}{\Gamma'}$, and $V\vDash p_1 \rightsquigarrow \kwd{box}(v)$, then we reach from hypothesis 
    \begin{align*}
        & \Phi(V:\Gamma')-\Phi(V:\Gamma) \\
        & = \phi(\kwd{box}(v):\kwd{box}(\tau'))-\phi(\kwd{box}(v):\kwd{box}(\tau)) \\
        & = \phi(v:\tau') - \phi(v:\tau)
    \end{align*}
    }
    \item {\rulename{$\Gamma$-Wt-Shared} : We know from premise that $p = * p_1$, $\Gamma\vdash p_1 \hookrightarrow \&^\kwd{s}(\tau)$, $\GWt{\Gamma}{p_1}{\&^\kwd{s}(\tau')}{\Gamma'}$, and $V\vDash p_1 \rightsquigarrow \&(\_, v)$, then we reach from hypothesis 
    \begin{align*}
        & \Phi(V:\Gamma')-\Phi(V:\Gamma) \\
        & = \phi(\&(\_, v): \&^\kwd{s}(\tau')) - \phi(\&(\_, v): \&^\kwd{s}(\tau)) \\
        & = \phi(v:\tau') - \phi(v:\tau)
    \end{align*}
    }
    \item {\rulename{$\Gamma$-Wt-Mutable} : We know from premise that $p = * p_1$, $\Gamma\vdash p_1 \hookrightarrow \&^\kwd{m}(\tau, \tau_\text{p})$, $ \GWt{\Gamma}{p_1}{\&^\kwd{m}(\tau', \tau_\text{p})}{\Gamma'}$, and $V\vDash p_1 \rightsquigarrow \&(\_, v)$, then we reach from hypothesis 
    \begin{align*}
        & \Phi(V:\Gamma')-\Phi(V:\Gamma) \\
        & = \phi(\&(\_, v): \&^\kwd{m}(\tau', \tau_\text{p})) - \phi(\&(\_, v): \&^\kwd{m}(\tau, \tau_\text{p})) \\
        & = [\phi(v:\tau')-\phi(v:\tau_\text{p}] - [\phi(v:\tau)-\phi(v:\tau_\text{p})] \\
        & = \phi(v:\tau') - \phi(v:\tau)
    \end{align*}
    }
\end{enumerate}
\end{proof}

\newpage
\begin{lemma}[Evaluation]
If $V\vDash e\rightsquigarrow v, \Gamma\vdash e\hookrightarrow \tau\dashv\Gamma'$, then $\Phi(V:\Gamma')-\Phi(V:\Gamma)= -\phi(v:\tau)$.
\end{lemma}
\begin{proof}
By induction on $e$:
\begin{enumerate}
    \item {$e=\kwd{n}_\text{i32}, e_1 ~\kwd{op}~ e_2, \kwd{true}, \kwd{false}, \kwd{copy}~ p, \kwd{nil}$ : $\Gamma'=\Gamma$ and $\phi(v:\tau) = 0$;}
    \item {$e=\kwd{move}~p$ : We know that $V\vDash p\rightsquigarrow v$, $\Gamma\vdash p\hookrightarrow \tau$ and $\GWt{\Gamma}{p}{\bot}{\Gamma'}$, hence $\Phi(V:\Gamma')-\Phi(V:\Gamma)=\Phi(v:\bot)-\Phi(v:\tau)=-\Phi(v:\tau)$;} 
    \item {$e=\kwd{box}(e_1)$: simply by induction, $\Phi(V:\Gamma')-\Phi(V:\Gamma)=\Phi(v_1:\tau_1)=\Phi(\kwd{box}(v_1):\kwd{box}(\tau_1))$;}
    \item {$e=\&^\kwd{s}~p$: We assert that for all $\textit{share}~\tau~\textit{as}~\tau_1, \tau_2$, $\phi(v:\tau)=\phi(v:\tau_1)+\phi(v:\tau_2)$, then $\Phi(V:\Gamma')-\Phi(V:\Gamma)=\phi(v:\tau)-\phi(v:\tau_1)=\phi(v:\tau_2)=\phi(\&(p, v):\&^\kwd{s}(\tau_2))$, where the assertion can be proved by simple induction on sharing;}
    \item {$e=\&^\kwd{m}~p$: $V\vDash p\rightsquigarrow v, \Gamma\vdash p\hookrightarrow \tau, \GWt{\Gamma}{p}{\tau_\text{p}}{\Gamma'}$, with $\textit{prophecy}~ \tau ~\textit{as}~ \tau_\text{p}$, then $\Phi(V:\Gamma')-\Phi(V:\Gamma)=\phi(v:\tau_\text{p})-\phi(v:\tau)= - \phi(\&(p, v):\&^\text{m}(\tau, \tau_\text{p}))$.}
\end{enumerate}
\end{proof}

\begin{theorem}[Soundness]
If $V\vDash s \rightsquigarrow^{\delta_V} \Dashv V', \Gamma\vdash s \hookrightarrow^{\delta_\Gamma}\dashv\Gamma'$, then $\Phi(V':\Gamma')-\Phi(V:\Gamma)+\delta_V\leq\delta_\Gamma$.
\end{theorem}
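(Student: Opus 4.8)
The plan is to proceed by structural induction on the resource-aware dynamic derivation $V\vDash s \rightsquigarrow^{\delta_V} \Dashv V'$, matching each execution rule against the typing rule that must have produced $\Gamma\vdash s \hookrightarrow^{\delta_\Gamma}\dashv\Gamma'$ for the same statement $s$. The workhorses are the \textbf{Update} and \textbf{Evaluation} lemmas (which convert store/context writes and expression evaluations into differences of the pointwise potential $\phi$), the non-negativity corollary $\vdash\tau \Rightarrow 0 \leq \phi(v:\tau)$, and the lattice lemma for the merging cases. Because several rules silently discard the old potential at the overwritten place, I would carry a standing \emph{well-formedness invariant} $\vdash\Gamma$ through the induction; it is maintained since \rulename{$\Gamma$-Ex-Drop}, \rulename{$\Gamma$-Ex-Assign}, and \rulename{$\Gamma$-Ex-App} each check $\vdash\tau$, and lattice lemma part~2 shows $\sqcap$ preserves well-formedness. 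The leaf cases are then immediate: for \rulename{V-Ex-Tick} both sides equal $\delta$ with $\Gamma'=\Gamma$; for \rulename{V-Ex-Ret} everything is zero; and for \rulename{V-Ex-Drop}, since $\delta_V=0$ and $V'=V$, the Update lemma gives $\Phi(V:\Gamma')-\Phi(V:\Gamma) = -\phi(v:\tau)$, which is $\leq 0$ precisely because the dropping condition $\vdash\tau$ forces $\phi(v:\tau)\geq 0$.

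For \rulename{V-Ex-Seq} I would split $\Phi(V'':\Gamma_3)-\Phi(V:\Gamma_1)$ telescopically through the intermediate state and add the two instances of the induction hypothesis, using $\delta_V=\delta_1+\delta_2$. For \rulename{V-Ex-Assign} and \rulename{V-Ex-Cons} I would chain the Evaluation lemma (accounting for the potential removed while evaluating the right-hand side) with the Update lemma for the final write to $p$; the old contents of $p$ contribute a non-negative amount that is simply discarded (here the well-formedness invariant is used), and in the \kwd{cons} case the extra $\alpha'$ units charged by \rulename{$\Gamma$-Ex-Cons} are exactly the potential $\phi$-Cons freshly attaches to the new head cell, so the books balance with $\delta_\Gamma=\alpha'$. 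The branching cases \rulename{V-Ex-IfT}/\rulename{V-Ex-IfF} and \rulename{V-Ex-Mat-Nil} are where merging enters: only one branch executes, yielding $(V',\Gamma_i)$ with cost $\delta_i$, yet the typing rule reports $\Gamma'=\Gamma_1\sqcap\Gamma_2$ and $\delta_\Gamma=\max(\delta_1,\delta_2)$. By lattice lemma part~1, $\Gamma'\subseteq\Gamma_i$, so part~3 gives $\Phi(V':\Gamma')\leq\Phi(V':\Gamma_i)$, and together with $\delta_i\leq\delta_\Gamma$ the induction hypothesis for the taken branch closes the case.

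The \rulename{V-Ex-Mat-Cons} case is more delicate, because execution destructures the matched list into $x_\text{hd},x_\text{tl}$, runs $s_2$, then rebuilds a \kwd{cons} cell, while \rulename{$\Gamma$-Ex-Mat} charges the net cost $\delta_2-(\alpha-\beta)$. Here I would apply the Update lemma to each successive write in turn (clearing $p$, binding $x_\text{hd}$ and $x_\text{tl}$ at the box type carrying $\alpha$, then after $s_2$ clearing the bindings and writing back a list of type $\kwd{list}(\beta)$ read off from $x_\text{tl}$), and verify that the telescoped potential differences, together with the $\alpha$ peeled from the head and the $\beta$ restored to the reconstructed list, reproduce exactly the bookkeeping of the rule; the merge against the \kwd{nil}-branch context $\Gamma_1$ is then handled as in the previous paragraph.

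I expect \rulename{V-Ex-App} to be the main obstacle. The body derivation $V_\text{b}\vDash s\rightsquigarrow^{\delta}\Dashv V'_\text{b}$ is a subderivation, so the induction hypothesis applies; but it is stated against the function's own context $\Gamma_f$ (whose well-checked body typing comes from $\vdash f \Leftarrow (\Gamma_f,\delta_f)$ via Function Analysis, giving checked cost $\delta_f$), whereas at runtime the body executes in a store $V_\text{b}$ that also records the caller's variables. Bridging this gap is exactly the role of the \textbf{Weakening} lemma: it extends $\Gamma_f$ and $\Gamma'_f$ to full contexts agreeing with them on $f$'s variables while recording the caller's untouched bindings, whose potential contribution is identical before and after and therefore cancels. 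The remaining work is to align the parameter and return potentials: the premise $\tau_{\text{param},i}=\tau_{\text{arg},i}$ together with the Evaluation lemma ensures the potential removed from the caller while evaluating each $e_i$ equals the potential the body assumes for its $i$-th parameter, and the final write of $v_\text{ret}$ at type $\tau_\text{ret}$ reinstalls the return potential via the Update lemma. Composing these equalities with the body's induction hypothesis, and using $\delta_\Gamma=\delta_f$ from Function Analysis, should close the case; the error-prone subtlety is keeping the store-level and context-level accounts aligned across all the binding and unbinding steps marked by the $V_\text{b}$ subscripts.
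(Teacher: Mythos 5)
Your proposal is correct and follows essentially the same route as the paper's proof: induction on the dynamic derivation, the Update and Evaluation lemmas plus the non-negativity corollary for the discard cases, the lattice lemma with $\Phi(V':\Gamma_1\sqcap\Gamma_2)\leq\Phi(V':\Gamma_i)$ and $\delta_i\leq\delta_\Gamma$ for branching and matching, and the \textsc{V-Ex-App} case resolved exactly as the paper does, via the Weakening lemma applied to the extension $\Gamma_f\sqsubseteq\Gamma_\text{b}$ justified by $\tau_{\text{param},i}=\tau_{\text{arg},i}$, with $\delta=\delta_f=\delta_\Gamma$ from Function Analysis. The only cosmetic deviation is your standing invariant $\vdash\Gamma$; the paper instead reads each needed well-formedness fact (e.g., $\vdash\tau$ for the overwritten place) directly off the local premises of the typing rules, which suffices.
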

\begin{proof}
By induction on $V\vDash s \rightsquigarrow^{\delta_V} \Dashv V'$:
\begin{enumerate}
    \item { \rulename{V-Ex-Ret} $s=\kwd{return}$ : $\delta_V=\delta_\Gamma=0, V'=V, \Gamma'=\Gamma$;}
    \item { \rulename{V-Ex-Seq} $s=s_1; s_2$ : By induction, we have $V\vDash s_1\rightsquigarrow^{\delta_{V, 1}}\Dashv V_1\vDash s_2\rightsquigarrow^{\delta_{V, 2}}\Dashv V'$, $\Gamma\vdash s_1\hookrightarrow^{\delta_{\Gamma, 1}}\dashv \Gamma_1\vdash s_2\hookrightarrow^{\delta_{\Gamma, 1}}\dashv \Gamma'$ and $\Phi(V_1:\Gamma_1)-\Phi(V:\Gamma)+\delta_{V, 1}\leq\delta_{\Gamma, 1}, \Phi(V':\Gamma')-\Phi(V_1:\Gamma_1)+\delta_{V, 2}\leq\delta_{\Gamma, 2}$. It is obvious that $\Phi(V:\Gamma)-\Phi(V':\Gamma')+\delta_{\Gamma, 1}+\delta_{\Gamma, 2}\geq\delta_{V, 1}+\delta_{V, 2}$;}
    \item {\rulename{V-Ex-Tick} $s=\kwd{tick}(\delta)$ : $\delta_V=\delta_\Gamma=\delta, V'=V, \Gamma'=\Gamma$;}
    \item {\rulename{V-Ex-Drop} $s=\kwd{drop}~p$ : $V=V', \delta_V=\delta_\Gamma=0$, we only need to prove that $\Phi(V:\Gamma')-\Phi(V:\Gamma')\leq 0$, but we know that $\GWt{\Gamma}{p}{\bot}{\Gamma'}$, then $\Phi(V:\Gamma')-\Phi(V:\Gamma)=\phi(v:\bot)-\phi(v:\tau)=-\phi(v:\tau)\leq0$, with the help of $\vdash \tau$ from premise of $\Gamma \vdash s \hookrightarrow^{\delta_\Gamma}\dashv \Gamma'$;}
    
    \item {\rulename{V-Ex-Assign} $s=p\from e$ : $\delta_V=\delta_\Gamma=0$, and we have $\Gamma\vdash e\hookrightarrow\tau'\dashv\Gamma_1$, $\Gamma_1\vdash p\hookrightarrow\tau$, $\GWt{\Gamma_1}{p}{\tau'}{\Gamma'}$, $V\vDash e\rightsquigarrow v'$, $V\vDash p\rightsquigarrow v$ and $\VWt{V}{p}{v'}{V'}$. With lemmas, we can imply that $\Phi(V:\Gamma_1)-\Phi(V:\Gamma)=-\phi(v':\tau')$, $\Phi(V:\Gamma')-\Phi(V:\Gamma_1)=\phi(v:\tau')-\phi(v:\tau)$ and $\Phi(V':\Gamma')-\Phi(V:\Gamma')=\phi(v':\tau')-\phi(v:\tau')$. Therefore $\Phi(V':\Gamma')-\Phi(V:\Gamma)=-\phi(v:\tau)\leq0$, with the help of $\vdash \tau$ from premise of $\Gamma \vdash s \hookrightarrow^{\delta_\Gamma}\dashv \Gamma'$; }
    
    \item{\rulename{V-Ex-Cons} $s=p\from \kwd{cons}(e_1; e_2)$ : $\delta_V=0, \delta_\Gamma=\alpha'$, and we have $\Gamma\vdash e_1\hookrightarrow\kwd{i32}$, $\Gamma\vdash e_2\hookrightarrow\kwd{box}(\kwd{list}(\alpha'))\dashv\Gamma_1$, $\Gamma_1\vdash p\hookrightarrow\kwd{list}(\alpha)$, $\GWt{\Gamma_1}{p}{\kwd{list}(\alpha')}{\Gamma'}$, $V\vDash p\rightsquigarrow v$, $V\vDash e_1\rightsquigarrow \kwd{n}_\text{i32}$, $V\vDash e_2\rightsquigarrow \kwd{box}(lv)$ and $\VWt{V}{p}{\kwd{cons}(n, \kwd{box}(lv))}{V'}$. With lemmas, we can imply that $\Phi(V:\Gamma_2)-\Phi(V:\Gamma)=-\phi(\kwd{box}(lv):\kwd{box}(\kwd{list}(\alpha)))$, $\Phi(V:\Gamma')-\Phi(V:\Gamma_2)=\phi(v:\kwd{list}(\alpha'))-\phi(v:\kwd{list}(\alpha))$, $\Phi(V':\Gamma')-\Phi(V:\Gamma')=\phi(\kwd{cons}(\kwd{n}_\text{i32}, \kwd{box}(lv)):\kwd{list}(\alpha'))-\phi(v:\kwd{list}(\alpha'))$. Therefore, we have $\Phi(V':\Gamma')-\Phi(V:\Gamma)=\alpha-\phi(v:\kwd{list}(\alpha'))\leq\alpha=\delta_\Gamma$, with the help of $\vdash \kwd{list}(\alpha')$;}
    
    \item {\rulename{V-Ex-IfT/F}$s=\kwd{if}~p~\kwd{then}~s_1~\kwd{else}~s_2~\kwd{end}$ : If $V\vDash p\rightsquigarrow\kwd{true}$, then we have $V\vDash s_1\rightsquigarrow^{\delta_{V, 1}}\Dashv V'$, $\Gamma\vdash s_1\hookrightarrow^{\delta_{\Gamma, 1}}\dashv\Gamma_1$ and $\Phi(V':\Gamma_1)-\Phi(V:\Gamma)+{\delta_{V, 1}}\leq{\delta_{\Gamma, 1}}$. $\Gamma_1\sqcap\Gamma_2=\Gamma'$ indicates $\Phi(V':\Gamma')\leq\Phi(V':\Gamma_1)$. From premise of statics, $\delta_{\Gamma, 1}\leq\delta_\Gamma$, hence $\Phi(V':\Gamma')-\Phi(V:\Gamma)+\delta_{V, 1}\leq\delta_\Gamma$. If $V\vDash p\rightsquigarrow\kwd{false}$, it is similar to prove;}
    \item {\rulename{V-Ex-Mat-Nil/Cons}$s=\kwd{match}~p~\{\kwd{nil}\mapsto s_1, \kwd{cons}(x_\text{hd}, x_\text{tl})\mapsto s_2\}$ : If $V\vDash p\rightsquigarrow\kwd{nil}$, it is similar to $\kwd{if}$ statement. We now turn to the possibility $V\vDash p\rightsquigarrow\kwd{cons}(n, \kwd{box}(lv))$. In such a case, we can obviously get that $\Phi(V_\text{b}:\Gamma_\text{b})-\Phi(V:\Gamma)=-\alpha$. From $V_\text{b}\vDash s_2\rightsquigarrow^{\delta_V}\Dashv V'_\text{b}, \Gamma_\text{b}\vdash s_2\hookrightarrow^{\delta_{\Gamma, 2}}\dashv \Gamma'_\text{b}$, we have that $\Phi(V'_\text{b}:\Gamma'_\text{b})-\Phi(V_\text{b}:\Gamma_\text{b})+\delta_V\leq\delta_{\Gamma, 2}$. Also, $\Phi(V':\Gamma_2)-\Phi(V'_\text{b}:\Gamma'_\text{b})=\beta$. Sum up inequalities, we have $\Phi(V':\Gamma_2)-\Phi(V:\Gamma)+\delta_V\leq\delta_{\Gamma, 2}-\alpha+\beta$. Similarly with $\Gamma'=\Gamma_1\sqcap\Gamma_2$ and $\delta_{\Gamma, 2}-\alpha+\beta\leq\delta_\Gamma$, we final reach that $\Phi(V':\Gamma')-\Phi(V:\Gamma)+\delta_V\leq\delta_\Gamma$; }
    \item {\rulename{V-Ex-App} $s=p\from f(\vec{e})$ : Assume $\text{fn}~ f ~(\vec{x}_\text{param}:\vec{t}_\text{param}, \vec{x}_\text{local}:\vec{t}_\text{local}, x_\text{ret}:t_\text{ret}) \{~ s_f ~\}, V\vDash p\rightsquigarrow v$, $\Gamma_n\vdash p\hookrightarrow \tau, \vdash \tau$, $V\vDash p\from f(\vec{e})\rightsquigarrow^{\delta_V}\Dashv V'$, and $\Gamma\vdash p\from f(\vec{e})\hookrightarrow^{\delta_\Gamma}\dashv\Gamma'$.
    
    To use induction hypothesis on $V_\text{b}\vDash s_f\rightsquigarrow^{\delta_V} V'_\text{b}$ from premise of dynamics, we need statics $\Gamma_\text{b}\vdash s_f\hookrightarrow^{\delta_\text{b}}\dashv \Gamma'_\text{b}$, where $\GWt{\Gamma_n}{\vec{x}_\text{param}}{\vec{\tau}_\text{arg}}{\Gamma_{\text{b}, 1}}$, $\GWt{\Gamma_{\text{b}, 1}}{\vec{x}_\text{local}}{\vec{\tau}_\text{local}}{\Gamma_{\text{b}, 2}}$ and $\GWt{{\Gamma_{\text{b}, 2}}}{x_\text{ret}}{\tau_\text{ret}}{\Gamma_\text{b}}$. However, it is not found but $\Gamma_f\vdash s_f \hookrightarrow^\delta \dashv\Gamma'_f$ appears in premise of $\vdash f \Leftarrow (\Gamma_f, \delta_f)$. Via context extension and weakening rules, we can reach our goal by proving $\Gamma_f \sqsubseteq \Gamma_\text{b}$. It is true because $\forall i=1, ..., n, \tau_{\text{param}, i} = \tau_{\text{arg}, i}$.
    
    \begin{enumerate}
        \item {$\Phi(V:\Gamma_n)-\Phi(V:\Gamma) = -\sum_i\phi(v_i:\tau_{\text{arg}, i})$;}
        \item {$\Phi(V_\text{b}:\Gamma_\text{b}) - \Phi(V:\Gamma_n) = \sum_i\phi(v_i:\tau_{\text{arg}, i})$; }

        \item {$\Phi(V'_\text{b}:\Gamma'_\text{b}) - \Phi(V_\text{b}:\Gamma_\text{b}) + \delta_V \leq \delta_\text{b}$; }
        
        \item {$\Phi(V':\Gamma')-\Phi(V'_\text{b}:\Gamma'_\text{b}) = -\sum_{x\in\vec{x}_\text{param} \text{or} x\in\vec{x}_\text{local}}
        \phi(v_x:\tau_x)-\phi(v:\tau) \leq 0$, \\
        considering that it does not affect resource to move $v_\text{ret}:\tau'_\text{ret}$ from $x_\text{ret}$ to $p$, \\
        while the erasure at parameters, local variables and $p$ affects; }

        \item {$\delta_\text{b} = \delta$ by weakening rules;}
        \item {$\delta = \delta_f$ from premise of $\vdash f \Leftarrow (\Gamma_f, \delta_f)$; }
        \item {$\delta_f = \delta_\Gamma$ from statics.}
    \end{enumerate}
    Sum up inequalities above, we reach $\Phi(V':\Gamma')-\Phi(V:\Gamma)+\delta_V\leq\delta_\Gamma$.
    }
\end{enumerate}
\end{proof}

\end{document}